\newcommand{\mbs}[1]{\pmb{#1}}
\newcommand{\vect}[1]{{\lowercase{\mbs{#1}}}}
\newcommand{\mat}[1]{{\uppercase{\mbs{#1}}}}
\newcommand{\T}{{\scriptscriptstyle\mathsf{T}}}
\renewcommand{\H}{{\scriptscriptstyle\mathsf{H}}}
\newcommand{\cond}{\,\vert\,}
\renewcommand{\Re}[1][]{\ifthenelse{\isempty{#1}}{\operatorname{Re}}{\operatorname{Re}\left(#1\right)}}
\renewcommand{\Im}[1][]{\ifthenelse{\isempty{#1}}{\operatorname{Im}}{\operatorname{Im}\left(#1\right)}}
\newcommand{\SNR}{\mathsf{snr}}
\newcommand{\av}{\vect{a}}
\newcommand{\cv}{\vect{c}}
\newcommand{\ev}{\vect{e}}
\newcommand{\gv}{\vect{g}}
\newcommand{\hv}{\vect{h}}
\newcommand{\qv}{\vect{q}}
\newcommand{\rv}{\vect{r}}
\newcommand{\tv}{\vect{t}}
\newcommand{\uv}{\vect{u}}
\newcommand{\vv}{\vect{v}}
\newcommand{\xv}{\vect{x}}
\newcommand{\yv}{\vect{y}}
\newcommand{\zv}{\vect{z}}
\newcommand{\alphav}{\vect{\alpha}}
\newcommand{\betav}{\vect{\beta}}
\newcommand{\zetav}{\vect{\zeta}}
\newcommand{\Sigmam}{\pmb{\Sigma}}
\newcommand{\Em}{\mat{e}}
\newcommand{\Mm}{\mat{M}}
\newcommand{\Qm}{\mat{q}}
\newcommand{\Xm}{\mat{x}}
\newcommand{\Ym}{\mat{y}}
\newcommand{\Bc}{{\mathcal B}}
\newcommand{\Cc}{{\mathcal C}}
\newcommand{\Dc}{{\mathcal D}}
\newcommand{\Mc}{{\mathcal M}}
\newcommand{\Nc}{{\mathcal N}}
\newcommand{\Qc}{{\mathcal Q}}
\newcommand{\Rc}{{\mathcal R}}
\newcommand{\Sc}{{\mathcal S}}
\newcommand{\CC}{\mathbb{C}}
\newcommand{\RR}{\mathbb{R}}
\newcommand{\Id}{\mat{\mathrm{I}}}
\newcommand{\CN}[1][]{\ifthenelse{\isempty{#1}}{\mathcal{N}_{\mathbb{C}}}{\mathcal{N}_{\mathbb{C}}\left(#1\right)}}
\renewcommand{\P}[1][]{\ifthenelse{\isempty{#1}}{\mathbb{P}}{\mathbb{P}\left(#1\right)}}
\newcommand{\E}[1][]{\ifthenelse{\isempty{#1}}{\mathbb{E}}{\mathbb{E}\left[#1\right]}}
\newcommand{\I}[1][]{\ifthenelse{\isempty{#1}}{\mathbb{I}}{\mathbb{I}\left\{#1\right\}}}
\renewcommand{\det}[1][]{\ifthenelse{\isempty{#1}}{\mathrm{det}}{\text{det}\left(#1\right)}}
\newcommand{\trace}[1][]{\ifthenelse{\isempty{#1}}{\mathrm{tr}}{\text{tr}\left(#1\right)}}
\newcommand{\rank}[1][]{\ifthenelse{\isempty{#1}}{\mathrm{rank}}{\text{rank}\left(#1\right)}}
\newcommand{\diag}[1][]{\ifthenelse{\isempty{#1}}{\mathrm{diag}}{\text{diag}\left(#1\right)}}
\newcommand{\Cov}[1][]{\ifthenelse{\isempty{#1}}{\mathsf{Cov}}{\mathsf{Cov}\left(#1\right)}}
\DeclareMathOperator{\arccot}{arccot}
\newcommand{\defeq}{\triangleq}
\newcommand{\eqdef}{\triangleq}
\newtheorem{proposition}{Proposition}
\newtheorem{property}{Property}
\newtheorem{definition}{Definition}
\newtheorem{corollary}{Corollary}
\newtheorem{lemma}{Lemma}
\newcounter{enumi_saved}
\pgfplotsset{minor grid style={dotted,gray!25}}
\pgfplotsset{major grid style={dashed,gray!25}}
\pgfplotsset{compat=newest}
\renewcommand{\rv}[1]{{\mathsf{#1}}}
\newcommand{\rvVec}[1]{\pmb{\mathsf{#1}}}
\newcommand{\rvMat}[1]{\pmb{\mathsf{#1}}}
\newcommand{\Span}[1]{{\text{Span}\left(#1\right)}}
\newcommand*\dif{\mathop{}\mathrm{d}}
\renewcommand{\defeq}{:=}
\renewcommand{\eqdef}{=:}
\renewcommand{\SNR}{{\rm SNR}}
\newcommand{\im}{{\jmath}}
\newcommand{\LLR}{{\rm LLR}}
\newcommand{\LLRp}{{\rm LLR}^{\rm pilot}}
\tikzset{%
	>=latex, 
	inner sep=0pt,%
	outer sep=2pt,%
	mark coordinate/.style={inner sep=0pt,outer sep=0pt,minimum size=3pt,
		fill=black,circle}%
}
\title{Cube-Split: A Structured Grassmannian Constellation for Non-Coherent SIMO Communications} 
\author{	
Khac-Hoang Ngo,~\IEEEmembership{Student Member,~IEEE,} Alexis Decurninge,~\IEEEmembership{Member,~IEEE,} Maxime Guillaud,~\IEEEmembership{Senior Member,~IEEE,} Sheng Yang,~\IEEEmembership{Member,~IEEE}%
\thanks{Khac-Hoang Ngo is with Mathematical and Algorithmic Sciences Laboratory, Paris Research Center, Huawei Technologies, 92100 Boulogne-Billancourt, France, and Laboratory of Signals and Systems, CentraleSup\'elec, University of Paris-Saclay, 91190 Gif-sur-Yvette, France.~(e-mail: \texttt{ngo.khac.hoang@huawei.com}) 		}%
\thanks{Alexis Decurninge and Maxime Guillaud are with Mathematical and Algorithmic Sciences Laboratory, Paris Research Center, Huawei Technologies, 92100 Boulogne-Billancourt, France~(e-mail: \texttt{\{alexis.decurninge, maxime.guillaud\}@huawei.com})}%
\thanks{Sheng Yang is with Laboratory of Signals and Systems, CentraleSup\'elec, University of Paris-Saclay, 91190 Gif-sur-Yvette, France.~(e-mail: \texttt{sheng.yang@centralesupelec.fr})}%
\thanks{The material in this paper was partially presented at the 51st Asilomar Conference on Signals,
	Systems, and Computers, CA, USA, 2017~\cite{hoangAsilomar2017cubesplit}.}%
}
\begin{document}

\maketitle
\begin{abstract} 
	In this paper, we propose a practical structured constellation for non-coherent communication with a single transmit antenna over Rayleigh flat and block fading channel without instantaneous channel state information. The constellation symbols belong to the Grassmannian of lines and are defined up to a complex scaling. The constellation is generated by partitioning the Grassmannian of lines into a collection of bent hypercubes and defining a mapping onto each of these bent hypercubes such that the resulting symbols are approximately uniformly distributed on the Grassmannian. With a reasonable choice of parameters, this so-called cube-split constellation has higher packing efficiency, represented by the minimum distance, than the existing structured constellations. Furthermore, exploiting the constellation structure, we propose low-complexity greedy symbol decoder and log-likelihood ratio computation, {as well as an efficient way to associate it to a multilevel code with multistage decoding}. Numerical results show that the performance of the cube-split constellation is close to that of a numerically optimized constellation and better than other structured constellations. It also outperforms a coherent pilot-based scheme in terms of error probability and achievable data rate {in the regime of short coherence time and large constellation size}.
\end{abstract}
 \begin{IEEEkeywords}
 	non-coherent communications, block fading, SIMO channel, Grassmannian constellations
 \end{IEEEkeywords}
\section{Introduction} \label{sec:intro}
In communication over fading channels, the knowledge of instantaneous channel state information (CSI) enables to adapt the transmission and reception to current channel conditions. The capacity of communication with {\em a priori} CSI at the receiver, so-called {\em coherent} communication, is well known to increase linearly with the minimum number of transmit and receive antennas~\cite{Telatar1999capacityMIMO,Foschini}. In practice, however, the channel coefficients are not granted for free prior to communication. They need to be estimated, typically by periodic transmission of reference (or pilot) symbols known to the receiver~\cite{Hassibi2003howmuchtraining}. 
If the channel state
is not stable (in, e.g., time or frequency domain), accurate channel estimation requires regular pilot transmissions which can occupy a disproportionate fraction of communication resources. In this case, the cost of channel estimation is
significant, and it might be beneficial to use a communication scheme that
does not rely on the knowledge of instantaneous CSI. Communication without {\em a priori} CSI is said to be {\em non-coherent}~\cite{ZhengTse2002Grassman}.

We consider non-coherent single-input multiple-output~(SIMO) communication in which a single-antenna transmitter transmits to an $N$-antenna receiver. We assume flat and block fading channel, i.e., the channel vector remains constant within each coherence block of $T$ channel uses with $T\ge2$ and changes independently between blocks. 
The non-coherent capacity of this channel with independent and identically distributed~(i.i.d.) Rayleigh fading was calculated for the high signal-to-noise ratio (SNR) regime as
\begin{equation} \label{eq:noncoherentCapacity}
	C(\SNR,N,T) = \Big(1-\frac{1}{T}\Big) \log_2\SNR + c(N,T) + o(1) 
\end{equation}
bits/channel use as $\SNR \to \infty$, where $c(N,T)$ (given in \eqref{eq:c(N,T)}) is a constant independent of the SNR~\cite{ZhengTse2002Grassman,Hochwald2000unitaryspacetime,Yang2013CapacityLargeMIMO}. The pre-log factor $1 - \frac{1}{T}$ can be achieved by a pilot-based scheme: the transmitter sends a pilot symbol in one of the channel uses and data symbols in the remaining $T-1$ channel uses of a coherence block; the receiver estimates the channel based on the known pilot symbol and performs coherent detection on the received data symbols based on the channel estimate~\cite{Hassibi2003howmuchtraining}. This approach, however, can only achieve a rate at a constant gap below the full capacity $C(\SNR,N,T)$
~\cite{ZhengTse2002Grassman}.

In~\cite{ZhengTse2002Grassman}, it was shown that the optimal strategy achieving the high-SNR capacity $C(\SNR,N,T)$ is to transmit isotropically distributed vectors on $\CC^T$ belonging to the Grassmannian of lines, which is the space of one-dimensional subspaces in $\CC^T$~\cite{boothby1986introduction}, and use the {span} of these vectors to carry information.\footnote{When $T< N+1$, a further condition for achieving the capacity is that the input norm square is beta distributed; the rate achieved with {\em constant-norm} isotropically distributed input approaches the capacity within a constant factor of $O(\frac{\log N}{T})$~\cite{Yang2013CapacityLargeMIMO}.} The intuition behind that result is that the random channel coefficients only scale the transmitted signal vector without changing its {span}. In other words, the transmitted vector $\rvVec{x}$ and the noise-free observation $\rv{h}_j\rvVec{x}$ at receive antenna $j \in [N]$ represent {\em identical} point on the Grassmannian. Thus, the constellation design for non-coherent communication can be formulated as sphere packing on the Grassmann manifold.\footnote{A Grassmannian constellation can also be used as a precoding codebook in limited-feedback communication systems~\cite{Love2008limited_feedback}. The set of the intersections of the Grassmannian constellation symbols and the unit sphere is also called an antipodal spherical code, which has applications in designing measurement matrix for compressive sensing~\cite{Conde2017fast_antipodal_spherical_codes}.} The ultimate packing criteria is to minimize the detection error under noisy observation. This typically amounts to maximizing the distance between the constellation points, for which the packing efficiency limits are derived in, e.g.,~\cite{Conway1996packing,Barg2002BoundsOP,Dai2008quantizationBounds}. A number of Grassmannian constellations have been proposed 
with different criteria, constellation generation, and decoding methods. They follow two main approaches. 

The first approach uses numerical optimization tools to solve the sphere-packing problem on the Grassmannian by maximizing the minimum symbol pairwise distance~\cite{Agrawal2001MIMOconstellations,Gohary2009GrassmanianConstellations,BekoTSP2007noncoherentColoredNoise,Tahir2019constructing_Grassmannian_frames} or directly minimizing the error probability upper bound~\cite{McCloudIT2002signalDesignAndConvCode,Wu2008USTM_based_on_Chernoff_bound}. This results in constellations with a good distance spectrum. However, due to the lack of structure, this kind of constellation needs to be stored at both the transmitter and receiver, and decoded with the high-complexity maximum-likelihood (ML) decoder, which limits practical use to only small constellations. 

The second approach imposes particular structure on the constellation based on, e.g., algebraic construction~\cite{Hochwald2000systematicDesignUSTM,TarokhIT2002existence,ZhaoTIT2004orthogonalDesign}, parameterized mappings of unitary matrices~\cite{Kammoun2007noncoherentCodes,JingTSP2003Cayley}, concatenation of phase shift keying~(PSK) and coherent space-time codes~\cite{Zhang2011full_diversity_blind_ST_block_codes}, or geometric motion on the Grassmannian~\cite{AttiahISIT2016systematicDesign}. The pilot-data structured input of a pilot-based scheme can also be seen as a non-coherent code~\cite{DayalIT2004leveraging}. The constellation structure facilitates low complexity constellation mapping and, probably, demapping.

In this work, we focus on the second approach due to its low-complexity advantage. 
{Let us briefly review the aforementioned structured Grassmannian constellations. The Fourier constellation~\cite{Hochwald2000systematicDesignUSTM} contains the rows of the discrete Fourier transform matrix. It coincides with the algebraically constructed constellation in~\cite[Sec.III-A]{TarokhIT2002existence} when $T =2$. Unfortunately, this design still requires numerical optimization of the Fourier frequencies and needs ML decoding. The exp-map constellation~\cite{Kammoun2007noncoherentCodes} is obtained by mapping each vector $\qv$ containing $T-1$ quadrature amplitude modulation~(QAM) symbols into a non-coherent symbol $\cv$ via the exponential map $\cv = \big[\cos(\gamma\|\qv\|) \ -\frac{\sin(\gamma\|\qv\|)}{\|\qv\|}\qv^\T\big]^\T$ with the homothetic factor $\gamma$ given in \cite[Eq.(19)]{Kammoun2007noncoherentCodes}. 
The coprime-PSK constellation~\cite{Zhang2011full_diversity_blind_ST_block_codes} has symbols of the form $\cv = [x\ y \ \zv^\T]^\T$ where $x$ and $y$ are respectively $Q_x$-PSK and $Q_y$-PSK symbols such that~(s.t.) $Q_x$ and $Q_y$  are coprime, and $\zv \in \CC^{(T-2)\times 1}$ belongs to a sub-constellation $\{\zv_1,\dots,\zv_{Q_z}\}$ s.t. $\zv_j$ and $\zv_l$ are linearly independent for any $j\ne l$. The drawback of this design is that a good choice for $\{\zv_1,\dots,\zv_{Q_z}\}$ is not specified and one might need to numerically generate it. 
The constellation in~\cite{AttiahISIT2016systematicDesign} has a multi-layer construction: starting from an initial constellation, the layer-$j$ symbols are generated by moving each previous-layer symbols along a set of $K$ geodesics. Specifically, given $\cv$, the new symbol $\bar{\cv}$ is generated as $\bar{\cv} = \cv v_k \cos(\phi_j) + \cv^\perp \sin(\phi_j)$ where $v_k$, $|v_k| = 1$, determines the $k$-th geodesic, $\sin(\phi_j)$ is the moving distance, and $\cv^\perp$ is orthogonal to $\cv$. According to~\cite[Thm.2]{AttiahISIT2016systematicDesign}, $\{v_k\}_{k=1}^K$ should be evenly spread on the unit circle. {However, a good choice for $\sin(\phi_j)$ is not known. Both the coprime-PSK constellation~\cite{Zhang2011full_diversity_blind_ST_block_codes} and the multi-layer constellation~\cite{AttiahISIT2016systematicDesign} require ML decoding.} The constellations in~\cite[Sec.III-B]{TarokhIT2002existence} and \cite{ZhaoTIT2004orthogonalDesign} are designed for the multi-transmit-antenna case only, while the constellation based on the Cayley transform~\cite{JingTSP2003Cayley} relies partly on numerical optimization of $T\times T$ matrices. The construction in \cite{Kim2010USTM_QOS} is possible only for constellations of size at most $T^2$.

In this paper, we introduce a novel fully structured Grassmannian constellation.} 
This constellation is structurally generated by partitioning the Grassmannian of lines with a collection of bent cubes and mapping the symbol's coordinates in the Euclidean space onto one of these bent cubes.\footnote{Our constellation was used in \cite{AlexisWCNCcubesplit} as a quantization codebook on the Grassmannian of lines. Although the constellation structure is similar, the labeling and log-likelihood ratio~(LLR) computation presented here do not appear in the quantization problem.} The main advantages of our so-called {\em cube-split constellation} are as follows:
\begin{itemize} 
	\item It has a good packing efficiency: its minimum distance is larger than that of existing structured constellation and compares well with the fundamental limits.
	\item It allows for a systematic decoder which has low complexity, hence can be easily implemented in practice, yet achieves near-ML performance.
	\item It admits a very simple yet effective binary labeling which leads to a low bit error rate.
	\item It allows for an accurate LLR approximation which can be efficiently computed, {and can be efficiently associated to a multilevel coding-multistage decoding~(MLC-MSD)~\cite{Wachsmann1999multilevelCoding} scheme.}
\end{itemize}
We verify by simulation that under i.i.d. Rayleigh block fading channel, in terms of error probability (with or without channel codes) and achievable data rate, our cube-split constellation achieves performance close to the numerically optimized constellation, and outperforms existing structured Grassmannian constellations and a (coherent) pilot-based scheme {in the regime of short coherence time and large constellation size}.

The remainder of this paper is organized as follows.  
The system model is presented and Grassmannian constellations are overviewed in Section~\ref{sec:model}. We describe the construction and labeling of our cube-split constellation in Section~\ref{sec:encoder}. We next propose low-complexity decoder and LLR computation, and a MLC-MSD scheme in Section~\ref{sec:decoder}. Numerical results on the error rates and achievable data rate are provided in Section~\ref{sec:performance}. Section~\ref{sec:conclusion} concludes the paper. We discuss the extension to the MIMO case and present the preliminaries and proofs in the appendices.

{\em Notation:} 
Random quantities are denoted with non-italic sans-serif letters, e.g., a scalar $\rv{x}$, a vector $\rvVec{v}$, and a matrix $\rvMat{M}$. 
Deterministic quantities are denoted 
with italic letters, e.g., a scalar $x$, a vector $\pmb{v}$, and a
matrix $\pmb{M}$. 
The Euclidean norm is denoted by $\|\vv\|$ and the Frobenius norm $\|\Mm\|_F$. 
The trace, conjugate, 
transpose, and conjugated transpose of $\Mm$ are denoted $\trace\{\Mm\}$, $\Mm^*$, 
$\Mm^\T$ and $\Mm^\H$, respectively. 
{$\log(\cdot)$ and $\log_2(\cdot)$ are respectively the natural and binary logarithms.} 
$\ev_j$ is the $T\times 1$ canonical basis vector with $1$ at position $j$ and $0$ elsewhere. 
{The inverse of a function $f(\cdot)$ is denoted $f^{-1}(\cdot)$. $[n] \defeq \{1,2,\dots,n\}$.}
The Grassmann manifold $G(\mathbb{K}^T,M)$ is the space of $M$-dimensional subspaces in $\mathbb{K}^T$ with $\mathbb{K} = \CC$ or $\mathbb{K} = \RR$~\cite{boothby1986introduction}. In particular, $G(\mathbb{K}^T,1)$ is the Grassmannian of lines. We use a unit-norm vector $\xv \in \CC^T$ ($\|\xv\| = 1$) to represent the set $\{\lambda \xv, \lambda \in \CC\}$, which is a point in $G(\CC^T,1)$. 
The chordal distance between two lines represented by $\xv_1$ and $\xv_2$ is $d(\xv_1,\xv_2) = \sqrt{1-|\xv_1^\H\xv_2|^2}$. {Given two functions $f(x)$ and $g(x)$, we write: $f(x) = O(g(x))$ if there exists constant $c>0$ and some $x_0$ s.t. $|f(x)| \le c |g(x)|, \forall x \ge x_0$;
$f(x) = o(g(x))$ if $\lim\limits_{x\to\infty}\frac{f(x)}{g(x)} = 0$.}

\section{System Model and Grassmannian Constellations} \label{sec:model}
\subsection{System Model}
We consider a SIMO non-coherent channel in which a single-antenna transmitter transmits to a receiver equipped with $N$ antennas. The channel between the transmitter and the receiver is assumed to be flat and block fading with coherence time $T$ symbol periods ($T\ge 2$). That is, the channel vector $\rvVec{h} \in \CC^{N}$ remains constant during each coherence block of $T$ symbols, and changes to an independent realization in the next block, and so on. The inter-block independence is relevant, e.g., in the context of sporadic transmission in which the interval between successive transmissions is indefinite. We assume that the distribution
of $\rvVec{h}$ is known, but its realizations are \textit{unknown} to both ends of the link. {We assume i.i.d. Rayleigh fading, i.e., $\rvVec{h} \sim \Cc\Nc(\mathbf{0},\Id_N)$.} 
Within a coherence block, the transmitter sends a signal $\rvVec{x} \in \CC^T$, and the receiver receives 
\begin{equation} \label{eq:channelModel}
\rvMat{Y} = \sqrt{\rho T} \rvVec{x} \rvVec{h}^\T + \rvMat{Z},
\end{equation}
where $\rvMat{Z} \in \CC^{T\times N}$ is the additive noise with i.i.d. $\Cc\Nc(0,1)$ entries independent of $\rvVec{h}$, and the block index is omitted for simplicity. We consider the power constraint 
$
\E[\|\rvVec{x}\|^2] = 1,
$
so that the transmit power $\rho$ is identified with the SNR at each receive antenna. From~\cite[Eq.(9)]{Yang2013CapacityLargeMIMO}, the high-SNR capacity $C(\rho,N,T)$ of this channel is given in~\eqref{eq:noncoherentCapacity} with $\SNR = \rho$ and
\begin{align} \label{eq:c(N,T)}
c(N,T) &= \frac{1}{T} \log_2 \frac{(\underline{L}-1)!}{(N-1)!(T-1)!} + \Big(1-\frac{1}{T}\Big)\log_2 T \notag\\
&\quad + \frac{\underline{L}}{T} \log_2\frac{N}{\underline{L}} + \frac{\overline{L}}{T}(\psi(N)-1){\log_2 e},
\end{align}
where $\underline{L} \defeq \min\{N,T-1\}$, $\overline{L} \defeq \max\{N,T-1\}$, and $\psi(\cdot)$ is Euler's digamma function~{\cite[Eq.(6.3.1)]{abramovitz}}. {Note that this generalizes and coincides with \cite[Eq.(24)]{ZhengTse2002Grassman} when $T\ge N+1$.}

We assume that the input $\rvVec{x}$ is taken from a finite constellation $\Cc$ of size $|\Cc|$. 
Given an observation $\rvMat{Y} = \Ym$, the ML decoder is
$
\hat{\xv}^{\rm ML} = \arg\max_{\xv \in \Cc} p_{\rvMat{Y}|\rvVec{x}}(\Ym|\xv).
$
Conditioned on $\rvVec{x} = \xv$, $\rvMat{Y}$  is a Gaussian matrix with independent columns having the same covariance matrix $\Id_T + \rho T \xv\xv^\H$, hence 
\begin{align}
p_{\rvMat{Y}|\rvVec{x}}(\Ym|\xv) &= \frac{\exp\left(-\trace\{\Ym^\H(\Id_T + \rho T \xv\xv^\H)^{-1}\Ym\}\right)}{\pi^T \det(\Id_T + \rho T \xv\xv^\H)} \\
&= \frac{\exp\big(-\|\Ym\|_F^2 + \frac{\rho T}{1+\rho T \|\xv\|^2}\|\Ym^\H \xv\|^2\big)}{\pi^T (1 + \rho T \|\xv\|^2)}. \label{eq:pdf}
\end{align}
Thus, for unit-norm input $\|\rvVec{x}\| = 1$, the ML decoder is simply
\begin{equation} \label{eq:MLoneUser}
\hat{\xv}^{\rm ML} = \arg\max_{\xv \in \Cc} \|\Ym^\H \xv\|^2.
\end{equation}

Assuming that all constellation symbols are equally likely to be transmitted, i.e., the input law is $p_{\rvVec{x}}(\xv) = \frac{1}{|\Cc|} \mathbbm{1}\{\xv\in \Cc\}$ where $\mathbbm{1}\{.\}$ is the indicator function, the achievable rate is 
\begin{multline}
R = \frac{1}{T}I(\rvVec{x};\rvMat{Y}) 
= \frac{1}{T} \E\bigg[\log_2 \frac{p_{\rvMat{Y}|\rvVec{x}}(\rvMat{Y}|\rvVec{x})}{\frac{1}{|\Cc|}\sum_{\cv \in \Cc} p_{\rvMat{Y}|\rvVec{x}}(\rvMat{Y} | \cv)}\bigg] \\
= \frac{\log_2|\Cc|}{T}-\frac{1}{T}\E\bigg[\log_2\frac{\sum_{\cv \in \Cc} p_{\rvMat{Y}|\rvVec{x}}(\rvMat{Y}| \cv)}{p_{\rvMat{Y}|\rvVec{x}}(\rvMat{Y}|\rvVec{x})}\bigg] 
\label{eq:achievableRate}
\end{multline}
bits/channel use. Here, $\frac{\log_2|\Cc|}{T}$ is the rate achievable in the noiseless case, and $\frac{1}{T}\E\Big[\log_2\frac{\sum_{\cv \in \Cc} p_{\rvMat{Y}|\rvVec{x}}(\rvMat{Y}| \cv)}{p_{\rvMat{Y}|\rvVec{x}}(\rvMat{Y}|\rvVec{x})}\Big]$ is the rate loss due to noise. {In the large constellation regime, the achievable rate converges to the channel capacity as the considered constellation gets close to the optimal constellation. Thus it can be used as a performance metric as done in \cite[Sec.V]{Hochwald2000systematicDesignUSTM}.} The expectation in \eqref{eq:achievableRate} does not have a closed form in general, and we resort to the Monte Carlo method to compute $R$.

\subsection{Grassmannian Constellations} \label{sec:grassmann_const}
It was shown that the high-SNR capacity~\eqref{eq:noncoherentCapacity} is achieved with isotropically distributed input $\rvVec{x}$ s.t. its distribution is invariant under rotation, i.e., $p_{\rvVec{x}}(\xv) = p_{\rvVec{x}}(\pmb{Q}\xv)$ for any $T\times T$ deterministic unitary matrix $\pmb{Q}$~\cite{ZhengTse2002Grassman,Yang2013CapacityLargeMIMO}. The span of such $\rvVec{x}$ is uniformly distributed on the Grassmannian of lines $G(\CC^T,1)$~\cite{boothby1986introduction}. 
Motivated by this, the constellation $\Cc$ can be designed by choosing $|\Cc|$ elements of $G(\CC^T,1)$, represented by $|\Cc|$ unit-norm vectors $\{\cv_1,\dots,\cv_{|\Cc|}\}$. {Both $\rvVec{x}$ and the noise-free observation $\rv{h}_j\rvVec{x}$ at receive antenna $j\in [N]$ belong to the set $\{\lambda\rvVec{x}, \lambda \in \CC \}$. Thus, by definition, $\rvVec{x}$ and $\rv{h}_j\rvVec{x}$ represent the same symbol in $G(\CC^T,1)$.} Therefore, Grassmannian signaling guarantees error-free detection {\em without} CSI in the noiseless case if the symbols are not colinear. 
When the noise $\rvMat{Z}$ is present, since its columns are almost surely not aligned with the signal $\rvVec{x}$, the {one-dimensional span} of the received signal at each receive antenna deviates from that of $\rvVec{x}$ with respect to (w.r.t.) a distance measure,\footnote{There are several choices for the distance measure between subspaces, such as chordal distance, spectral distance, Fubini-Study distance, geodesic distance (see, for example, 
\cite[Sec.~I]{Barg2002BoundsOP}). For the Grassmannian of lines, these distances are equivalent up to a monotonically increasing transformation. In this paper, we adopt the commonly used chordal distance.} leading to a detection error if $\rvMat{Y}$ is outside the decision region of the transmitted symbol. With the chordal distance $d(\xv,\yv) \defeq \sqrt{1-|\xv^\H \yv|^2}$,  the decision regions of the optimal ML decoder~\eqref{eq:MLoneUser} in the case $N=1$ are the Voronoi regions defined for symbol $\cv_j$, $j\in[|\Cc|]$, as 
\begin{equation} \label{eq:VoronoiRegions}
V_j = \{\xv \in G(\CC^T,1): d(\xv,\cv_j) \le d(\xv,\cv_l), \forall l \ne j\}.
\end{equation}
The constellation $\Cc$ should be designed so as to minimize the probability of decoding error. 


Following the footsteps of \cite{Hochwald2000unitaryspacetime}, we can derive the pairwise error probability~(PEP) of mistaking a symbol $\cv_j$ for another symbol $\cv_l$ of the ML decoder as
\begin{align}
P_{j,l}^{\rm ML} &= \Pr\left\{ \|\rvMat{Y}^\H \cv_l\|^2 > \|\rvMat{Y}^\H \cv_j\|^2 \big| \ \rvVec{x} = \cv_j\right\} \\
&= \frac{1}{2} \Bigg[1-\bigg(1+\frac{4(1+\rho T)}{(d(\cv_j,\cv_l)\rho T)^2}\bigg)^{-1/2}\Bigg].
\end{align}
We can verify that the PEP is decreasing with the chordal distance.
 The error probability $P_e^{\rm ML}$ of ML decoder can be upper bounded in terms of the PEP using the union bound as
\begin{multline} 
P_e^{\rm ML} = \frac{1}{|\Cc|} \sum_{j=1}^{|\Cc|} \Pr\left\{\hat{\rvVec{x}} \ne \rvVec{x}| \rvVec{x} = \cv_j\right\} 
\le \frac{1}{|\Cc|} \sum_{j=1}^{|\Cc|}\sum_{l\ne j} P_{j,l}^{\rm ML} \\
\le \frac{|\Cc|-1}{2} \Bigg[1-\bigg(1+\frac{4(1+\rho T)}{(d_{\rm min}\rho T)^2}\bigg)^{-1/2}\Bigg], \label{eq:unionbound}
\end{multline}
where $d_{\rm min} \defeq \displaystyle\min_{1\le j<l \le |\Cc|} d(\cv_j,\cv_l)$ is the minimum pairwise chordal distance of the constellation. Therefore, maximizing the minimum pairwise distance minimizes the union bound. This leads to a commonly used constellation design criteria
\begin{align} \label{eq:designCriteria}
\max_{\Cc = \{\cv_1,\dots,\cv_{|\Cc|}\}} \, \min_{1\le j<l \le |\Cc|} d(\cv_j,\cv_l).
\end{align}
This optimization problem can be solved numerically. The resulting constellation, however, is hard to exploit in practice due to its lack of structure. In particular, the unstructured constellations are normally used with the high-complexity ML decoder, do not admit a straightforward binary labeling, and need to be stored at both ends of the channel.  
In our design, we would rather relax (slightly) the optimality requirement~\eqref{eq:designCriteria} to have a structured constellation while preserving good packing properties. We describe our proposed Grassmannian constellation in the next section. 
 
\section{Cube-Split Constellation} \label{sec:encoder} 
\newcommand{\cell}{S}
{
\subsection{Design Approach}
\subsubsection{Partitioning of the Grassmannian}
We consider a set of $V$ Grassmannian points $\{\zetav_1,\dots, \zetav_V\}$ and partition the Grassmannian into $V$ cells whereby cell $i$ is defined as 
$$
\cell_i \defeq \big\{\xv\in G(\CC^T,1): d(\xv, \zetav_i)<d(\xv, \zetav_j), \forall j \in [V] \setminus \{i\}\big\}.
$$
We ignore the cell boundaries for which $d(\xv,\zetav_i) = d(\xv, \zetav_j) \le d(\xv, \zetav_k)$ for some $i \ne j$ and any $k \notin \{i,j\}$ since this is a set of measure zero. In this way, a symbol $\xv$ belongs to cell $\cell_{i}$ if 
$
i = \arg\min_{j\in[T]} d(\xv,\zetav_j), 
$
that is, $\zetav_i$ is the closest point to $\xv$ among $\{\zetav_1,\dots, \zetav_V\}$. Thus, these cells correspond to the Voronoi regions associated to the initial set of points $\{\zetav_1,\dots,\zetav_V\}$.
 
\subsubsection{Mapping from the Euclidean Space onto a Cell}
Since the symbols are defined up to a complex scaling factor, $G(\CC^T,1)$ has $T-1$ complex dimensions, i.e. $2(T-1)$ real dimensions, and so do the cells. Therefore, any point on a cell can be parameterized by $2(T-1)$ real coefficients. We choose to define these coefficients in the Euclidean space of $2(T-1)$ real dimensions, and let them determine the symbol through a bijective mapping $\gv_{i}(.)$ from this Euclidean space onto the cell $\cell_{i}$. 
On the other hand, {to define a grid in the Euclidean space}, let $A_j$ denote finite sets of $2^{B_j}$ points regularly spread on the interval $(0,1)$ in order to maximize the minimum distance within the set ($B_j$ then denotes the number of bits necessary to characterize a point in $A_j$) so that
\begin{equation} \label{eq:coor_set}
A_j = \bigg\{\frac{1}{2^{B_j+1}},\frac{3}{2^{B_j+1}},\dots,\frac{2^{B_j+1}\!-\!1}{2^{B_j+1}}\bigg\}, ~ j\in [2(T\!-\!1)].
\end{equation}
The Cartesian product of these sets $\bigotimes_{j=1}^{2(T-1)} A_j$ is then a grid in the Cartesian product of $2(T-1)$ intervals $(0,1)$ that we denote by $(0,1)^{2(T-1)}$.

The constellation can be formally described as the collection of the mapping of this grid onto each cell $\cell_1,\dots,\cell_V$, i.e.
\begin{align} \label{eq:const_construct}
\Cc = \bigg\{\cv = \gv_{i}(\av): i \in[V],\ \av \in \bigotimes_{j=1}^{2(T-1)} A_j \bigg\}.
\end{align}
Hence, the constellation can be seen as the collection of $V$ deformed lattice constellations containing in total $V\prod_{i=j}^{2(T-1)} 2^{B_j}$ symbols.
Note that for asymptotically large $B_j$, choosing the points in $A_j$ with equal probability defines a distribution that converges weakly to the uniform distribution on $(0,1)$. {Therefore, in this regime, the distribution of the points on the Euclidean grid $\bigotimes_{j=1}^{2(T-1)} A_j$ 
converges to a continuous uniform distribution in $(0,1)^{2(T-1)}$. On the other hand, as mentioned in Section~\ref{sec:grassmann_const}, an optimal constellation has symbols uniformly distributed on the Grassmannian. Therefore, focusing on this asymptotic regime (to support large constellations)}, we design $\{\zetav_i\}_{i=1}^V$ and $\{\gv_{i}\}_{i=1}^V$ s.t. for any $i$, the image of the uniform distribution in $(0,1)^{2(T-1)}$ through $\gv_{i}$ is uniformly distributed in $\cell_{i}$, i.e. we seek mappings satisfying the following property.
\begin{property} \label{prop:uniform}
	Let $\rvVec{a}$ be a random vector uniformly distributed on $(0,1)^{2(T-1)}$, then for any $i\in[V]$,  $\gv_{i}(\rvVec{a})$ is uniformly distributed on the cell $\cell_{i}$ of $G(\CC^T,1)$.
\end{property}

\subsection{Constellation Specifications}
We first choose the cell centers to be the canonical basis vectors $\{\ev_1,\dots,\ev_T\}$, so $V=T$. 
With this choice, the cells $\cell_1,\dots,\cell_T$ are easily characterized since $\cell_i$ can be defined in terms of the symbol's coordinates as
$
\cell_{i} \defeq \big\{\xv\in G(\CC^T,1): \ |x_{i}|>|x_j|, \forall j \in [T] \setminus \{i\}\big\}.
$
{However, the direct manipulation of the Grassmannian cell $\cell_{i}$ is not straightforward. Hence, we equivalently define the mapping $\gv_{i}$ through a mapping $\boldsymbol{\xi}_{T-1} : \av \mapsto \tv=[t_1 \dots t_{T-1}]^\T$ as}
\begin{equation} 
\gv_{i}(\av) 
= \frac{1}{\sqrt{1+\sum_{j=1}^{T-1}|t_j|^2}}[t_1 \ \dots \ t_{i-1} \ 1 \ t_{i} \ \dots \ t_{T-1}]^\T. 
\label{eq:encodeMapping}
\end{equation}
The rationale of this definition is that the target space of  $\boldsymbol{\xi}_{T-1}$ is then an Euclidean space contrary to $\gv_{i}$. 
Moreover, looking at the definition of $\cell_{i}$, the fact that $\gv_{i}$ takes values in $\cell_{i}$ is equivalent to the fact that $\boldsymbol{\xi}_{T-1}$ takes values in $\Dc(0,1)^{T-1}$ with $\Dc(0,1) \defeq\{z\in \CC:|z| \le 1\}$.

Let us first define $\boldsymbol{\xi}_1$ in the case $T = 2$.
In this case, $\rvVec{t}$ is a complex scalar. According to Lemma~\ref{lem:Cauchy-Grassmann}, Property~\ref{prop:uniform} is equivalent to $\boldsymbol{\xi}_1(\av)$ being distributed as a ${\rm Cauchy}(0,1)$ truncated on the unit disc $\Dc(0,1)$.
In that case, we show in Lemma~\ref{lem:Cauchy-Unif} that Property~\ref{prop:uniform} is satisfied by considering
\begin{align} \label{eq:unif2cauchy}
\boldsymbol{\xi}_1(\av) \defeq \sqrt{\frac{1-\exp(-\frac{|w|^2}{2})}{1+\exp(-\frac{|w|^2}{2})}} \frac{w}{|w|} 
\end{align} 
with ${w} = \Nc^{-1}({a}_1)+\im\Nc^{-1}({a}_2)$.

When $T > 2$, for $\rvVec{x}$ to be uniformly distributed on $\cell_{i}$, the distribution underlying $\rvVec{t}$ has dependencies which are hard to characterized. We choose to ignore these dependencies and design $\boldsymbol{\xi}_{T-1}$ in a way similar to the case $T=2$. To this end, we apply $\boldsymbol{\xi}_{1}$ to $\av$ in a component-wise manner
\begin{equation} \label{eq:encodeMappingT}
\boldsymbol{\xi}_{T-1}(\av) 
= \big[ \boldsymbol{\xi}_{1}([a_{1} \ a_{2}]^\T) \ \dots \  \boldsymbol{\xi}_{1}([a_{2T-3} \ a_{2T-2}]^\T)\big]^\T.
\end{equation}
However, note that Property~\ref{prop:uniform} is then not satisfied in that case.
Combining \eqref{eq:encodeMapping}, \eqref{eq:unif2cauchy} and \eqref{eq:encodeMappingT} completely defines $\gv_{i}$. This mapping is bijective and its inverse $\av = \gv_{i}^{-1}(\xv): \cell_{i} \to  (0,1)^{2(T-1)}$ is given by 
$
a_{2j-1} = \Nc(\Re(w_j))$ and $a_{2j} = \Nc(\Im(w_j)), j \in  [T-1]$, where 
\begin{align} 
w_j &= \sqrt{2\log\frac{1+|t_j|^2}{1-|t_j|^2}} \frac{t_j}{|t_j|}, \label{eq:w} \\
\tv &= \left[\frac{x_1}{x_{i}}, \dots, \frac{x_{i-1}}{x_{{i}}}, \frac{x_{{i}+1}}{x_{{i}}}, \dots, \frac{x_T}{x_{{i}}}\right]^\T. \label{eq:t}
\end{align}

The constellation is then constructed as in \eqref{eq:const_construct} where the coordinate sets $\{A_j\}$ is given in \eqref{eq:coor_set}. 
The grid of symbols in each cell is analogous to a bent hypercube, hence the name {\em cube-split constellation}. The constellation contains $T \prod_{j=1}^{2(T-1)} 2^{B_j}$ symbols. An example of the grid of points in $(0,1)^{2(T-1)}$ is shown in Fig.~\ref{fig:cubesplit}{(a)} and the resulting cube-split constellation is illustrated in Fig.~\ref{fig:cubesplit}{(b)}. For the sake of representation, we plot the constellation built on the {\em real} Grassmannian $G(\RR^T,1)$ following the same principle. 
\begin{figure}[!h] 
	\centering
	\subfigure[The grid of points on $(0,1)^2$]{
		\includegraphics[width=.25\textwidth,trim=.5cm .5cm .5cm .5cm,clip=true]{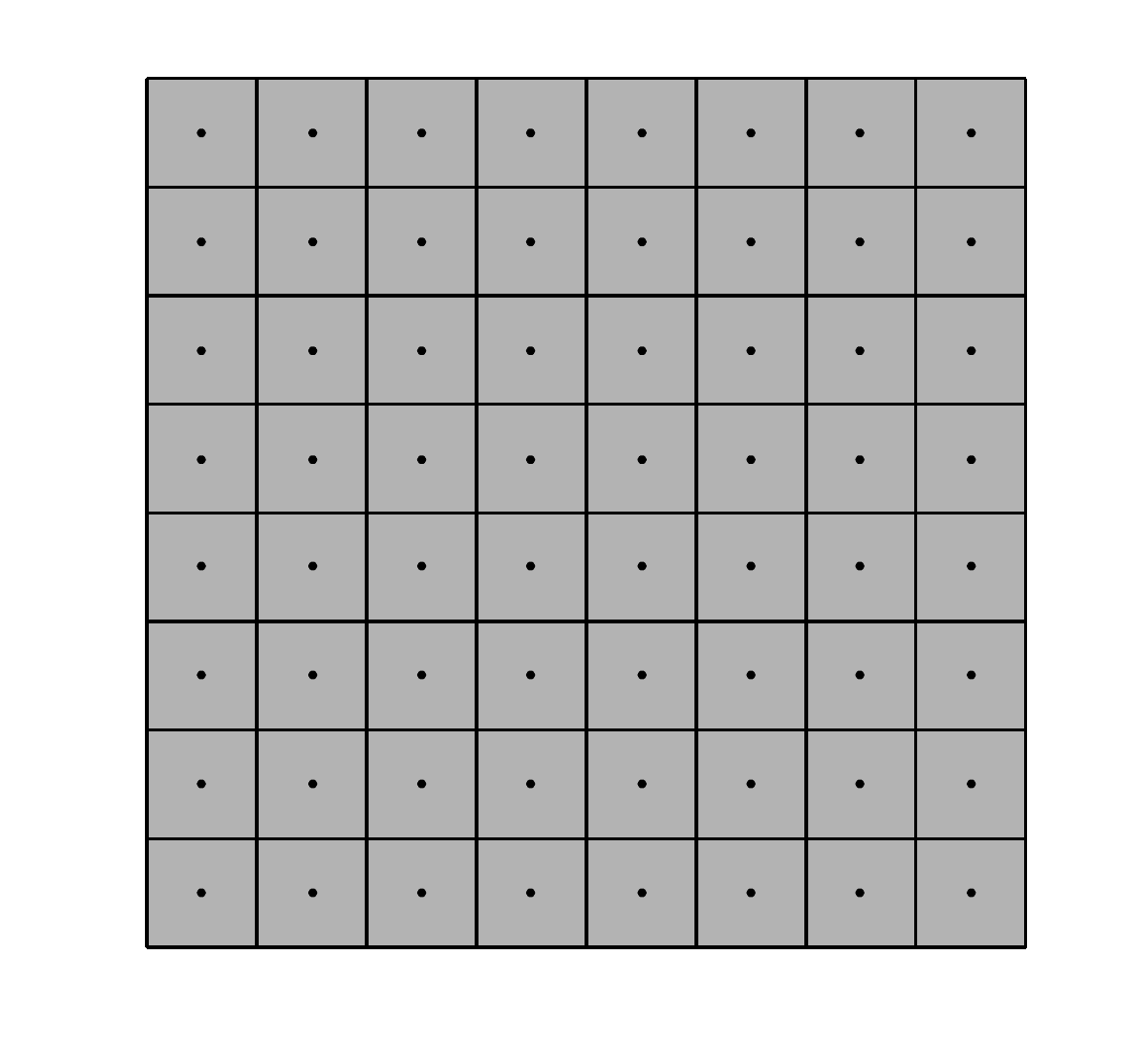}}
	\subfigure[The cube-split constellation on $G(\RR^3,1)$]{
		\includegraphics[width=.42\textwidth,trim=0cm .5cm .5cm 0cm,clip=true]{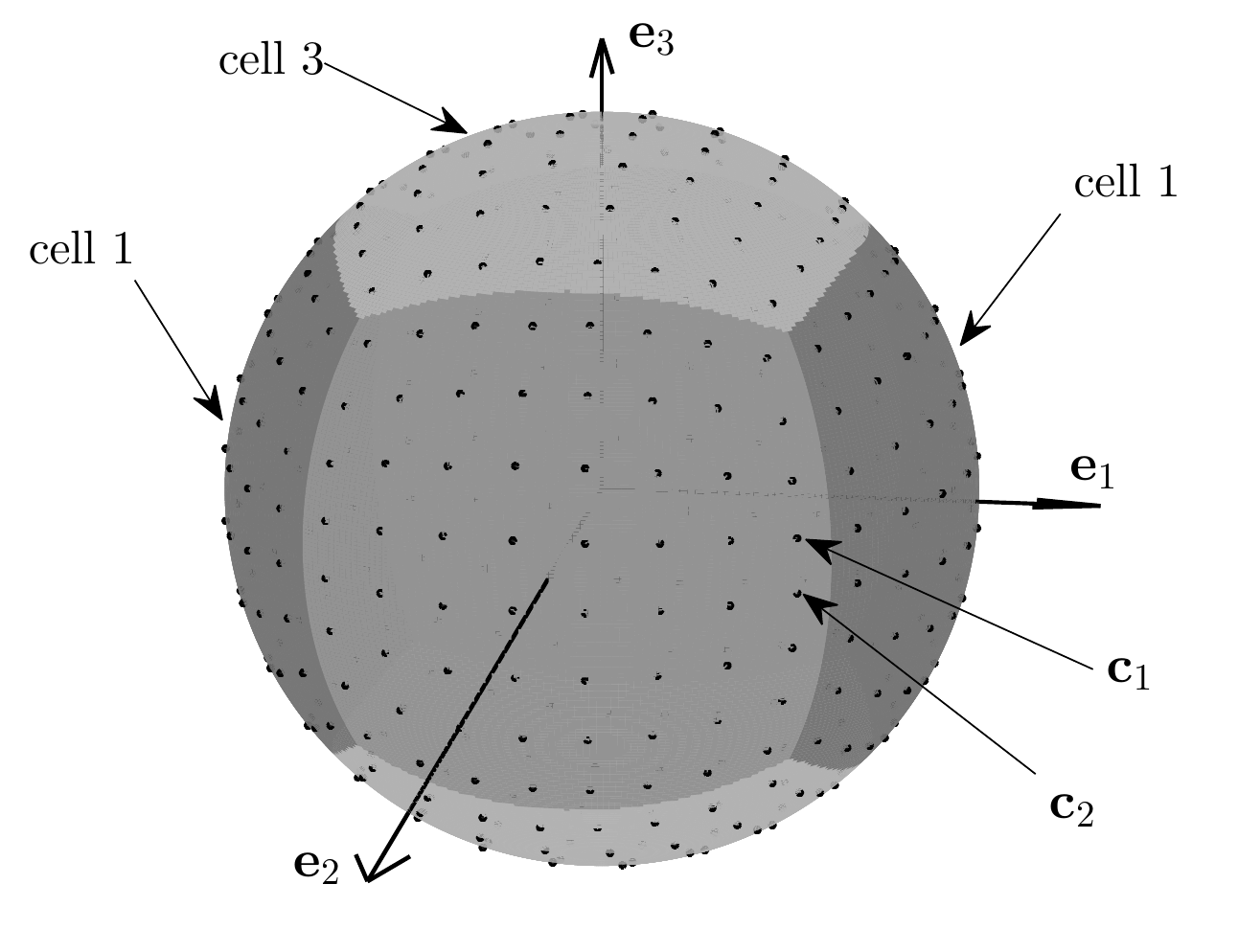}}
	\caption{Illustration of the cube-split constellation on $G(\RR^3,1)$ for $B_1 = B_2 = 3$ bits. {The symbols, represented by the dots, are mapped from a grid on $(0,1)^2$ (above) to one of the cells depicted by different gray levels (below).} Note that each symbol in $G(\RR^3,1)$ is depicted twice due to the sign indeterminacy associated to the Grassmannian. The constellation defines $T \times 2^{B_1+B_2} = 192$ lines in $\RR^3$. 
	Two symbols $\cv_1$ and $\cv_2$ with minimum distance are in the middle of an edge of a cell. 
	}
	\label{fig:cubesplit}
\end{figure}

\subsection{Minimum Distance}
The following lemma provides theoretical benchmarks for the minimum distance of an optimal constellation with given size.\footnote{In our setting with $B_j \ge 1$, $j\in [2(T-1)]$, the upper bound in \eqref{eq:minDistBounds} is tighter than the Rankin bounds  $\delta \le \sqrt{\big(1-\frac{1}{T}\big)\frac{|\Cc|}{|\Cc|-1}}$ if $|\Cc| \le \frac{T(T+1)}{2}$ and $\delta \le \sqrt{1-\frac{1}{T}}$ if $|\Cc|> \frac{T(T+1)}{2}$~\cite{Conway1996packing}.} 
\begin{lemma} \label{lemma:mindistBounds}
	The minimum distance $\delta$ of an optimal constellation $\Cc_{\rm opt}$ of cardinality $|\Cc|$ on the complex Grassmannian of lines $G(\CC^T,1)$ is bounded by
	\begin{align} \label{eq:minDistBounds}
	{\min\big\{1,2|\Cc|^{-\frac{1}{2(T-1)}}\big\} \ge \delta \ge |\Cc|^{-\frac{1}{2(T-1)}}.}
	\end{align}
\end{lemma}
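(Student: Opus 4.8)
The plan is to derive both inequalities from elementary packing and covering (volume) bounds on $G(\CC^T,1)$, using only one structural fact: the chordal metric ball $B(\cv,r):=\{\xv\in G(\CC^T,1):d(\xv,\cv)\le r\}$ has normalized invariant volume exactly $r^{2(T-1)}$ for every $\cv$ and every $r\in[0,1]$. I would establish this first. By unitary invariance it suffices to take $\cv=\ev_1$, where $B(\ev_1,r)=\{\yv:\abs{y_1}^2\ge 1-r^2\}$; drawing $\rvVec{u}$ Haar-uniformly on the unit sphere of $\CC^T$, the quantity $\abs{\rv{u}_1}^2$ has density $(T-1)(1-s)^{T-2}$ on $[0,1]$, so the volume equals $\P[\abs{\rv{u}_1}^2\ge 1-r^2]=\int_{1-r^2}^1(T-1)(1-s)^{T-2}\dif s=r^{2(T-1)}$ (this identity is also recorded in, e.g., \cite{Dai2008quantizationBounds}).

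Next I would prove the \emph{upper bound}, which in fact holds for every size-$|\Cc|$ constellation $\{\cv_1,\dots,\cv_{|\Cc|}\}$ with minimum distance $\delta$. Since $d$ is a genuine metric on $G(\CC^T,1)$ (being proportional to the Frobenius distance $\normf{\cv_j\cv_j^\H-\cv_l\cv_l^\H}$ between the rank-one projectors), the triangle inequality forces the balls $B(\cv_j,\delta/2)$ to be pairwise disjoint up to a measure-zero set: any point common to $B(\cv_j,\delta/2)$ and $B(\cv_l,\delta/2)$ would make $\delta\le d(\cv_j,\cv_l)\le\delta$, hence lie on both bounding spheres. As $\delta\le1$, we have $\delta/2\le1$ and may apply the ball-volume identity; summing volumes gives $|\Cc|(\delta/2)^{2(T-1)}\le 1$, i.e. $\delta\le 2|\Cc|^{-1/(2(T-1))}$. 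Combined with the trivial bound $\delta\le1$ (the chordal distance never exceeds $1$), this yields $\delta\le\min\{1,2|\Cc|^{-1/(2(T-1))}\}$.

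Finally, for the \emph{lower bound} I would run a Gilbert--Varshamov covering argument. Put $r:=|\Cc|^{-1/(2(T-1))}\in(0,1]$ and let $\{\cv_1,\dots,\cv_M\}$ be a subset of $G(\CC^T,1)$ with pairwise chordal distances $\ge r$ that is maximal for this property (finite, by compactness). Maximality means every $\xv$ lies within distance $r$ of some $\cv_j$, so the closed balls $B(\cv_j,r)$ cover $G(\CC^T,1)$, whence $M\,r^{2(T-1)}\ge 1$, i.e. $M\ge r^{-2(T-1)}=|\Cc|$. Taking any $|\Cc|$ of these points yields a constellation of size $|\Cc|$ with minimum distance at least $r$, so the optimal one satisfies $\delta\ge r=|\Cc|^{-1/(2(T-1))}$. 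The only non-routine ingredient is the exact ball-volume identity $r^{2(T-1)}$, and it is precisely what makes the constants in \eqref{eq:minDistBounds} clean: replacing it by mere two-sided volume estimates would preserve the exponent $-\frac{1}{2(T-1)}$ but degrade the multiplicative constants on both sides.
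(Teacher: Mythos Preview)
Your proof is correct and follows essentially the same route as the paper: the paper invokes the ball-volume identity $\mu(\Bc(r))=r^{2(T-1)}$ from~\cite{Mukkavilli2003rateFeedback,Dai2008quantizationBounds} and then plugs it into the Hamming and Gilbert--Varshamov bounds quoted from~\cite{Dai2008quantizationBounds}, whereas you derive the volume identity yourself and spell out the packing/covering arguments explicitly. The underlying ideas---disjoint half-radius balls for the upper bound, a maximal packing that covers for the lower bound---are identical.
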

\begin{proof}
	According to~\cite{Mukkavilli2003rateFeedback} (also stated in~\cite[Corollary~1]{Dai2008quantizationBounds}), the volume of a metric ball $\Bc(\delta)$ of radius $\delta$ on $G(\CC^T,1)$ with normalized invariant measure $\mu(.)$ is given by
$
	\mu(\Bc(\delta)) = \delta^{2(T-1)}$.
	Let $\Cc_{\rm opt}$ denotes the optimal constellation on $G(\CC^T,1)$ with minimum distance $\delta$, we have the Gilbert-Varshamov lower bound~\cite[Eq.(2)]{Dai2008quantizationBounds} and Hamming upper bound~\cite[Eq.(3)]{Dai2008quantizationBounds} on the size of the code as 
$
	\frac{1}{\mu(\Bc(\delta/2))} \ge |\Cc| \ge \frac{1}{\mu(\Bc(\delta))}$.
	Next, by substituting the volumes $\mu(\Bc(\delta/2))$ and $\mu(\Bc(\delta))$ into this, \eqref{eq:minDistBounds} follows readily.
\end{proof}
	
In Fig.~\ref{fig:mindist}{(a)}, we compare the minimum distance of the cube-split constellation for $T=2$ and $T=4$ with these fundamental limits. We also plot the minimum distance of the numerically optimized constellation generated by approximating the optimization~\eqref{eq:designCriteria} by 
	$
	\min_{\Cc} \, \log \sum_{1\le j<l \le |\Cc|} \exp\big(\frac{|\cv_j^\H\cv_l|}{\epsilon}\big)
	$
	with a small {``diffusion constant''} $\epsilon$ for smoothness, then solving it by conjugate gradient descent on the Grassmann manifold using the Manopt toolbox~\cite{manopt}. We also compare with other structured constellations: the Fourier constellation~\cite{Hochwald2000systematicDesignUSTM}, which coincides with the constellation in~\cite[Sec.~III-A]{TarokhIT2002existence} when $T =2$; the exp-map constellation~\cite{Kammoun2007noncoherentCodes}; 
	the coprime-PSK constellation~\cite{Zhang2011full_diversity_blind_ST_block_codes} where $\zv$ is taken from a numerically optimized constellation in $G(\CC^{T-2},1)$; 
	and the multi-layer constellation~\cite{AttiahISIT2016systematicDesign} in which we use the canonical basis as the initial constellation and adopt the parameters in \cite[Sec.IV]{AttiahISIT2016systematicDesign}: $L = 3$ layers, the moving distances $\sin(\phi_2) = 0.6$ and $\sin(\phi_3) = 0.35$.
\begin{figure*}[!h] 
	\centering
	\hspace{-.5cm}
	\subfigure[Minimum distance vs. constellation size $|\Cc|$ for $T \in \{2,4\}$]{
%
%
\definecolor{mycolor1}{rgb}{1.00000,0.00000,1.00000}%
\begin{tikzpicture}[scale=0.66,style={mark size=3pt,line width=3pt}]
\begin{axis}[%
width=5.5in,
height=3.6in,
at={(0in,-0.5in)},
scale only axis,
xmin=2,
xmax=20,
xlabel style={font=\color{white!15!black},at={(axis description cs:0.5,-0.05)}},
xlabel={$\log_2(|\Cc|)$},
ymode=log,
ymin=3.5e-05,
ymax=1,
yminorticks=true,
ylabel style={font=\color{white!15!black}},
ylabel={Minimum distance},
axis background/.style={fill=white},
xmajorgrids,
ymajorgrids,
yminorgrids,
label style={font=\large},
legend style={at={(0.01,0.006)}, anchor=south west, legend cell align=left, align=left, draw=white!15!black,nodes={scale=0.85}}
]
\addplot [color=blue, mark size=2pt, mark=*, mark options={solid, blue}]
table[row sep=crcr]{%
	3	0.54654578501864\\
	5	0.232630836788292\\
	7	0.0971628697802939\\
	9	0.0414025256220549\\
	11	0.0180976818375968\\
	13	0.00808885346018279\\
	15	0.00367940611257187\\
	17	0.0016963956932459\\
	19	0.000790308275004024\\
};
\addlegendentry{Cube-split constellation}

\addplot [color=black, dashdotted, mark=asterisk, mark options={solid, thick, black}]
table[row sep=crcr]{%
	3	0.607522787499177\\
	5	0.320324474603031\\
	7	0.157278215541468\\
	9	0.0746538110291644\\
	11	0.0357013698028152\\
	13	0.0177335087366133\\
};
\addlegendentry{Numerically optimized constellation}

\addplot [color=mycolor1, dotted, mark=o, line width=1.2pt, mark options={solid, thick, mycolor1}]
table[row sep=crcr]{%
	3	0.38268343236509\\
	5	0.0980171403295624\\
	7	0.0245412285229348\\
	9	0.00613588464923292\\
	11	0.00153398018667962\\
	13	0.000383495189158785\\
	15	9.5873803622534e-05\\
	17	2.39684656701872e-05\\
};
\addlegendentry{Fourier constellation~\cite{Hochwald2000systematicDesignUSTM}}

\addplot [color=mycolor1, dotted, mark=+, line width=1.2pt, mark options={solid, mycolor1,thick}]
table[row sep=crcr]{%
	3	0.216656009406344\\
	5	0.107652703711034\\
	7	0.0481417266935054\\
	9	0.0226953155532321\\
	11	0.0110111367326326\\
	13	0.0040575594205403\\
	15	0.00162747988655728\\
};
\addlegendentry{Exp-map constellation~\cite{Kammoun2007noncoherentCodes}}

\addplot [color=green, dashdotted, line width=1.2pt, mark=x, mark options={solid, green}]
table[row sep=crcr]{%
	2.58496250072116	0.5\\
	4.90689059560852	0.104528463267653\\
	7.04439411935845	0.0237976975461042\\
	8.98299357469431	0.0062086412452293\\
	11.0154150523867	0.00151767702771016\\
	12.9996477365284	0.00038358883656127\\
	15.0076405272819	9.53673908952537e-05\\
};
\addlegendentry{Coprime-PSK constellation~\cite{Zhang2011full_diversity_blind_ST_block_codes}}

\addplot [color=green, dashdotted, line width=1.2pt, mark=triangle, mark options={solid, green,rotate=-90}]
table[row sep=crcr]{%
	3	0.0715798318654663\\
	5	0.0715798318654632\\
	7	0.00379451041404461\\
	9	0.00156779726960514\\
	11	0.0035204095976859\\
	13	8.78718140079727e-05\\
};
\addlegendentry{Multi-layer constellation~\cite{AttiahISIT2016systematicDesign}}

\addplot [color=red, dashdotted, line width=1pt]
table[row sep=crcr]{%
	3	0.353553390593274\\
	5	0.176776695296637\\
	7	0.0883883476483184\\
	9	0.0441941738241592\\
	11	0.0220970869120796\\
	13	0.0110485434560398\\
	15	0.0055242717280199\\
	17	0.00276213586400995\\
	19	0.00138106793200498\\
	21	0.000690533966002488\\
};
\addlegendentry{Lower bound of optimal constellation~\eqref{eq:minDistBounds}}

\addplot [color=red, dashed, line width=1pt]
table[row sep=crcr]{%
	3	0.7937005259841\\
	5	0.39685026299205\\
	7	0.198425131496025\\
	9	0.0992125657480125\\
	11	0.0496062828740062\\
	13	0.0248031414370031\\
	15	0.0124015707185016\\
	17	0.00620078535925078\\
	19	0.00310039267962539\\
	21	0.00155019633981269\\
};
\addlegendentry{Upper bound of optimal constellation~\eqref{eq:minDistBounds}}

\addplot [color=blue, mark size=2pt, mark=*, mark options={solid, blue}, forget plot]
table[row sep=crcr]{%
	5	0.556357841428337\\
	6	0.510745001645169\\
	8	0.481507178770003\\
	10	0.22867743566646\\
	12	0.194257030948684\\
	14	0.183408120938834\\
	16	0.0891581317317616\\
	18	0.0781513772775711\\
	20	0.0761550325122619\\
};
\addplot [color=black, dashdotted, mark=asterisk, mark options={solid,thick, black}, forget plot]
table[row sep=crcr]{%
	5	0.816432439746329\\
	6	0.723386833170898\\
	8	0.587732860154555\\
	10	0.471700416303291\\
	12	0.363971564187616\\
};
\addplot [color=mycolor1, dotted, mark=o, line width=1.2pt, mark options={solid, thick,mycolor1}, forget plot]
table[row sep=crcr]{%
	5	0.707106781186548\\
	8	0.408793812018675\\
	10	0.267241904713808\\
	12	0.169470093133017\\
	14	0.105709669277041\\
	16	0.0664589887902453\\
	18	0.0411840308950683\\
	20	0.0259342925914511\\
	26	0.00417769112508464\\
};
\addplot [color=mycolor1, dotted, mark=+, line width=1.2pt, mark options={solid, mycolor1,thick}, forget plot]
table[row sep=crcr]{%
	6	0.612372435695794\\
	8	0.230942834495574\\
	9	0.147433999135508\\
	10	0.213189195430122\\
	11	0.253594828293568\\
	12	0.252532650614033\\
	13	0.182773497263885\\
	14	0.118570915845471\\
	15	0.0981875846606188\\
	16	0.10492791807493\\
	18	0.1046\\
	20	0.0763133976310256\\
};
\addplot [color=green, dashdotted, line width=1.2pt, mark=x, mark options={solid, green}, forget plot]
table[row sep=crcr]{%
	4.90689059560852	0.430837618314107\\
	5.90689059560852	0.399358120899028\\
	7.97727992349992	0.198623282219553\\
	9.99435343685886	0.221338979747704\\
	11.9943534368589	0.110132003128971\\
	13.9985904297453	0.112984536976245\\
};
\addplot [color=green, dashdotted, line width=1.2pt, mark=triangle, mark options={solid, green,rotate=-90}, forget plot]
table[row sep=crcr]{%
	4	0.0715798318654663\\
	6	0.0715798318654632\\
	8	0.00379451041404461\\
	10	0.00156779726960514\\
	12	0.0035204095976859\\
	14	8.78718140079727e-05\\
};
\addplot [color=red, dashdotted, line width=1pt, forget plot]
table[row sep=crcr]{%
	5	0.5\\
	8	0.353553390593274\\
	14	0.176776695296637\\
	20	0.0883883476483184\\
	26	0.0441941738241592\\
	32	0.0220970869120796\\
	38	0.0110485434560398\\
	44	0.0055242717280199\\
	50	0.00276213586400995\\
	56	0.00138106793200498\\
	62	0.000690533966002488\\
};
\addplot [color=red, dashed, line width=1pt, forget plot]
table[row sep=crcr]{%
	5	1.12246204830937\\
	8	0.7937005259841\\
	14	0.39685026299205\\
	20	0.198425131496025\\
	26	0.0992125657480125\\
	32	0.0496062828740062\\
	38	0.0248031414370031\\
	44	0.0124015707185016\\
	50	0.00620078535925078\\
	56	0.00310039267962539\\
	62	0.00155019633981269\\
};
\end{axis}

\begin{axis}[%
width=5.5in,
height=3.6in,
at={(0in,-0.5in)},
scale only axis,
xmin=0,
xmax=1,
ymin=0,
ymax=1,
axis line style={draw=none},
ticks=none,
axis x line*=bottom,
axis y line*=left
]
\draw [black] (axis cs:0.11,.88) ellipse [x radius=0.012, y radius=0.085];
\draw [black] (axis cs:0.555,0.86) ellipse [x radius=0.015, y radius=0.1];
\draw [-latex,thick] (.07,.64) node [below] {\large $T = 2$} -- (.105,.8);
\draw [-latex,thick] (.12,.61) -- (.27,.467);
\draw [-latex,thick] (.7,.65) node [right] {\large $T = 4$} -- (.566,.77);
\draw [-latex,thick] (.7,.65) -- (.595,.34);
\end{axis}
\end{tikzpicture}
	\subfigure[Minimum distance vs. symbol length $T$ for $|\Cc| = T2^{2(T-1)}$ ($B_0 = 1$)]{
		\includegraphics[width=.43\textwidth]{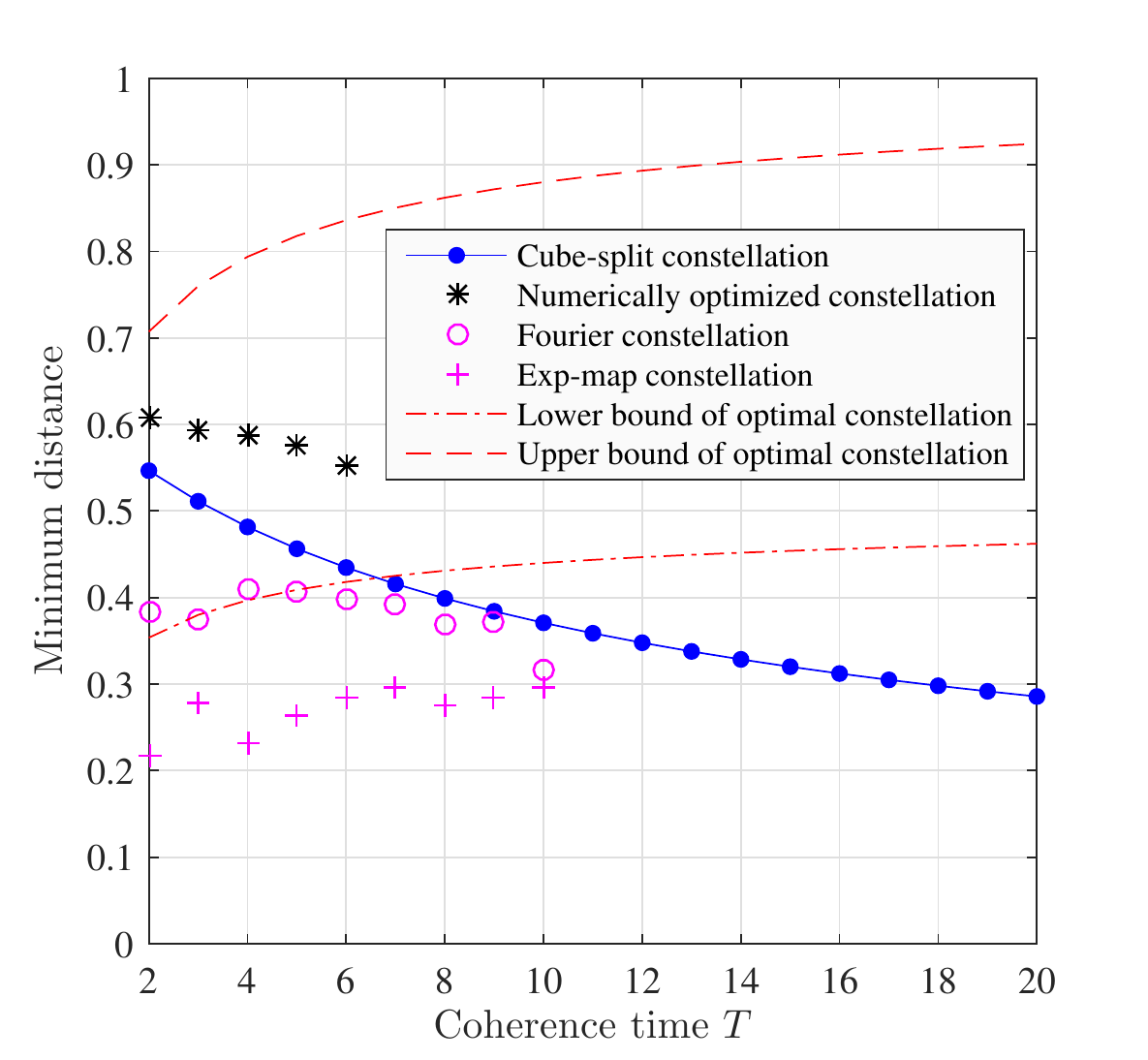}
	}
	\caption{The minimum distance of the cube-split constellation in comparison with other constellations and the fundamental limits of an optimal constellation given in~\eqref{eq:minDistBounds}.} 
	\label{fig:mindist}
\end{figure*} 

We observe that the cube-split constellation has the largest advantage over other structured constellations when $B_0 \defeq B_1 = \dots = B_{2(T-1)}$. (For $T = 4$, this corresponds to $\log_2(|\Cc|) = 8, 14, 20$ bits/symbol.) In this case, all the real dimensions of a cell accommodate the same number of symbols, thus the symbols are more evenly spread. Hereafter, we denote this symmetric cube-split constellation by $CS(T,B_0)$. Let us consider a pair of symbols $\cv_1$ and $\cv_2$ in $CS(T,B_0)$ that are in the same cell of $G(\CC^T,1)$ with respective local coordinates $\av^{(1)}$ and $\av^{(2)}$ differing in only one component s.t.
\begin{equation}
a^{(1)}_{j_0} \ne a^{(2)}_{j_0}, \left\{{a}^{(1)}_{j_0}, {a}^{(2)}_{j_0}\right\} = \Big\{\frac{1}{2}-\frac{1}{2^{B_{0}+1}}, \frac{1}{2}+\frac{1}{2^{B_{0}+1}} \Big\}, \label{eq:mindist_coor_1}
\end{equation}
\begin{equation}
{a}^{(1)}_j = {a}^{(2)}_j \in \Big\{\frac{1}{2^{B_0+1}}, 1-\frac{1}{2^{B_0+1}} \Big\}, \forall j\ne j_0, \label{eq:mindist_coor_2}
\end{equation} 
for some $j_0 \in [2(T-1)]$. One such pair of symbols is illustrated in Fig.~\ref{fig:cubesplit}. The two symbols are in the middle of an edge of a cell. We conjecture, and have verified with all the cases depicted in Fig.~\ref{fig:mindist}, that this pair of symbols achieves the minimum distance of $CS(T,B_0)$.
While a proof of this conjecture for general $B_0$ remains elusive, {it can be proved for the particular case $B_0 = 1$.

\begin{lemma} \label{lem:mindist_CS(T,1)}
For the $CS(T,1)$ constellation, two symbols with minimum distance are in the same cell and have the respective local coordinates given by \eqref{eq:mindist_coor_1} and \eqref{eq:mindist_coor_2} with $B_0 = 1$. The minimum distance is given by $d_{\min}(T,1) = \sqrt{1-\big|1-\frac{1+\im}{c^{-1} + T-1}\big|^2}$, where $c \defeq \frac{1-e^{-m^2}}{1+e^{-m^2}}$ with $m \defeq \Nc^{-1}(3/4)$.
\end{lemma}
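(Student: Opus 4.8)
The plan is to compute \emph{every} pairwise chordal distance in $CS(T,1)$ and read off the minimum, exploiting the rigid structure that $B_0=1$ forces on the symbols; I would treat same-cell and cross-cell pairs separately. \emph{Step 1 (structure of the symbols).} For $B_0=1$ each coordinate set is $A_j=\{1/4,3/4\}$, so by \eqref{eq:w}--\eqref{eq:unif2cauchy} we have $w_j\in\{\pm m\pm\im m\}$ with $m=\Nc^{-1}(3/4)$, hence $|w_j|^2=2m^2$ and $t_j\in\{\sqrt{c/2}\,(\pm1\pm\im)\}$ with $c=\frac{1-e^{-m^2}}{1+e^{-m^2}}$. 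In particular $|t_j|^2=c$ for all $j$, so by \eqref{eq:encodeMapping} every symbol of a fixed cell $\cell_i$ has the same normalization $(1+(T-1)c)^{-1/2}$, a $1$ in coordinate $i$, and entries of modulus $\sqrt c$ elsewhere. I would record here that $0<c<1$ and, most importantly, that $c<1/4$; the latter is equivalent to $m^2<\log(5/3)\approx0.51$ and holds since $\Nc^{-1}(3/4)\approx0.674$.

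\emph{Step 2 (same cell).} For distinct $\cv^{(1)},\cv^{(2)}\in\cell_i$ the matched $1$'s contribute a term $1$, and a one-line computation gives $\overline{t_k^{(1)}}t_k^{(2)}\in\{c,-c,\pm c\im\}$, equal to $c$ iff $t_k^{(1)}=t_k^{(2)}$. Letting $p,q,r_+,r_-\ge0$ count the indices $k\in[T-1]$ where the two symbols agree, differ in both real and imaginary parts, or differ in exactly one part (yielding phase $+\im$, resp.\ $-\im$), with $p+q+r_++r_-=T-1$ and $p\le T-2$, one obtains
\[
\cv^{(1)\H}\cv^{(2)}=\frac{(1+(p-q)c)+(r_+-r_-)\,c\,\im}{1+(T-1)c}.
\]
Maximizing $|\cv^{(1)\H}\cv^{(2)}|^2$ then becomes the discrete optimization of $(1+(p-q)c)^2+(r_+-r_-)^2c^2$. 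I would argue that taking $r_-=0$, and for fixed $p+q$ taking $p$ maximal (hence $q=0$), are optimal, which reduces the objective to the convex function $g(a)=(1+(T-1-a)c)^2+a^2c^2$ of $a=r_+\in\{0,\dots,T-1\}$; the value $a=0$ corresponds to identical symbols and must be discarded (the other $a=0$ configurations give at most $(1+(T-3)c)^2<g(1)$), and over $a\in\{1,\dots,T-1\}$ convexity together with the identity $g(1)-g(T-1)=2(T-2)c(1-c)\ge0$ places the maximum at $a=1$. This yields $|\cv^{(1)\H}\cv^{(2)}|^2\le\frac{(1+(T-2)c)^2+c^2}{(1+(T-1)c)^2}$, attained only by the pair that differs in exactly one local coordinate, which is precisely the pair in \eqref{eq:mindist_coor_1}--\eqref{eq:mindist_coor_2}; rearranging the bound gives the stated value of $d_{\min}(T,1)$.

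\emph{Step 3 (different cells) and conclusion.} For $\cv\in\cell_i$, $\dv\in\cell_j$ with $i\ne j$, the un-normalized representatives carry a $1$ in coordinates $i$ and $j$ respectively and entries of modulus $\sqrt c$ elsewhere, so the triangle inequality (two modulus-$\sqrt c$ terms from the mismatched $1$'s and $T-2$ modulus-$c$ terms) gives $|\cv^\H\dv|\le\frac{2\sqrt c+(T-2)c}{1+(T-1)c}$. It then remains to verify $(2\sqrt c+(T-2)c)^2\le(1+(T-2)c)^2+c^2$, i.e.\ $(1-4c+c^2)+2(T-2)c(1-2\sqrt c)\ge0$, which holds for all $T\ge2$ exactly because $c<1/4$ makes both summands nonnegative. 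Hence every cross-cell distance is at least the same-cell minimum of Step 2, so the minimum distance of $CS(T,1)$ equals $d_{\min}(T,1)$ and is achieved only by same-cell pairs of the stated form.

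The step I expect to be the main obstacle is the cross-cell estimate: the triangle-inequality bound is crude and only barely suffices (the slack vanishes as $\sqrt c\to1/2$), so the argument genuinely leans on the numerical inequality $c<1/4$, i.e.\ on a sharp enough estimate of $\Nc^{-1}(3/4)$. A secondary delicate point is the \emph{uniqueness} claim in Step 2 --- verifying that $g$ is strictly decreasing at $a=1$ (which uses $c<1/2$) and that all $a=0$ configurations are strictly worse, so that no pair other than a one-coordinate difference attains the minimum distance.
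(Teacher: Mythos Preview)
Your proof is correct. The same-cell analysis is essentially the paper's argument recast in slightly cleaner language: both reduce to a discrete optimization over the counts $(n_1,n_{-1},n_\im,n_{-\im})=(p,q,r_+,r_-)$ of how the QAM coordinates (dis)agree, and both locate the optimum at ``exactly one coordinate differs by a phase of $\pm\im$''. Your use of convexity of $g(a)$ together with $g(1)-g(T-1)=2(T-2)c(1-c)\ge 0$ is a tidy way to pin down the endpoint, where the paper instead inspects the cases $\{n_{-1}=0\}$ vs.\ $\{n_1=0,\ n_{-1}\ge c^{-1}\}$ directly.

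The cross-cell step is where you genuinely diverge from the paper. The paper computes the \emph{exact} cross-cell maximum by casing on the value of $r_{\bar{\imath}}^*+\bar r_i\in\{0,\pm\sqrt2,\pm\im\sqrt2,\pm\sqrt2\pm\im\sqrt2\}$ and then optimizing the remaining $T-2$ terms, obtaining $(c^{-1}+T-1)^{-2}[(\sqrt{2c^{-1}}+T-2)^2+2c^{-1}]$; only then does it compare with the same-cell maximum. You bypass this by the crude triangle-inequality bound $|\cv^\H\dv|\le \frac{2\sqrt c+(T-2)c}{1+(T-1)c}$ and the numerical fact $c<1/4$, which makes both summands in $(1-4c+c^2)+2(T-2)c(1-2\sqrt c)$ nonnegative. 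This buys brevity at the cost of a tighter numerical hypothesis: the paper's comparison needs only $c<2-\sqrt3\approx0.268$ (and indeed for $T=2$ your bound and the paper's exact value coincide), whereas your argument needs $c<1/4=0.25$. Since $c\approx0.224$, both go through; your self-identified ``main obstacle'' is real but not fatal.

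Two small remarks. First, your claim that the strict decrease $g'(1)<0$ ``uses $c<1/2$'' is overcautious: $g'(1)=-2c(1-(3-T)c)$ is negative for all $T\ge 2$ whenever $c<1$. Second, the constraints \eqref{eq:mindist_coor_2} are vacuous when $B_0=1$ (every coordinate lies in $\{1/4,3/4\}$), so your uniqueness argument --- that the only maximizing configuration after the reductions has $q=r_-=0$, $r_+=1$ (or its mirror $r_+=0$, $r_-=1$) --- indeed recovers exactly the stated pair.
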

\begin{proof}
	The proof is given in Appendix~\ref{proof:mindist_CS(T,1)}.
\end{proof}

We plot the minimum distance of the $CS(T,1)$ constellation as a function of $T$ in Fig.~\ref{fig:mindist}{(b)}. For the cases where we can compute the minimum distance of the Fourier constellation and the exp-map constellation of the same size, we notice that these constellations have smaller minimum distance than that of our constellation, especially for small $T$.}

In general, the conjectured minimum distance over the constellation, i.e., the distance between $\cv_1$ and $\cv_2$ defined by \eqref{eq:mindist_coor_1} and \eqref{eq:mindist_coor_2}, is given by
%
	\begin{align} \label{eq:mindistance}
	{\tilde{d}(T,B_0)} \defeq \sqrt{1-\Big|1+\frac{\alpha}{\alpha+\beta}(e^{\im 2 \varphi}-1)\Big|^2},
	\end{align}
	where $\alpha \defeq \frac{1-\exp\big(-\frac{m_0^2+m_1^2}{2}\big)}{1+\exp\big(-\frac{m_0^2+m_1^2}{2}\big)}$, $\beta \defeq 1+(T - 2)\frac{1-e^{-m_0^2}}{1+e^{-m_0^2}}$, and $\varphi \defeq \arctan\left(\frac{m_1}{m_0}\right)$,
	with $m_0 \defeq \Nc^{-1}\left(2^{-B_0-1}\right)$ and $m_1 \defeq \Nc^{-1}\left(\frac{1}{2}+2^{-B_0-1}\right)$.
\begin{lemma}
When $|\Cc|$ is large, it holds that
\begin{multline}
\log_2\big({\tilde{d}(T,B_0)}\big) \\= -\frac{1}{2(T-1)}\log_2(|\Cc|)-\frac12\log_2\log_2(|\Cc|) + O(1).
\end{multline}
\end{lemma}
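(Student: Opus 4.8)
The plan is to read off the asymptotics of $\tilde d(T,B_0)$ directly from the closed form \eqref{eq:mindistance} as $B_0\to\infty$, and then translate $B_0$ into $|\Cc|$. For the symmetric $CS(T,B_0)$ constellation the size is $|\Cc|=T\,2^{2(T-1)B_0}$, so $B_0 = \bigl(\log_2|\Cc|-\log_2 T\bigr)/\bigl(2(T-1)\bigr)$; in particular $B_0\to\infty$ iff $|\Cc|\to\infty$, $B_0=\Theta(\log_2|\Cc|)$, and hence $\log_2 B_0 = \log_2\log_2|\Cc| + O(1)$. So it suffices to prove $\log_2\tilde d(T,B_0) = -B_0 - \tfrac12\log_2 B_0 + O(1)$.

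The first step I would take is to simplify \eqref{eq:mindistance}. Writing $\lambda\defeq\alpha/(\alpha+\beta)\in(0,1)$, the bracketed term equals the convex combination $(1-\lambda)\cdot 1+\lambda e^{\im 2\varphi}$ of two unit-modulus points, so its squared modulus is $1-4\lambda(1-\lambda)\sin^2\varphi$, which gives the exact identity $\tilde d(T,B_0)^2 = \tfrac{4\alpha\beta}{(\alpha+\beta)^2}\sin^2\varphi$. Since $m_0<0<m_1$ and $\varphi=\arctan(m_1/m_0)$, we have $|\sin\varphi| = m_1/\sqrt{m_0^2+m_1^2}$, hence $\tilde d(T,B_0) = \tfrac{2\sqrt{\alpha\beta}}{\alpha+\beta}\cdot m_1/\sqrt{m_0^2+m_1^2}$. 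Everything then reduces to the asymptotics of the four scalars $\alpha,\beta,m_0,m_1$.

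Next I would dispose of the easy factors. As $B_0\to\infty$, $m_1=\Nc^{-1}(\tfrac12+2^{-B_0-1})\to0$ and $m_0=\Nc^{-1}(2^{-B_0-1})\to-\infty$, so $\alpha\to1$ and $\beta\to T-1$, making the prefactor $\tfrac{2\sqrt{\alpha\beta}}{\alpha+\beta}\to\tfrac{2\sqrt{T-1}}{T}$ a positive constant that contributes only $O(1)$ to $\log_2\tilde d$. A first-order Taylor expansion of $\Nc^{-1}$ at $\tfrac12$, using $(\Nc^{-1})'(\tfrac12)=\sqrt{2\pi}$, gives $m_1 = \sqrt{2\pi}\,2^{-B_0-1}(1+o(1))$, i.e. $\log_2 m_1 = -B_0 + O(1)$.

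The hard part — and the only place precision matters — is the Gaussian-tail estimate of $|m_0|$, the positive root of $1-\Nc(|m_0|)=2^{-B_0-1}$. Here I would invoke the Mills-ratio asymptotics $1-\Nc(x)=\tfrac{e^{-x^2/2}}{x\sqrt{2\pi}}\bigl(1+O(x^{-2})\bigr)$ as $x\to\infty$; taking logarithms yields $\tfrac12 m_0^2 = (B_0+1)\ln2 - \ln|m_0| - \tfrac12\ln(2\pi) + o(1)$, which already forces $|m_0|=\Theta(\sqrt{B_0})$, hence $\ln|m_0|=O(\log B_0)$, and therefore $m_0^2 = 2B_0\ln2\,(1+o(1))$ and $\log_2|m_0| = \tfrac12\log_2 B_0 + O(1)$. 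Since $m_1\to0$, also $\log_2\sqrt{m_0^2+m_1^2}=\log_2|m_0|+o(1)$. Combining, $\log_2\tilde d(T,B_0) = \log_2 m_1 - \log_2\sqrt{m_0^2+m_1^2} + O(1) = -B_0 - \tfrac12\log_2 B_0 + O(1)$, and substituting the expression for $B_0$ in terms of $|\Cc|$ completes the proof. The risk in this argument is entirely in getting the tail-quantile expansion sharp enough to isolate the $-\tfrac12\log_2\log_2|\Cc|$ term; once the $\sqrt{2B_0\ln2}$ leading order of $|m_0|$ is established, everything else is routine and is absorbed into the $O(1)$.
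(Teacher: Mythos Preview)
Your proof is correct and follows essentially the same overall strategy as the paper: reduce $\tilde d(T,B_0)$ to the asymptotics of $m_0$ and $m_1$, invoke the Taylor expansion of $\Nc^{-1}$ at $\tfrac12$ for $m_1$ and the Gaussian tail (Mills ratio) for $|m_0|$, conclude $\log_2\tilde d=-B_0-\tfrac12\log_2 B_0+O(1)$, and then substitute $B_0$ in terms of $|\Cc|$.

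The one noteworthy difference is your algebraic simplification. The paper expands \eqref{eq:mindistance} by brute-force Taylor expansion of $\cos(2\varphi)$ and $\sin(2\varphi)$ around $\varphi=0$ to arrive at $\tilde d=\tfrac{2\sqrt{T-1}}{T}|\varphi|+o(|\varphi|)$. You instead observe that $1+\lambda(e^{\im 2\varphi}-1)$ is a convex combination of $1$ and $e^{\im 2\varphi}$, giving the exact identity $\tilde d(T,B_0)^2=\tfrac{4\alpha\beta}{(\alpha+\beta)^2}\sin^2\varphi$, and then use $|\sin\varphi|=m_1/\sqrt{m_0^2+m_1^2}$ exactly. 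This is cleaner: it postpones all approximation to the scalar asymptotics of $\alpha,\beta,m_0,m_1$, and makes transparent why the prefactor $\tfrac{2\sqrt{\alpha\beta}}{\alpha+\beta}$ contributes only $O(1)$. The paper's expansion gets to the same place but carries $o(\varphi^2)$ error terms through an intermediate square-root, which is slightly more delicate to justify. Both approaches rely on the same tail-quantile estimate $|m_0|=\sqrt{2B_0\ln 2}\,(1+o(1))$, which you correctly identify as the crux.
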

\begin{proof}
First, it is straightforward to see that, when $|\Cc|$ is large, i.e. $B_0$ is large, $m_0$ goes to $-\infty$ and $m_1$ goes to $0$ and it follows that $\alpha$ goes to $1$, $\beta$ goes to $T-1$, and $\varphi = \frac{m_1}{m_0} + o\left(\frac{m_1}{m_0}\right)$.
Then
\begin{align}
\lefteqn{\tilde{d}(T,B_0)
}\notag \\ 
 &= \sqrt{\frac{2\alpha}{\alpha+\beta}(1-\cos(2\varphi)-\sin(2\varphi)) - \Big(\frac{\alpha}{\alpha+\beta}\Big)^2{|e^{\im 2\varphi}-1|}^2}\\
&= \sqrt{\frac{4}{T}\varphi^2 - \frac{4}{T^2}\varphi^2+o(\varphi^2)}
= 2\frac{\sqrt{T-1}}{T}|\varphi|+o(|\varphi|)\label{eq:2}.
\end{align}
On the other hand, using \cite[Sec.26.2]{abramovitz}, it follows that $m_1 = \sqrt{2\pi}2^{-B_0-1}+o( 2^{-B_0})$ and $|m_0| = \sqrt{2\log(2^{B_0+1})}+o\big( \sqrt{2\log(2^{B_0+1})}\big)$.
Inserting this and $\varphi = \frac{m_1}{m_0} + o\big(\frac{m_1}{m_0}\big)$ into \eqref{eq:2} gives
\begin{align}
\log_2(\tilde{d}(T,B_0)) &= \log_2(m_1) - \log_2(|m_0|) + O(1) \\
&= -B_0 - \frac12 \log_2(B_0) + O(1),
\end{align}
which yields the result.
\end{proof}
Therefore, under the conjecture that $\tilde{d}(T,B_0)$ is the minimum distance of the $CS(T,B_0)$ constellation, the constellation is asymptotically optimal w.r.t. the bounds in Lemma~\ref{lemma:mindistBounds} up to a log factor and a constant. In Fig.~\ref{fig:mindist_stem}, we plot the spectrum of the symbol-wise minimum distance, i.e., the distance from each symbol to its nearest neighbor, of $CS(2,4)$ and $CS(4,1)$. The symbol-wise minimum distances are concentrated and compare well to the bounds. 
Every symbol in $CS(4,1)$ has the same distance to its nearest neighbor, which is ${d}_{\min}(4,1)$. {This property holds for any $CS(T,1)$ constellation because for any $\xv_1 \in CS(T,1)$, there exists other symbol $\xv_2$ in the same cell with coordinates satisfying \eqref{eq:mindist_coor_1} and \eqref{eq:mindist_coor_2} for $B_0 = 1$, thus $d(\xv_1,\xv_2) = d_{\min}(T,1)$.}
\begin{figure}[!h] 
	\centering
	\subfigure[the $CS(2,4)$ constellation ($512$ symbols)]{
		\includegraphics[width=.23\textwidth]{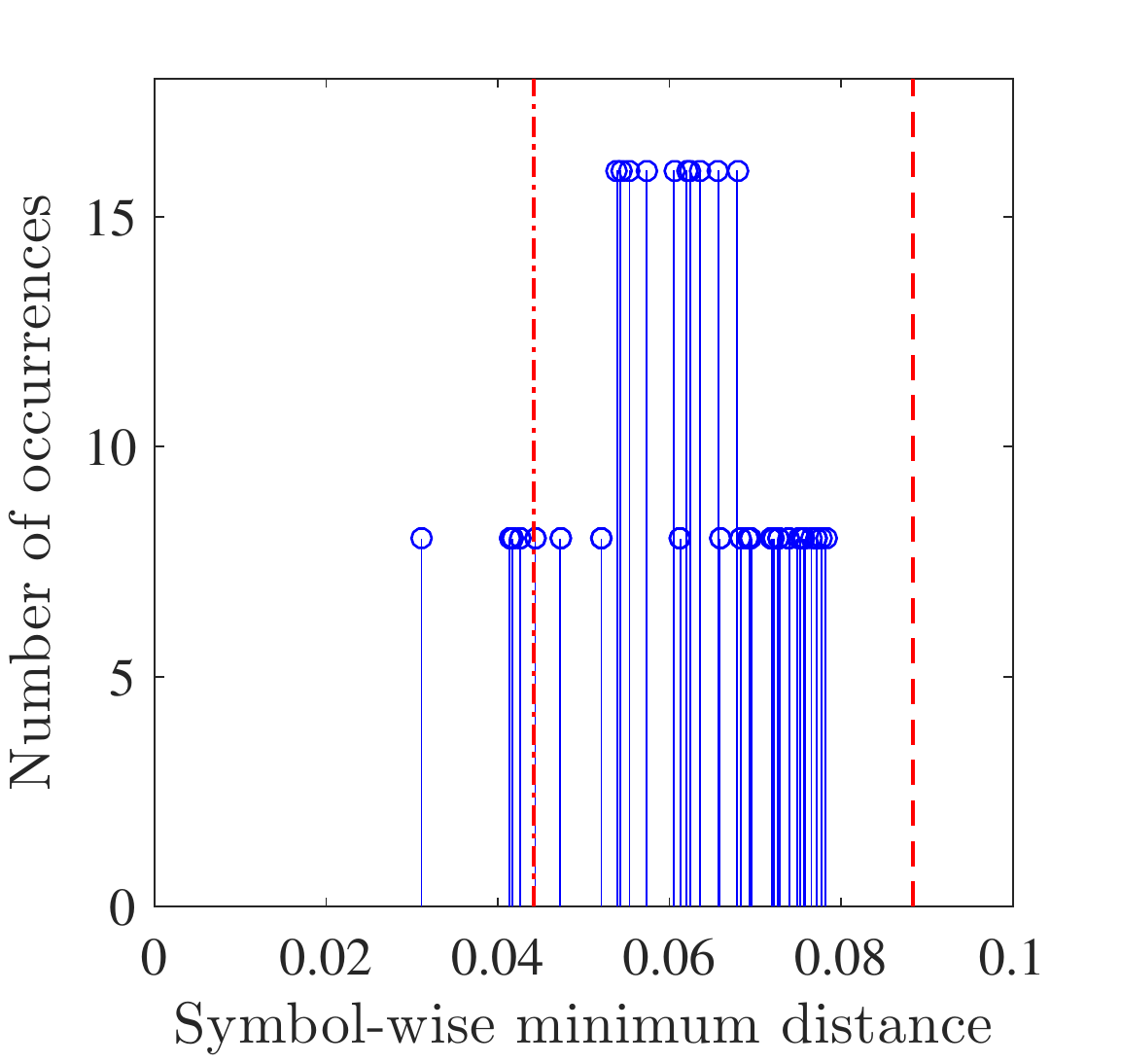}}
	\subfigure[the $CS(4,1)$ constellation ($256$ symbols)]{
		\includegraphics[width=.23\textwidth]{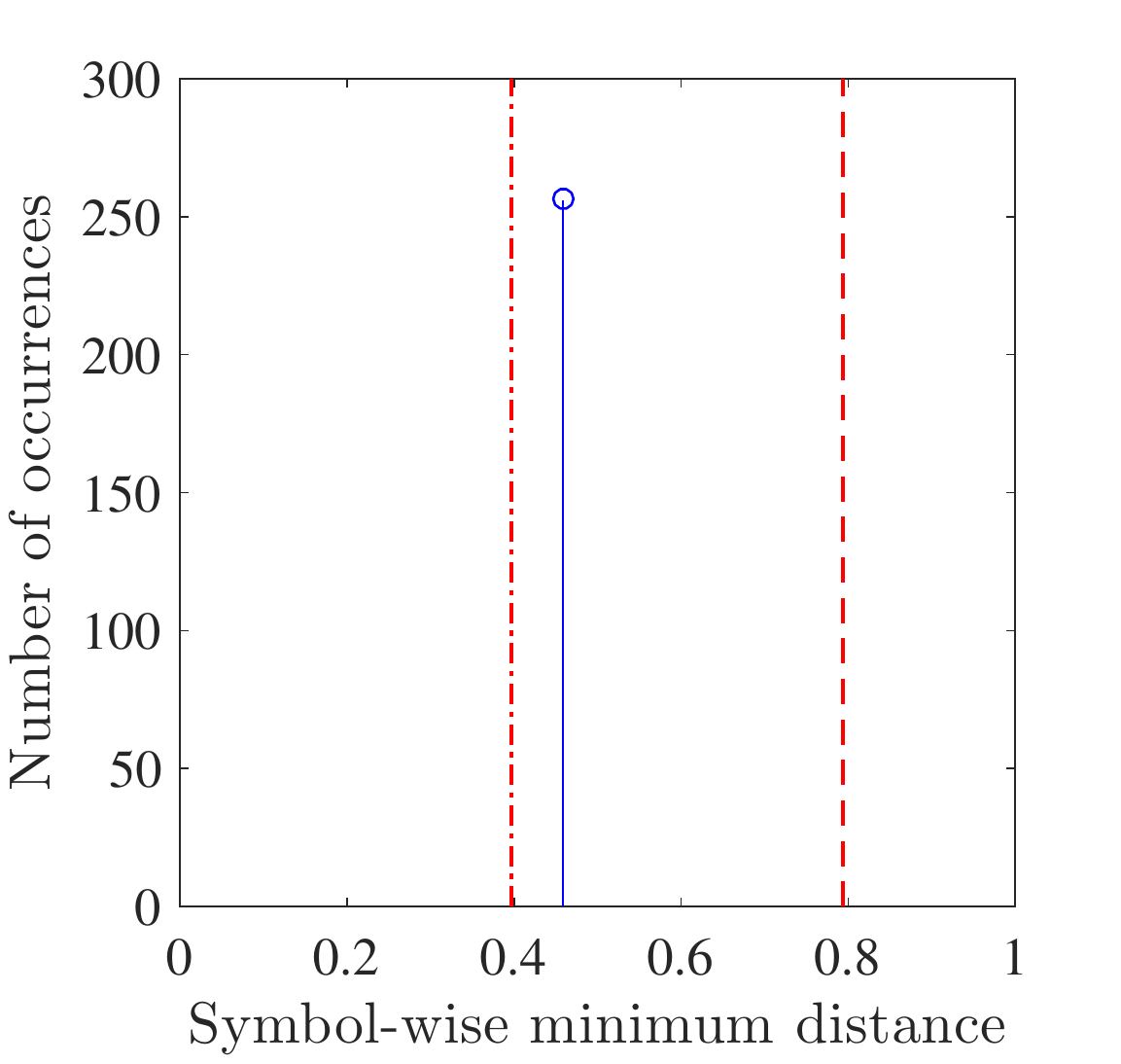}}
	\caption{The symbol-wise minimum distance spectrum of the $CS(2,4)$ and $CS(4,1)$ constellations. The dashed and dash-dotted lines are respectively the upper and lower bounds \eqref{eq:minDistBounds} of the minimum distance of an optimal constellation of the same size.}
	\label{fig:mindist_stem}
\end{figure} 

\subsection{Binary Labeling} \label{sec:labeling}
Another important aspect of designing a constellation is to label each symbol with a binary vectors. {If $T$ is a power of $2$, the number of bits $B$ required to represent a symbol is
\begin{equation}
B = \log_2(T)+ \sum_{j=1}^{2(T-1)}B_j.
\end{equation} 
This can be understood as a hierarchical labeling, where $T$ bits indicate the cell index, and the remaining bits indicate the local coordinates of a symbol.
If $B - \log_2 T$ is not divisible by $2(T-1)$, the sets $A_j, j \in [2(T-1)]$ cannot be chosen to have equal size. In that case, 
we can initially let $B_1 = \dots = B_{2(T-1)} = \big\lfloor\frac{B- \log_2(T) }{2(T-1)}\big\rfloor$, then allocate one more bit to each of $B - \log_2(T) - 2(T-1)\big\lfloor\frac{B - \log_2(T) }{2(T-1)}\big\rfloor$ randomly selected dimensions.} 
If $T$ is not a power of $2$, the constellation size is also not a power of $2$, which does not support a convenient bit mapping. Although we can manipulate (e.g., augment or truncate) the constellation s.t. the size becomes a power of $2$, this alters the constellation structure. In the remainder of the paper, we focus on the case $T$ being a power of $2$ whenever the bit mapping is concerned.}

The binary labels should be assigned s.t. a symbol error does not cause many bit errors. This requires that symbols which are likely to be mistaken for each other should differ by a minimal number of bits in their labels. In other words, symbols with small {(chordal or Euclidean)} distance are given labels with small Hamming distance. This is the principle of Gray labeling which was shown to be optimal in terms of average bit error probability for structured scalar constellations such as PSK, pulse amplitude modulation~(PAM), and QAM~\cite{Agrell2004ITGraycode}. 
Ideally, a Gray labeling scheme gives the neighboring symbols labels that differ by exactly {\em one} bit. As shown in \cite[Thm.~1]{Nghi2007USTM_iterativeDecoding}, this is possible for a special case of the Grassmannian constellation in~\cite{ZhaoTIT2004orthogonalDesign}. Nevertheless, this is rarely the case in general due to the irregular neighboring properties. When a true Gray labeling is not possible, finding a quasi-Gray one requires an exhaustive search over $|\Cc|!$ candidate labelings. Therefore, one often resorts to sub-optimal labeling schemes.

An iterative labeling scheme consisting in propagating the labels along the edges of the neighboring graph was proposed in~\cite{Baluja2013neighborhood}. Unfortunately, building and storing such a graph is possible only for constellations of low dimension and small size. 
In~\cite{Colman2011quasi-gray}, two matching methods to label a Grassmannian constellation, say $\Cc$, are proposed. The first, so-called match-and-label algorithm, matches $\Cc$ to an auxiliary constellation which can be Gray labeled. The second, so-called successive matching algorithm, matches the chordal distance spectrum of $\Cc$ with the Hamming distance spectrum of an auxiliary Gray label. 
However, these three schemes still have complexity at least cubic in the constellation size and
do not offer any optimality guarantee. 

For our cube-split constellation, we introduce a simple yet effective and efficient Gray-like labeling scheme by exploiting the constellation structure. Recall that the number of bits per symbol is $B = \log_2(T)+ \sum_{j=1}^{2(T-1)}B_j$, and a symbol is entirely determined by the cell index $i$ and the set of local coordinates $\{a_1,\dots,a_{2(T-1)}\}$. Our labeling scheme works as follows.
\begin{itemize} [leftmargin=*]
	\item We let the first $\log_2(T)$ bits represent the cell index $i$ and denote them by {\em cell bits}. These bits are defined simply as the binary representation of $i-1$. Note that no optimization of the labels of the cell index is possible since each cell have common boundaries with all other cells, as can be seen in Fig.~\ref{fig:cubesplit}. 
	\item We let each of the next groups of $B_j$ bits represent the local coordinate $a_j \!\in\! A_j$ and denote them by {\em coordinate bits}. {Since $A_j$ is a set of points in one dimension, there always exist Gray labels associated to its elements, in the same manner as a Gray labeling for a PAM constellation. 
	For example, if $B_j = 3$ (as in Fig.~\ref{fig:cubesplit}), $A_j = \big\{\frac{1}{16},\frac{3}{16},\frac{5}{16},\dots,\frac{15}{16}\big\}$ and one option to Gray label it is $\{000,001,011,010,110,111,101,100\}$, as illustrated in Fig.~\ref{fig:Ai_visualization}.}
	\begin{figure} [h]
		\centering
		\includegraphics[width=.42\textwidth]{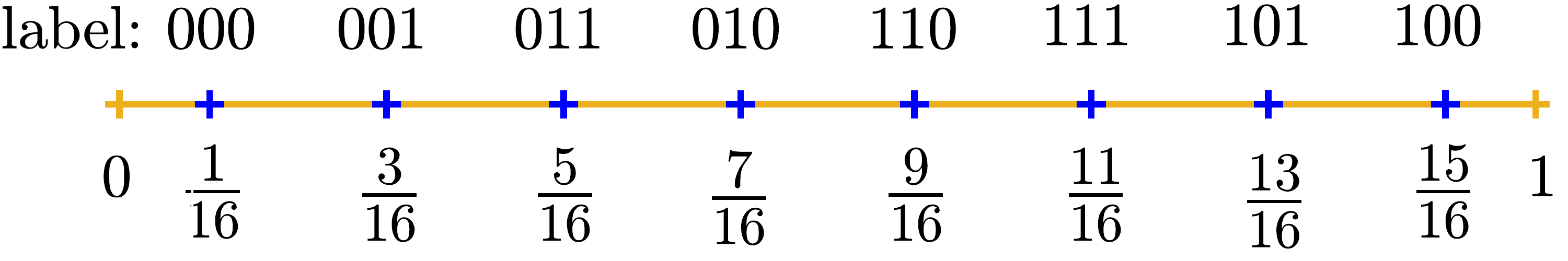}
		\caption{A visualization of a Gray labeling for the coordinate set $A_j$ with $B_j = 3$.}
		\label{fig:Ai_visualization}
	\end{figure}
\end{itemize}
Thanks to the structured constellation design and labeling scheme outlined above, the complexity of generating a constellation symbol $\xv$ from its binary representation is essentially linear in $T$.
It can be done on-the-fly and requires no storage.

In Fig.~\ref{fig:BER_labels}, we compare the performance in terms of bit error rate of this Gray-like labeling scheme with random labeling, graph propagation labeling~\cite{Baluja2013neighborhood}, match-and-label labeling and successive matching labeling~\cite{Colman2011quasi-gray}. For the match-and-label scheme, we use the exp-map constellation~\cite{Kammoun2007noncoherentCodes} as the auxiliary constellation. This constellation is mapped from coherent symbols $\qv \in \CC^{T-1}$ containing $T-1$ QAM symbols, and thus can be quasi-Gray labeled by taking the Gray label of $\qv$. 
It can be seen that for the considered $CS(2,4)$ and $CS(4,1)$ constellations, our Gray-like labeling scheme, albeit being simpler, outperforms the other considered schemes. 
\begin{figure}[!h] 
	\centering
	\subfigure[$T = 2, B_0 = 4$ ($512$ symbols)]{
		\includegraphics[width=.48\textwidth]{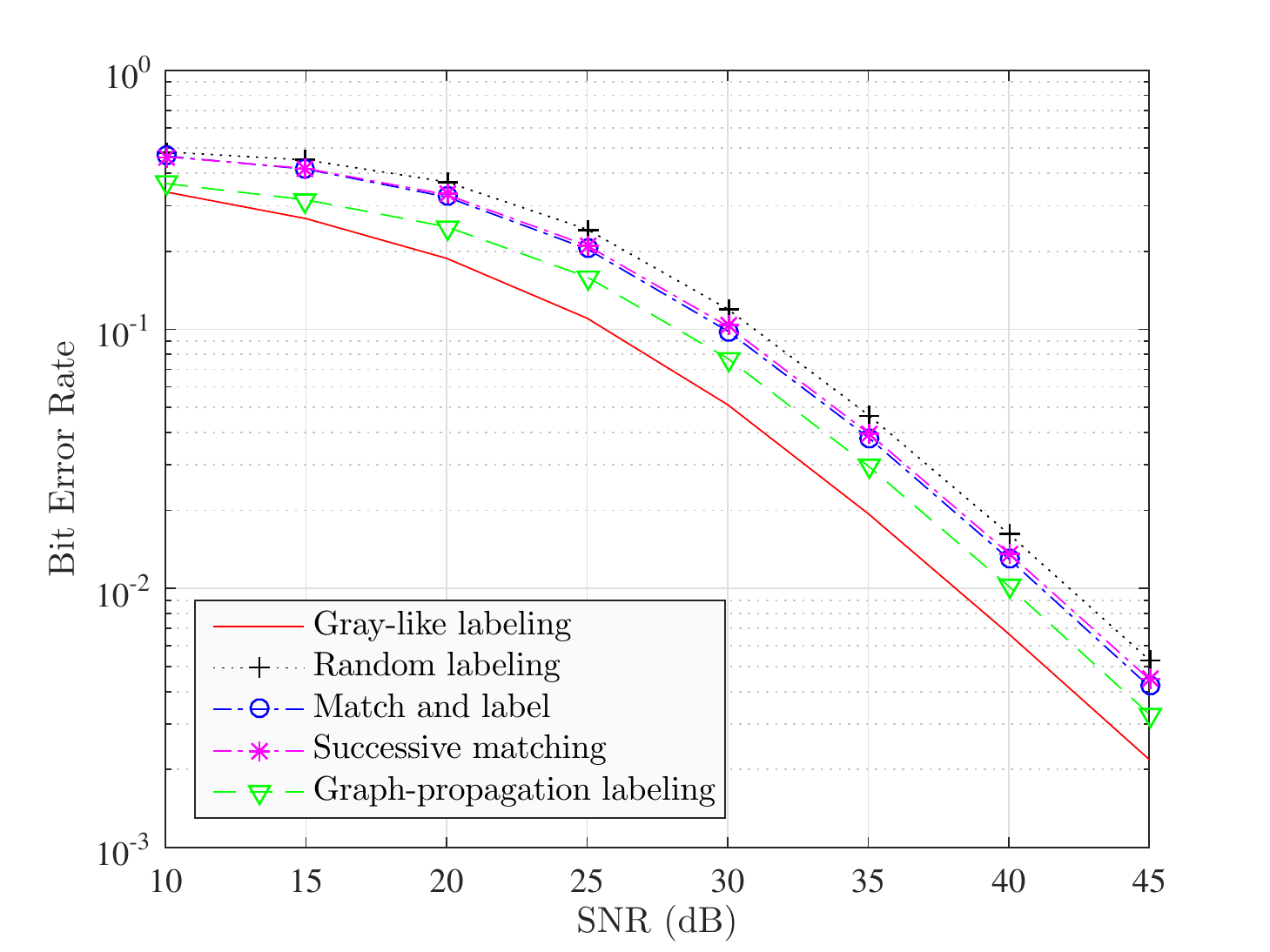}}
	\subfigure[$T = 4, B_0 = 1$ ($256$ symbols)]{
		\includegraphics[width=.48\textwidth]{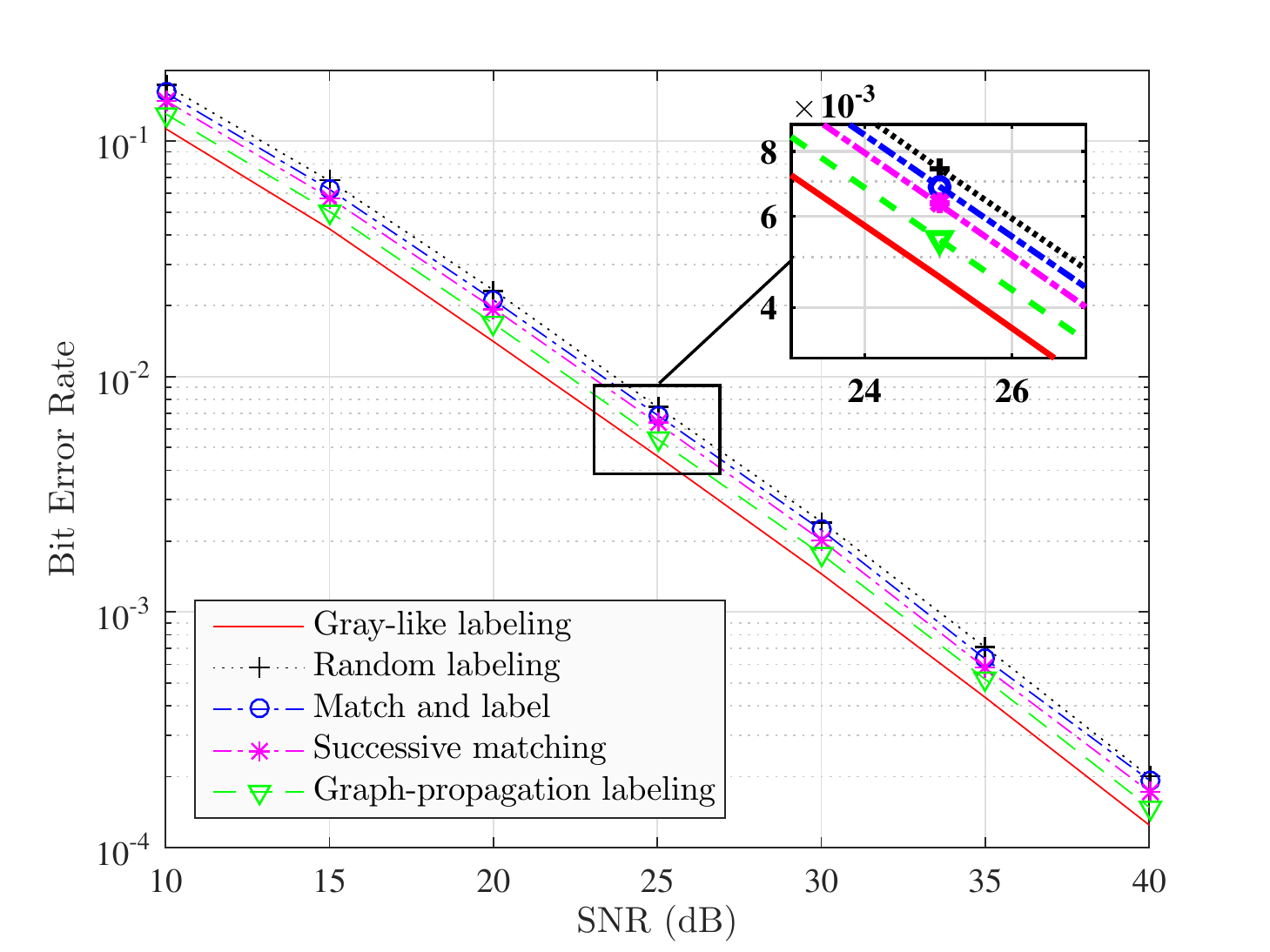}}
	\caption{The bit error rate of the cube-split constellation with ML decoder and different labeling schemes in a single-receive-antenna system. The proposed Gray-like labeling outperforms the other schemes.}
	\label{fig:BER_labels}
\end{figure}

\section{Low-Complexity Receiver Design} \label{sec:decoder}
In this section, leveraging the constellation structure, we design efficient symbol decoder and LLR computation from the observation $\rvMat{Y}$.
\subsection{Low-Complexity Greedy Decoder}
In order to avoid the high-complexity ML decoder, we propose to decode the symbol in a greedy manner by estimating sequentially the cell index $i$ and the local coordinates $\rvVec{a}$. 

\subsubsection{Step 1 - Denoising} We first use the fact that, by construction, the signal of interest is supported by a rank-$1$ component of $\rvMat{Y}$ (see \eqref{eq:channelModel}). We compute the left singular vector $\rvVec{u} = [\rv{u}_1 \ \rv{u}_2 \dots \rv{u}_T]^\T$ corresponding to the largest singular value of $\rvMat{Y}$, which is also the solution of 
\begin{equation} \label{eq:relaxed_ML}
\arg\max_{\uv \in \CC^T: \ \|\uv\|^2 = 1} \|\rvMat{Y}^\H \uv\|^2.
\end{equation}
Observe that this is a relaxed version of the ML decoder~\eqref{eq:MLoneUser} if we disregard the discrete nature of the constellation. Thus, $\rvVec{u}$ is a rough estimate of the transmitted symbol $\rvVec{x}$ on the unit sphere. 

\subsubsection{Step 2 - Estimating the Cell Index and the Local Coordinates} We then find the closest symbol to $\rvVec{u}$ by localizing $\rvVec{u}$ on the system of bent grids defined for the constellation. 
To do so, we estimate the cell index and the local coordinates. The cell index estimate is obtained as
\begin{equation}
\hat{i} = \arg\min_{j\in[T]} d(\rvVec{u},\ev_j) = \arg\max_{j\in[T]}|\rv{u}_j|.
\end{equation}
In the noise-free case, $\rvMat{Y} = \sqrt{\rho T} \rvVec{x} \rvVec{h}^\T$ has rank one and $\rvVec{u} = e^{\im \theta}\rvVec{x}$ for some $\theta \in [0,2\pi]$, thus $\hat{i} = i$ since $\rv{x}_{i}$ is the strongest component in $\rvVec{x}$ by construction (see~\eqref{eq:encodeMapping}). {The local coordinates are estimated by first applying the inverse mapping  $\gv_{\hat{i}}^{-1}$ (see~\eqref{eq:w} and \eqref{eq:t}) to obtain $\tilde{\rvVec{a}} = [\tilde{\rv{a}}_1 \ \dots \ \tilde{\rv{a}}_{2T-2}]^\T = \gv_{\hat{i}}^{-1}(\rvVec{u})$ and then find the closest point to $\tilde{\rvVec{a}}$ in $\bigotimes_{j=1}^{2(T-1)} A_j$. This can be done component-wise as}
\begin{equation}
 \hat{\rv{a}}_{j} = \arg\min_{{a} \in A_{j}} |\tilde{\rv{a}}_j-{a}|, ~~  j \in [2(T-1)]. 
\end{equation}
Again, in the absence of noise, $\rvVec{u} = e^{\im \theta} \rvVec{x}$, $\hat{i} = i$, and thus $\hat{\rvVec{a}} = \rvVec{a} = \gv_{i}^{-1}(\rvVec{x})$. 

The decoded symbol $\hat{\rvVec{x}}$ is then identified from the estimated parameters $\{\hat{i},\hat{\rvVec{a}}\}$ as $\hat{\rvVec{x}} = \gv_{\hat{i}}(\hat{\rvVec{a}})$. The cell bits are decoded by taking the binary representation of $\hat{i}-1$. The coordinate bits are demapped from $\hat{\rv{a}}_i$ using the Gray code defined for $A_j$, independently for each real component $j \in [2(T-1)]$.  
We observe that the decision regions of the proposed greedy decoder are close to the Voronoi regions~\eqref{eq:VoronoiRegions}, which are the optimal decision regions, as depicted in Fig.~\ref{fig:decisionRegions} for the constellation shown in Fig.~\ref{fig:cubesplit}{(b)} and $N=1$.
\begin{figure}[!h] 
	\centering
	\includegraphics[width=.48\textwidth,trim=1cm 3.7cm 1cm 1cm,clip=true]{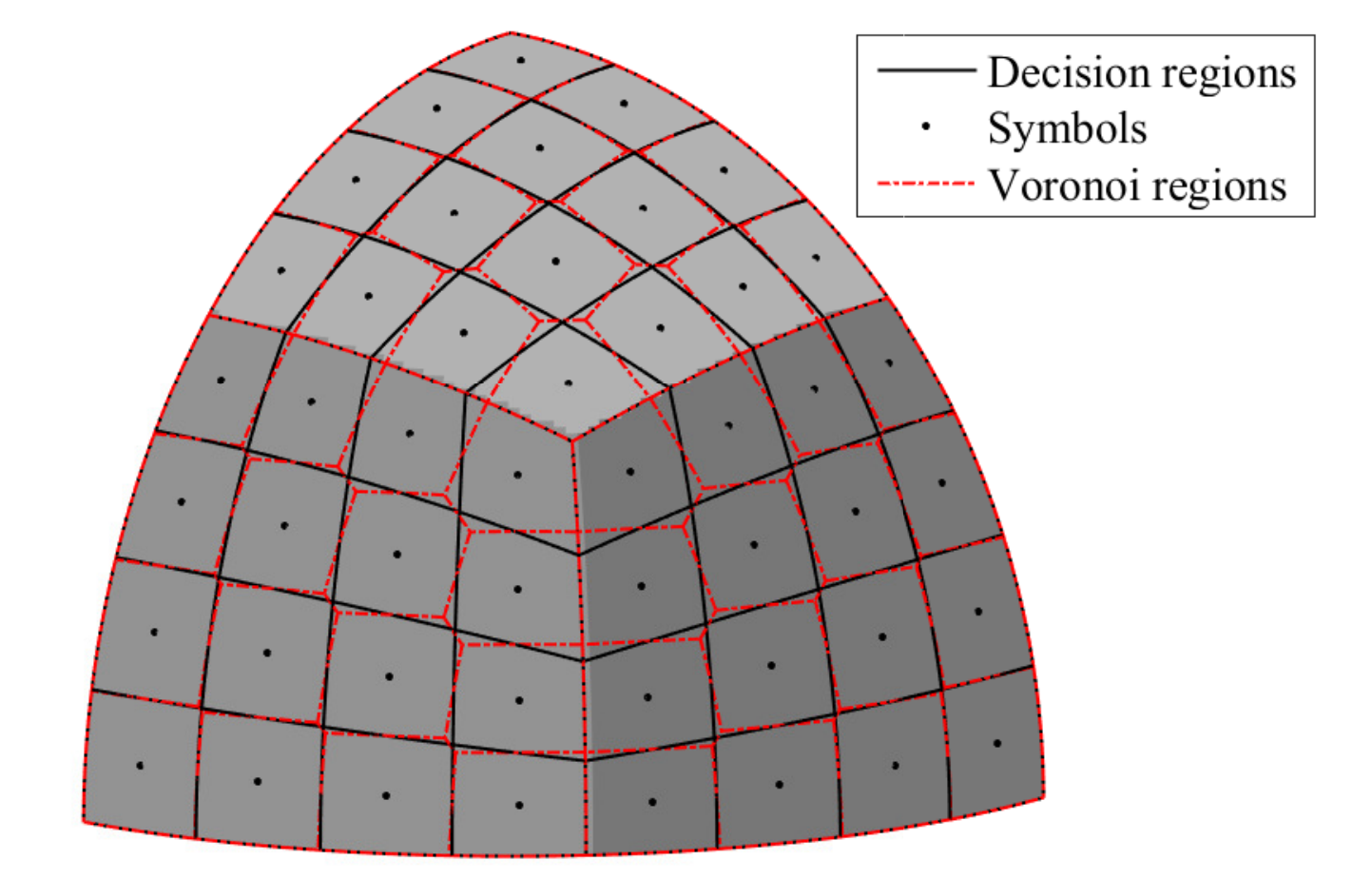}
	\caption{Illustration of the decision regions of the greedy decoder for a section (around the cell boundaries) of the cube-split constellation on $G(\RR^3,1)$ with $B_1 = B_2 = 3$ bits. These decision regions well match the Voronoi regions~\eqref{eq:VoronoiRegions}, which are the optimal decision regions.
	}
	\label{fig:decisionRegions}
\end{figure}

{The complexity of this greedy decoder is $O(NT\min\{N,T\})$ (dominated by the singular value decomposition of $\rvMat{Y}$ to find $\rvVec{u}$), independent of the constellation size. 
This is significantly reduced w.r.t the ML decoder whose complexity order is $O(NT|\Cc|)$. Among the other aforementioned structured constellations, only the exp-map constellation admits a simple sub-optimal decoder~\cite[Sec.~IV-B]{Kammoun2007noncoherentCodes}  
with equivalent complexity order $O(NT\min\{N,T\})$.}

\subsection{Demapping Error Analysis}
With the above greedy decoder, two types of error can occur. First, a {\em cell error} can occur due to false detection of the cell index $i$. The probability of cell error is
$
	\Pr\{\hat{i} \ne i\} = \Pr\big\{\arg\max_j |\rv{u}_j| \ne \arg\max_j |\rv{x}_j|\big\}.
$
Second, a {\em coordinate error} occurs when {at least} one of the local coordinates in $\rvVec{a}$ is wrongly detected. 
The probability of a coordinate error given correct cell detection is $\Pr\{\hat{\rvVec{a}} \ne \rvVec{a} \big| \hat{i} = i\}$.  
Then, the symbol error probability of the greedy decoder is
\begin{align}
P_e &= \Pr\{\hat{i} \ne i\} + (1-\Pr\{\hat{i} \ne i\})  \Pr\{\hat{\rvVec{a}} \ne \rvVec{a} \big| \hat{i} = i\}. 
\end{align} 
The error rate can be computed analytically for the $CS(T,1)$ constellation and $N=1$ as follows. 
\begin{proposition} \label{prop:Pe,b0=1}
	When $N=1$ and $B_1 = \dots = B_{2(T-1)} = 1$, the cell error probability is
	\begin{align} 
	\Pr\{\hat{i} \ne i\} = 1-&\int_{0}^{\infty} \int_{0}^{\infty} \left(1-Q_1(\sqrt{2c\rho_0 x},\sqrt{2y})\right)^{T-1} \\
	&\quad\cdot I_0(2\sqrt{\rho_0 x y}) e^{-y-(\rho_0+1)x} \dif y \dif x, \label{eq:SER_cell}
	\end{align}
	where $m \defeq \Nc^{-1}(3/4)$, $c \defeq \frac{1-e^{-m^2}}{1+e^{-m^2}}$, $\rho_0 \defeq \frac{\rho T}{1+(T-1)c}$, $I_0(x) \defeq \frac{1}{\pi} \int_{0}^{\pi} \exp\left(x \cos(\theta)\right)\dif \theta$ is the modified Bessel function of the first kind at order $0$~{\cite[Eq.(9.6.16)]{abramovitz}}, 
	and $Q_1(a,b) \defeq \int_{b}^{\infty} x \exp\left(-\frac{x^2+a^2}{2}\right) I_0(ax) \dif x$ is the Marcum Q-function~{\cite[Eq.(16)]{Marcum1960statistical}} with parameter $1$. 
	Given correct cell detection, the error probability of one pair of local coordinates is given in \eqref{eq:SER_coor_cond} for all $j \in [T-1]$.
	\begin{figure*}[!t]
		\begin{align} 
		&\Pr\big\{\{\hat{\rv{a}}_{2j-1},\hat{\rv{a}}_{2j}\} \ne \{\rv{a}_{2j-1},\rv{a}_{2j}\} \big|\ \hat{i} = i \big\}
		= \Pr\big\{\hat{\rv{q}}_j \ne \rv{q}_j \big|\hat{i} = i \big\}
	    \notag \\ 
		&= 1-\bigg(1+\frac{(1-c)\rho_0}{\sqrt{(2+(1+c)\rho_0)^2-4c\rho_0^2}}\bigg)^{-1} \Bigg(\frac{1}{4}+ \frac{\sqrt{2c}\rho_0\arccot\frac{1+(c-\sqrt{\frac{c}{2}})\rho_0}{\sqrt{1+(c+1)\rho_0+\frac{c}{2}\rho_0^2}}}{\pi\sqrt{1+(1+c)\rho_0+\frac{c}{2}\rho_0^2}} + \frac{(1-c)\rho_0\arccot\frac{2+(1-2\sqrt{2c}+c)\rho_0}{\sqrt{(2+(1+c)\rho_0)^2-4c\rho_0^2}}}{\pi\sqrt{(2+(1+c)\rho_0)^2-4c\rho_0^2}}\Bigg),\label{eq:SER_coor_cond}
		\end{align}
		\setlength{\arraycolsep}{1pt}
		\hrulefill \setlength{\arraycolsep}{0.0em}
		\vspace*{1pt}
	\end{figure*}
	The symbol error probability is then bounded by the union bound as
	\begin{equation} \label{eq:SER_coor}
	P_e \le \Pr\{\hat{i} \ne i\} + (T-1)\big(1-\Pr\{\hat{i} \ne i\}\big)\Pr\{\hat{\rv{q}}_j \ne \rv{q}_j \big|\hat{i} = i \}.
	\end{equation}
\end{proposition}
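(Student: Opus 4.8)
The plan is to reduce both probabilities to low-dimensional integrals by exploiting the rigidity of the $CS(T,1)$ constellation. The first step is to record what $B_1=\dots=B_{2(T-1)}=1$ forces: every coordinate lies in $A_j=\{1/4,3/4\}$ (see \eqref{eq:coor_set}), so $\Nc^{-1}(a_{2j-1}),\Nc^{-1}(a_{2j})\in\{-m,m\}$ and hence $|w_j|^2=2m^2$ for every $j$; by \eqref{eq:unif2cauchy} this gives $|t_j|^2=\tfrac{1-e^{-m^2}}{1+e^{-m^2}}=c$ for all $j$, and therefore, by \eqref{eq:encodeMapping}, every symbol of cell $i$ has $|x_i|^2=\tfrac{1}{1+(T-1)c}$ and $|x_k|^2=\tfrac{c}{1+(T-1)c}$ for $k\neq i$. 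In particular the magnitude profile is common to all symbols of a given cell, and since the construction and the greedy decoder are equivariant under coordinate permutations I would fix $i=1$. For $N=1$, \eqref{eq:relaxed_ML} returns $\rvVec{u}=\rvMat{Y}/\|\rvMat{Y}\|_F$, so $\hat i=\arg\max_k|\rv{u}_k|=\arg\max_k|\rv{y}_k|$ and the coordinate estimate is obtained from $\tilde{\rv{t}}_j=\rv{y}_{j+1}/\rv{y}_1$, where $\rv{y}_k$ denotes the $k$-th entry of $\rvMat{Y}$.

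For the cell-error probability I would condition on the channel $\rv{h}=h$. Given $h$ the entries $|\rv{y}_k|^2$ are independent; $|\rv{y}_1|^2$ has the non-central density $e^{-(s+\rho_0|h|^2)}I_0(2\sqrt{\rho_0|h|^2s})$ (because $\rho T|x_1|^2=\rho_0$), while each $|\rv{y}_k|^2$, $k\ge2$, has the same density with parameter $\rho T|x_k|^2=c\rho_0$. Expressing the latter CDF through the Marcum function, $\Pr\{|\rv{y}_k|^2<s\mid h\}=1-Q_1(\sqrt{2c\rho_0|h|^2},\sqrt{2s})$, and using conditional independence, $\Pr\{\hat i=1\mid h,\,|\rv{y}_1|^2=s\}=\bigl(1-Q_1(\sqrt{2c\rho_0|h|^2},\sqrt{2s})\bigr)^{T-1}$. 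Integrating out $|\rv{y}_1|^2$ against its density, and then $|\rv{h}|^2$ against $e^{-v}$, produces \eqref{eq:SER_cell} directly.

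The coordinate-error analysis rests on the fact that the inverse map is \emph{phase-preserving}: by \eqref{eq:w}, $\arg w_j=\arg t_j$, so the decoded pair $(\hat{\rv{a}}_{2j-1},\hat{\rv{a}}_{2j})$ — equivalently $\hat{\rv{q}}_j$ — depends only on the open quadrant of $\tilde{\rv{t}}_j=\rv{y}_{j+1}/\rv{y}_1$, and correct cell detection guarantees $|\tilde{\rv{t}}_j|\le1$, so $\gv_i^{-1}$ is well defined. The true $\rv{q}_j$ corresponds to the quadrant of $t_j=x_{j+1}/x_1$, which by construction is one of the four diagonal directions $\sqrt{c}\,e^{\pm\im\pi/4}$, $\sqrt{c}\,e^{\pm\im3\pi/4}$; rotating the $(j{+}1)$-th coordinate — which alters no $|\rv{y}_k|$, hence not the cell-detection event — I may assume the true quadrant is the first, so the error event is $\{\Re(\tilde{\rv{t}}_j)<0\}\cup\{\Im(\tilde{\rv{t}}_j)<0\}$. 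Since $(\rv{y}_1,\rv{y}_{j+1})$ is zero-mean jointly proper Gaussian, writing $\rv{y}_{j+1}=\gamma\rv{y}_1+\rv{e}$ with $\rv{e}$ independent of $\rv{y}_1$ gives $\tilde{\rv{t}}_j=\gamma+\kappa\rv{w}$ with $\rv{w}$ a standard complex Cauchy variable (density $\tfrac{1}{\pi(1+|w|^2)^2}$), and a short computation gives $\gamma=\tfrac{\rho_0\sqrt{c}\,e^{\im\pi/4}}{\rho_0+1}$ and $\kappa^2=\tfrac{(1+c)\rho_0+1}{(\rho_0+1)^2}$. It then remains to evaluate $\Pr\{\Re(\tilde{\rv{t}}_j)>0,\ \Im(\tilde{\rv{t}}_j)>0\}$: inclusion--exclusion against the two half-planes reduces it to the marginal CDF $\tfrac12+\tfrac{\mu}{2\sqrt{1+\mu^2}}$ of $\Re(\rv{w})$ (which has density $\tfrac{1}{2(1+u^2)^{3/2}}$), with $\mu=\Re(\gamma)/\kappa$, plus the bivariate integral $\int_{-\infty}^{-\mu}\int_{-\infty}^{-\mu}\tfrac{du\,dv}{\pi(1+u^2+v^2)^2}$; carrying out the inner $v$-integration (using a primitive of $(a^2+v^2)^{-2}$) introduces the $\arccot$ terms, and substituting $\mu^2=\tfrac{c\rho_0^2}{2((1+c)\rho_0+1)}$ collapses everything to \eqref{eq:SER_coor_cond}. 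Finally, \eqref{eq:SER_coor} is the (conditional) union bound over the $T-1$ pairs, which are statistically equivalent because $\boldsymbol{\xi}_{T-1}$ acts component-wise (see \eqref{eq:encodeMappingT}).

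\textbf{Main obstacle.}
Two points need care. The lighter one is bookkeeping: evaluating the bivariate Cauchy integral and then algebraically folding the $\arccot$ terms into the compact form \eqref{eq:SER_coor_cond} is lengthy. The genuine difficulty is the conditioning on $\{\hat i=i\}$: the exact conditional law of $(\rv{y}_1,\rv{y}_{j+1})$ is a truncated Gaussian, entangled for $T>2$ with the remaining $\rv{y}_k$ through $\rv{h}$, so one must either justify that the quadrant probability coincides with the one obtained from the untruncated joint law — which is the quantity that the union bound \eqref{eq:SER_coor} actually needs — or propagate the truncation through the integration. This is the step where the proof has to be precise.
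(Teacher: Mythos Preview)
Your cell-error derivation is correct and coincides with the paper's: condition on $\rv{h}$, use that each $2|\rv{y}_k|^2$ (for $k\neq i$) is conditionally $\chi^2_2(2c\rho_0|\rv{h}|^2)$ so its CDF at $2|\rv{y}_i|^2$ is $1-Q_1(\sqrt{2c\rho_0}|\rv{h}|,\sqrt{2}|\rv{y}_i|)$, take the $(T-1)$-th power by conditional independence, and then integrate out $|\rv{y}_i|^2$ against its non-central density and $|\rv{h}|^2$ against $e^{-x}$.

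For the coordinate error there is a concrete gap. Your plan computes the quadrant probability of $\tilde{\rv{t}}_j=\gamma+\kappa\rv{w}$ under the \emph{untruncated} complex Cauchy law, via half-plane marginals and the bivariate integral over $(-\infty,-\mu]^2$. That is \emph{not} the quantity in \eqref{eq:SER_coor_cond}. Look at the prefactor
\[
\Bigl(1+\tfrac{(1-c)\rho_0}{\sqrt{(2+(1+c)\rho_0)^2-4c\rho_0^2}}\Bigr)^{-1}:
\]
this equals $\bigl(2\int_{|t|\le 1} f_{\rv{t}_j|\rv{q}_j}(t\,|\,q_j)\,\dif t\bigr)^{-1}$, i.e.\ it is the renormalization from truncating the Cauchy density to the unit disk. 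Likewise, the $\tfrac14$ together with the two $\arccot$ terms equal $2\int_{\Rc_j} f_{\rv{t}_j|\rv{q}_j}(t\,|\,q_j)\,\dif t$ with $\Rc_j=\{\text{correct quadrant}\}\cap\{|t|\le 1\}$, not the integral over the full quadrant. The paper handles the conditioning on $\{\hat i=i\}$ by retaining only its consequence $|\tilde{\rv{t}}_j|\le 1$ for the index $j$ at hand, declares the conditional density to be $f_{\rv{t}_j|\rv{q}_j}$ renormalized on the unit disk, and then evaluates numerator and denominator in \emph{polar coordinates} with $r\in[0,1]$ and $\theta$ over the relevant quadrant (resp.\ over $[0,2\pi)$). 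Your Cartesian inclusion--exclusion over half-planes does not survive the disk restriction and will yield a different expression; in particular your ``first option'' (that the untruncated quadrant probability coincides with the conditional one needed for \eqref{eq:SER_coor}) is false, as the normalization factor in \eqref{eq:SER_coor_cond} shows. Your worry about residual entanglement with the remaining $\rv{y}_k$ for $T>2$ is well taken, but the formula you are asked to reproduce is exactly the one obtained under this unit-disk simplification; for $T=2$ that simplification is exact since $\{\hat i=i\}=\{|\tilde{\rv{t}}_1|<1\}$ (cf.\ Corollary~\ref{coro:Pe}).
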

\begin{proof}
	The proof is given in~Appendix~\ref{app:PeBo1}.
\end{proof}
In particular, if $T=2$, the symbol error probability can be computed in closed form as follows.

\begin{corollary} \label{coro:Pe}
	When $N\!=\!1$, $T\!=\!2$ and $B_1 = \dots = B_{2(T-1)} = 1$, the cell error probability can be written explicitly as
	\begin{equation}
	\Pr\{\hat{i} \ne i\} = \frac{1}{2}\bigg(1-\frac{(1-c)\rho_0}{\sqrt{(2+(1+c)\rho_0)^2-4c\rho_0^2}}\bigg), \label{eq:PcellB01T2}
	\end{equation}
	Whereas the conditional coordinate error probability $\Pr\{\hat{\rvVec{a}} \ne \rvVec{a} | \hat{i} = i\}$ is exactly the right-hand side~(RHS) of~\eqref{eq:SER_coor_cond}. Accordingly, the symbol error probability $P_e$ is 
	\begin{multline}
	P_e = \frac{7}{8} - \frac{\sqrt{c}\rho_0\arccot \frac{1+\big(c-\sqrt{\frac{c}{2}}\big)\rho_0}{\sqrt{1+(c+1)\rho_0+\frac{c}{2}\rho_0^2}}}{\pi\sqrt{2+2(1+c)\rho_0+c\rho_0^2}} \\- \frac{(1-c)\rho_0\arccot\frac{2+\big(1-2\sqrt{2c}+c\big)\rho_0}{\sqrt{(2+(1+c)\rho_0)^2-4c\rho_0^2}}}{2\pi\sqrt{(2+(1+c)\rho_0)^2-4c\rho_0^2}}.
	\end{multline} 
\end{corollary}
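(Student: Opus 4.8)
The plan is to specialize Proposition~\ref{prop:Pe,b0=1} to $T=2$ and carry out the resulting simplifications, treating the conditional coordinate error, the cell error, and their combination in turn.

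\emph{Conditional coordinate error.} This is immediate: for $T=2$ we have $[T-1]=\{1\}$, so the entire local‑coordinate vector is the single pair $\{\rv{a}_1,\rv{a}_2\}$ and the event $\hat{\rvVec{a}}\ne\rvVec{a}$ coincides with $\hat{\rv{q}}_1\ne\rv{q}_1$. Hence the union bound \eqref{eq:SER_coor} holds with equality and $\Pr\{\hat{\rvVec{a}}\ne\rvVec{a}\mid\hat i=i\}$ equals the right‑hand side of \eqref{eq:SER_coor_cond} verbatim (with $j=1$ and $\rho_0=\tfrac{2\rho}{1+c}$).

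\emph{Cell error.} Here I would argue directly rather than integrate \eqref{eq:SER_cell}. When $N=1$ the observation \eqref{eq:channelModel} is the $2\times1$ vector $\rvMat{Y}=\sqrt{\rho T}\,\rv{h}\,\rvVec{x}+\rvMat{Z}$, so the dominant left singular vector of Step~1 is simply $\rvVec{u}=\rvMat{Y}/\|\rvMat{Y}\|$ and, for a transmitted symbol in cell~$1$, $\{\hat i\ne i\}=\{|\rv{Y}_2|>|\rv{Y}_1|\}$ (the remaining cell is symmetric). Conditioned on the transmitted symbol $\cv$, the covariance computation preceding \eqref{eq:pdf} gives $\rvMat{Y}\sim\Cc\Nc(\zerov,\Id_2+\rho T\,\cv\cv^\H)$. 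The one non‑routine ingredient is that every symbol of $CS(2,1)$ in a given cell has $|t|^2=c$ for its single off‑centre coordinate: $\rv{a}_1,\rv{a}_2\in\{\tfrac14,\tfrac34\}$ forces $\bigl(\Nc^{-1}(\rv{a}_1)\bigr)^2=\bigl(\Nc^{-1}(\rv{a}_2)\bigr)^2=m^2$, hence $|w|^2=2m^2$, and then \eqref{eq:unif2cauchy} yields $|t|^2=\tfrac{1-e^{-m^2}}{1+e^{-m^2}}=c$. Thus $\Pr\{\hat i\ne i\}$ does not depend on which symbol was sent and may be evaluated for $\cv=\tfrac{1}{\sqrt{1+c}}\,[1\ \sqrt c\,]^\T$, for which $\Id_2+\rho T\cv\cv^\H$ has diagonal entries $\sigma_1^2=1+\rho_0$, $\sigma_2^2=1+c\rho_0$ and off‑diagonal modulus $\sqrt c\,\rho_0$, so $\det(\Id_2+\rho T\cv\cv^\H)=1+(1+c)\rho_0>0$. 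Now $|\rv{Y}_2|^2-|\rv{Y}_1|^2$ is an indefinite Hermitian quadratic form in a white complex Gaussian vector; diagonalizing it as $\lambda_+\rv{E}_1+\lambda_-\rv{E}_2$ with $\rv{E}_1,\rv{E}_2$ i.i.d.\ unit‑mean exponentials and $\lambda_\pm$ the eigenvalues of $(\Id_2+\rho T\cv\cv^\H)\,\diag(-1,1)$ (so $\lambda_+>0>\lambda_-$ by positive‑definiteness of the covariance), one gets $\Pr\{\hat i\ne i\}=\Pr\{\rv{E}_1>-\tfrac{\lambda_-}{\lambda_+}\rv{E}_2\}=\tfrac{\lambda_+}{\lambda_+-\lambda_-}=\tfrac12+\tfrac{\sigma_2^2-\sigma_1^2}{2\sqrt{(\sigma_2^2-\sigma_1^2)^2+4\det(\Id_2+\rho T\cv\cv^\H)}}$. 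Substituting the values above and using the elementary identity $(1-c)^2\rho_0^2+4+4(1+c)\rho_0=(2+(1+c)\rho_0)^2-4c\rho_0^2$ gives \eqref{eq:PcellB01T2}. (Alternatively, setting $T-1=1$ in \eqref{eq:SER_cell} makes the $(1-Q_1)$ factor linear; the ``$1$''‑part integrates to exactly $1$ using $\int_0^\infty e^{-y}I_0(2\sqrt{\beta y})\,\dif y=e^{\beta}$ and then $\int_0^\infty e^{-x}\,\dif x=1$, while the leftover Marcum‑$Q$ double integral reproduces the same closed form — but this requires an explicit evaluation of $\int_0^\infty Q_1(\sqrt{2c\rho_0 x},\sqrt{2y})\,I_0(2\sqrt{\rho_0 xy})\,e^{-y}\,\dif y$ and is more laborious.)

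\emph{Combining.} Write $\Pr\{\hat i\ne i\}=\tfrac12-g$ with $g\defeq\tfrac{(1-c)\rho_0}{2\sqrt D}$, $D\defeq(2+(1+c)\rho_0)^2-4c\rho_0^2$, and write the conditional coordinate error in \eqref{eq:SER_coor_cond} as $1-AB$ with $A\defeq\bigl(1+\tfrac{(1-c)\rho_0}{\sqrt D}\bigr)^{-1}$ and $B\defeq\tfrac14+(\text{the two }\arccot\text{ terms})$. Then $P_e=(\tfrac12-g)+(\tfrac12+g)(1-AB)=1-(\tfrac12+g)AB$, and since $\tfrac12+g=\tfrac12A^{-1}$ this collapses to $P_e=1-\tfrac12B=\tfrac78-\tfrac12(\text{the two }\arccot\text{ terms})$. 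Moving the factor $\tfrac12$ inside and using $\tfrac{\sqrt{2c}}{2}=\tfrac{\sqrt c}{\sqrt 2}$ and $\sqrt{1+(1+c)\rho_0+\tfrac c2\rho_0^2}=\tfrac{1}{\sqrt2}\sqrt{2+2(1+c)\rho_0+c\rho_0^2}$ brings it into the stated form. The main obstacle is the cell‑error step — either carefully verifying the indefinite‑quadratic‑form probability together with the $|t|^2=c$ reduction, or, if one insists on deriving \eqref{eq:PcellB01T2} from \eqref{eq:SER_cell}, performing the Marcum‑$Q$/Bessel integrals; once $\Pr\{\hat i\ne i\}$ is in closed form the rest is routine algebra.
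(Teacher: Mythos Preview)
Your proposal is correct. The conditional coordinate error and the combining step match the paper exactly (the paper simply notes that for $T=2$ the union bound \eqref{eq:SER_coor} is tight, and leaves the ensuing algebra implicit; you spell it out).

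The cell-error step is where you take a genuinely different route. The paper observes that for $T=2$ the cell error event is $\{|\rv{t}|\ge 1\}$ for the ratio $\rv{t}=\rv{y}_{2-i}/\rv{y}_{i}$, whose (untruncated) Cauchy density \eqref{eq:pdf_t1} is already available from the proof of Proposition~\ref{prop:Pe,b0=1}; moreover the integral $\int_{|t|\le 1} f_{\rv{t}|\rv{q}_1}(t|q_1)\,\dif t$ was computed there as the normalising constant $\tfrac12+\tfrac{(1-c)\rho_0}{2\sqrt{D}}$, so \eqref{eq:PcellB01T2} is essentially free. Your approach instead treats $|\rv{Y}_2|^2-|\rv{Y}_1|^2$ as an indefinite Hermitian quadratic form in a complex Gaussian vector, diagonalises, and reads off $\lambda_+/(\lambda_+-\lambda_-)$. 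This is a standard communications-style argument and is entirely self-contained: it does not rely on the Cauchy machinery of Lemma~\ref{lem:Cauchy-Gaussian} or on having already evaluated the normalising integral. The paper's route is shorter in context (one line, given what was proved just before), while yours would be the more portable derivation if stated in isolation. Your remark that $\Pr\{\hat i\ne i\}$ is the same for every symbol in a cell because $|t|^2=c$ regardless of $q_1$ is the right justification; the paper phrases the same invariance as the polar integral being independent of $q_1$.
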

\begin{proof}
	The proof is given in Appendix~\ref{app:PeT2Bo1}.
\end{proof}
\subsection{Log-Likelihood Ratio Computation {and Code Design}} \label{sec:LLR} 
When a channel code is employed, most channel decoders require the LLR of the coded bits as an input. LLR computation is performed independently from the code structure, assuming uniform input probabilities, i.e., all the symbols are equally likely to be transmitted, and so are the bits. Denote the binary label of symbol $\rvVec{x}$ as $\rvVec{b}(\rvVec{x}) = [\rv{b}_1(\rvVec{x}) \ \rv{b}_2(\rvVec{x}) \dots \rv{b}_B(\rvVec{x})]$. The log-likelihood ratio (LLR) of bit $\rv{b}_i$ given the observation $\rvMat{Y}=\Ym$ can be computed using~\eqref{eq:pdf} as
\begin{multline}
\LLR_j(\Ym) = \log \frac{p_{\rvMat{Y}|\rv{b}_j}(\Ym|1)}{p_{\rvMat{Y}|\rv{b}_j}(\Ym|0)} 
= \log \frac{\sum_{\alphav \in \Cc_ji^{(1)}}p_{\rvMat{Y}|\rvVec{x}}(\Ym|\alphav)}{\sum_{\betav \in \Cc_j^{(0)}}p_{\rvMat{Y}|\rvVec{x}}(\Ym|\betav)} \\
= \log \frac{\sum_{\alphav \in \Cc_j^{(1)}} \exp\big(\frac{\rho T}{1+\rho T}\|\Ym^\H \alphav\|^2\big)}{\sum_{\betav \in \Cc_j^{(0)}} \exp\big(\frac{\rho T}{1+\rho T}\|\Ym^\H \betav\|^2\big)}, \label{eq:LLRcubesplit}
\end{multline}
where $\Cc_j^{(b)}$ denotes the set of all possible symbols in $\Cc$ with bit $j$ being equal to $b$, i.e., $\Cc_i^{(b)} \defeq \{\cv \in \Cc : \ \rv{b}_i(\cv) = b \}$, for $i \in [B]$ and $b \in \{0,1\}$.

In general, the LLR distribution differs between the bit positions, thus have different error protection properties. This can be seen in Fig.~\ref{fig:LLRhist22}, where we depict the LLR histogram of the cell bit and the first two coordinate bits (the remaining two coordinate bits have the same LLR distribution as the first two due to symmetry), given that $0$ was sent in that bit, for the $CS(2,2)$ constellation and single receive antenna. It can also be  observed that the LLR distribution truncated on $[0,\infty)$ (or $(-\infty,0]$) closely fits the exponential distribution (or the flipped exponential distribution, respectively). The LLR distribution fitting can be useful, e.g., to calculate the mutual information $I(\rv{b}_i(\rvMat{x});\LLR_i(\rvMat{Y}))$, which reveals how much information is carried in different bit positions, as well as in the framework of an extrinsic information transfer~(EXIT) chart~\cite{ten2000designing} analysis. 
\begin{figure}[!h] 
	\centering
	\hspace{-.3cm}
	\subfigure[$1^{\text{st}}$ bit (cell bit)]{
		\includegraphics[width=.155\textwidth]{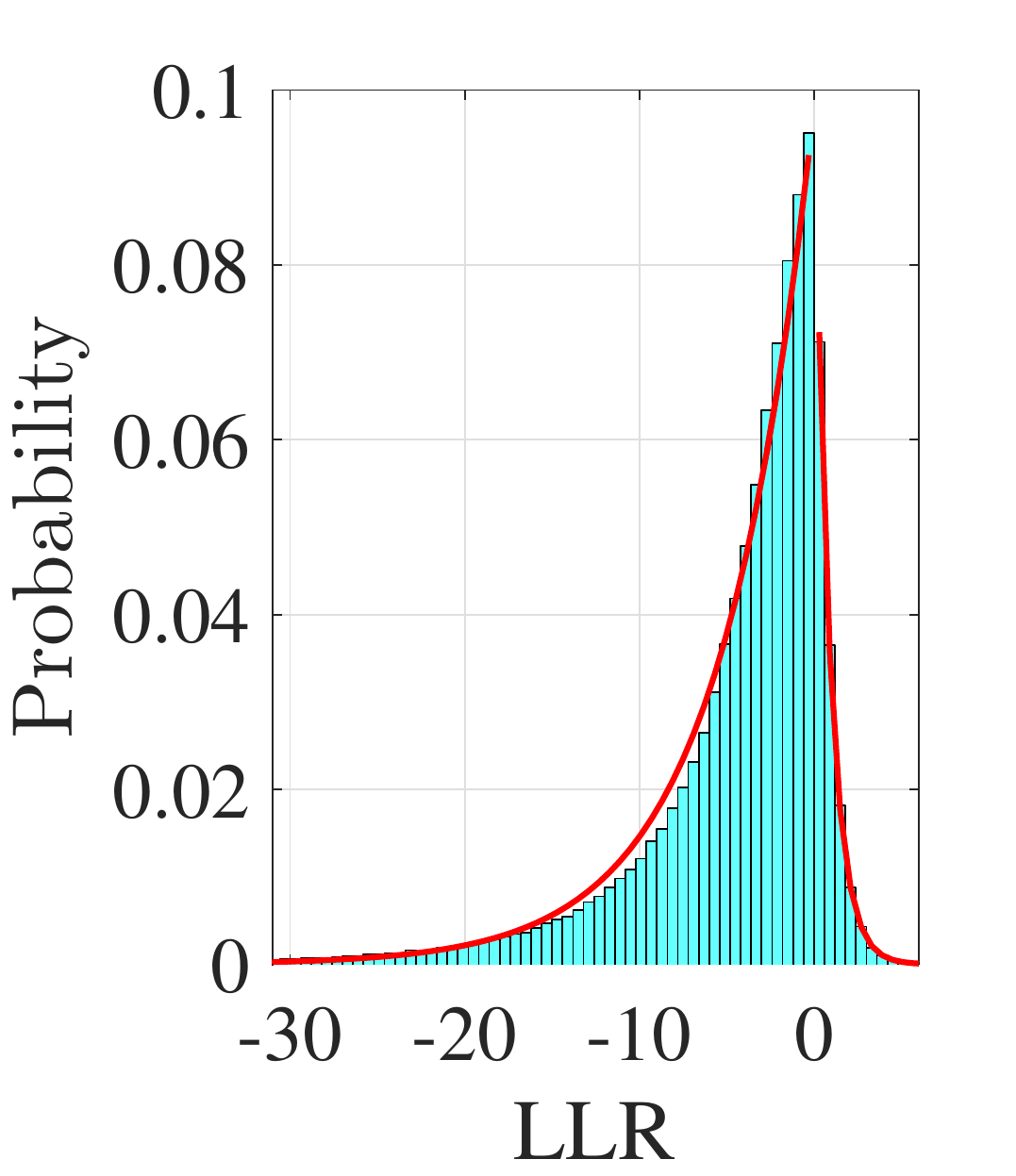}}
	\subfigure[$2^{\text{nd}}$ bit (coordinate bit)]{
		\includegraphics[width=.155\textwidth]{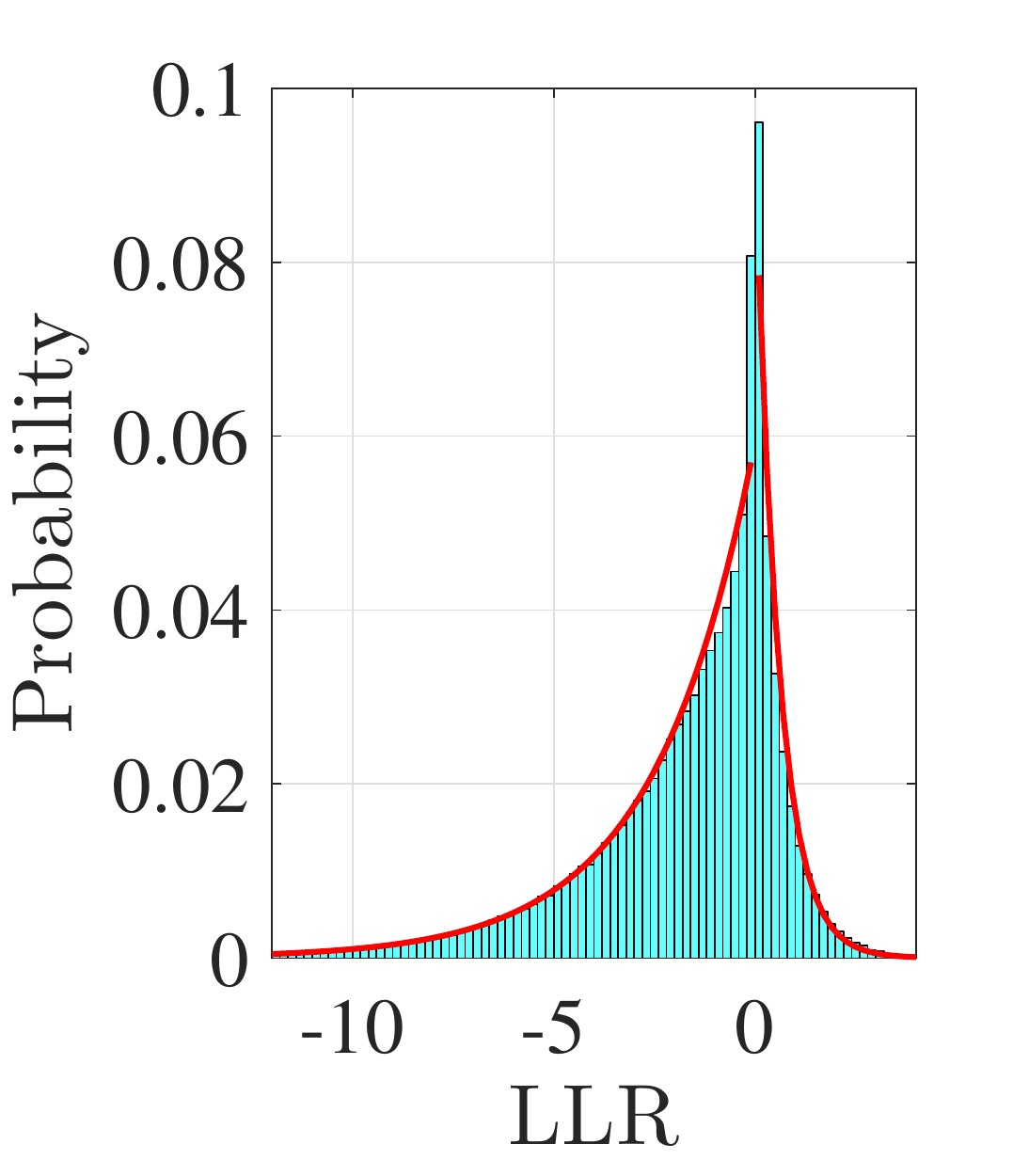}} 
	\subfigure[$3^{\text{rd}}$ bit (coordinate bit)]{
		\includegraphics[width=.155\textwidth]{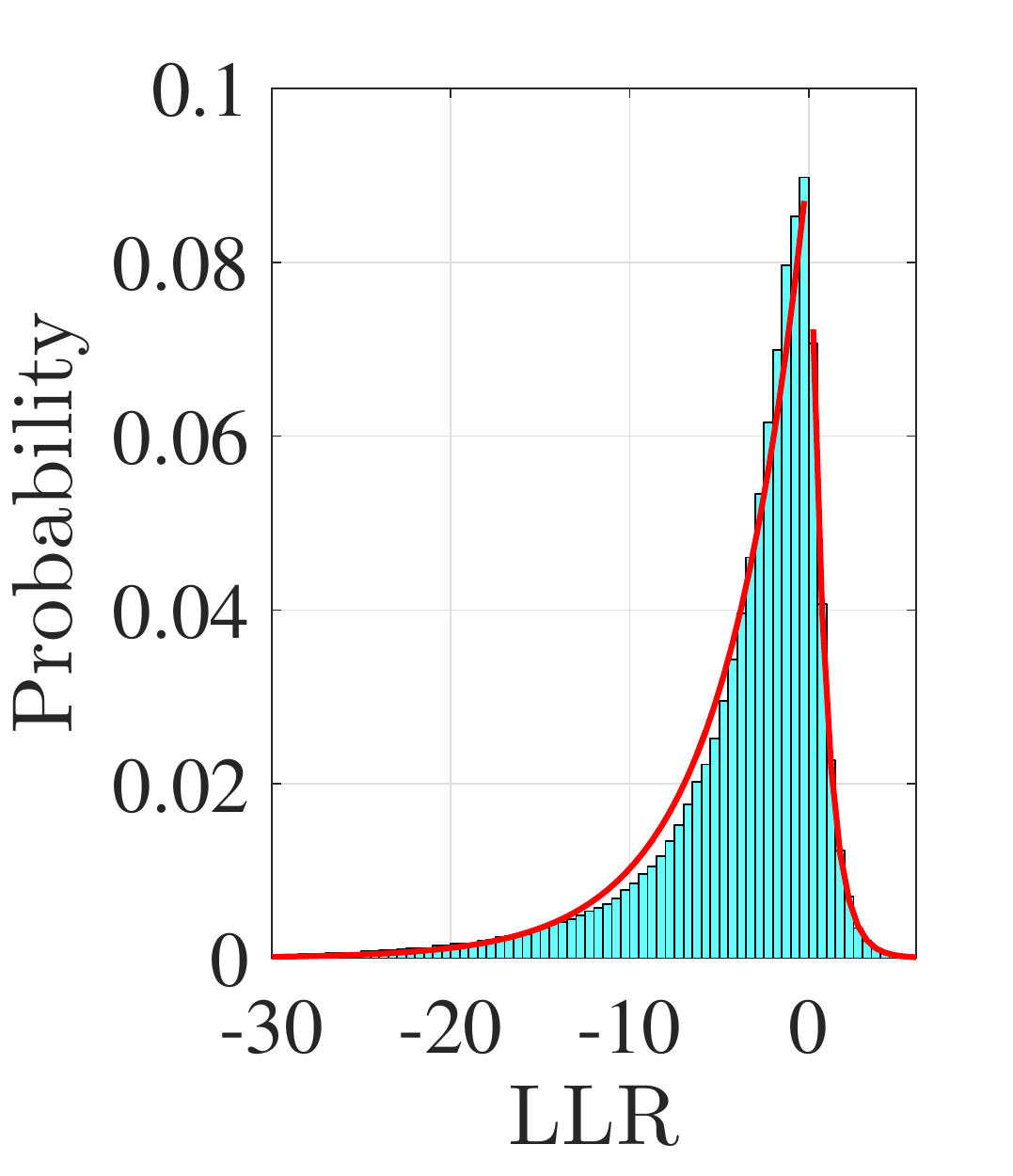}}
	\caption{Histograms of the LLR of the first 3 bits, given that 0 was sent, of $CS(2,2)$ constellation (5 bits/symbol), single receive antenna, and $\SNR = 10$~dB. The red solid lines show the fitted exponential distribution of the LLR truncated on $[0,\infty)$ and the fitted flipped exponential distribution of the LLR truncated on $(-\infty,0]$ obtained by matching the first moment (mean).} 
	\label{fig:LLRhist22}
\end{figure}

\subsubsection{Low-Complexity LLR Computation}
Computing the LLR according to~\eqref{eq:LLRcubesplit} requires enumerating the whole constellation. To avoid this, we propose a low-complexity LLR computation as follows. First, for any real-valued array $x_1,\dots,x_n$, denote by $\Mc_{m}$ the set of $m$ largest values, i.e., $x_j \le x_l$ for all $x_l\in \Mc_{m}$ and $x_j\notin \Mc_{m}$. We have that
\begin{align}
&\log \sum_{j = 1}^n e^{x_j} = \log \sum_{x \in \Mc_{m}} e^{x} + \log\bigg(1+  \frac{\sum_{x_j\notin \Mc_{m}} e^{x_j}}{\sum_{x \in \Mc_{m}} e^{x}}\bigg) \\
&\le \log  \sum_{x \in \Mc_{m}}  e^{x} \notag \\
&\quad+ \log\bigg(1+ (n-m)\exp\Big(\max_{x\notin \Mc_{m}} x - \max_{j\in [n]} x_j\Big)\bigg).
\end{align}
For a sufficiently large value of $\max_{j\in [n]} x_j - \max_{x\notin \Mc_{m}} x$, the second term in the RHS vanishes and $\log \sum_{j = 1}^n e^{x_j}$ can be well approximated\footnote{When $m = 1$, this approximation coincides with the well-known one $\log \sum_j e^{x_j} \approx \max_j x_j$.} by $\log \sum_{x \in \Mc_{m}} e^{x}$. Applying this to~\eqref{eq:LLRcubesplit} yields
\begin{align} \label{eq:LLRapprox}
\LLR_j(\Ym) &\approx \log \sum_{\alphav \in \Mc_{\eta,j}^{(1)}} \exp\left(\frac{\rho T}{1+\rho T}\|\Ym^\H \alphav\|^2\right) \notag \\
&\quad- \log \sum_{\betav \in \Mc_{\eta,j}^{(0)}} \exp\left(\frac{\rho T}{1+\rho T}\|\Ym^\H \betav\|^2\right),
\end{align}
where $\Mc_{\eta,j}^{(b)}$ stores the $\eta$ symbols corresponding to the $\eta$ largest terms in $\big\{\|\Ym^\H \xv\|\big\}_{\xv\in \Cc_j^{(b)}}$ for $b \in \{0,1\}$. 
Observe that one symbol in either $\Mc_{\eta,j}^{(1)}$ or $\Mc_{\eta,j}^{(0)}$ is exactly the output $\hat{\xv}^{\text{ML}}$ of the ML decoder~\eqref{eq:MLoneUser}. The remaining symbols in $\Mc_{\eta,j}^{(1)}$ and $\Mc_{\eta,j}^{(0)}$ are expected to be close to $\hat{\xv}^{\text{ML}}$ since they also lead to high likelihood of $\Ym$. Furthermore, $\hat{\xv}^{\text{ML}}$ can be approximated by the output $\hat{\xv}$ of the greedy decoder. 
Thus, the symbols in  $\Mc_{\eta,j}^{(1)}$ and $\Mc_{\eta,j}^{(0)}$ are expected to be in the neighborhood of the greedy decoder's output $\hat{\xv}$. Therefore, the LLR can be further approximated by replacing $\Mc_{\eta,j}^{(1)}$ and $\Mc_{\eta,j}^{(0)}$ in \eqref{eq:LLRapprox} by the sets of the greedy decoder's output $\hat{\xv}$ and its neighbors as
\begin{align} \label{eq:LLRapprox2}
\LLR_j(\Ym) &\approx \log \sum_{\alphav \in \Bc_j(\hat{\xv},1)} \exp\left(\frac{\rho T}{1+\rho T}\|\Ym^\H \alphav\|^2\right) \notag\\
&\quad- \log \sum_{\betav \in \Bc_j(\hat{\xv},0)} \exp\left(\frac{\rho T}{1+\rho T}\|\Ym^\H \betav\|^2\right),
\end{align}
where {the set $\Bc_j(\cv,b)$ contains $\eta$ nearest symbols to $\cv$ (one of them being $\cv$) with the $j$-th bit in their labels being equal to $b$, i.e,
\begin{align}
&\Bc_j(\cv,b) \notag \\
&\defeq \left\{\hat{\cv}_1,\dots,\hat{\cv}_{\eta} \in \Cc : \Bigg\{
\begin{matrix*}[l]
d(\hat{\cv}_l,\cv) \le  d(\xv,\cv), \\
\quad \forall \xv \in \Cc \setminus \{\hat{\cv}_1,\dots,\hat{\cv}_{\eta}\}, l \in [\eta] \\
\rv{b}_j(\hat{\cv}_1) = \dots = \rv{b}_j(\hat{\cv}_{\eta}) = b 
\end{matrix*}
\right\}
\end{align}
for $\cv \in \Cc$, $b \in \{0,1\}$, and $j \in [B]$.}

The sets $\Bc_j(\cv,b)$ can be precomputed for each $\cv \in \Cc$ prior to communication (with negligible complexity) and stored at the receiver. In this way, the complexity of computing the RHS of~\eqref{eq:LLRapprox2} is only $O(NT\min\{N,T\} + NT\eta)$ ($O(NT\min\{N,T\})$ for the hard detection to find $\hat{\xv}$ and $O(NT\eta)$ for the computation of the RHS of~\eqref{eq:LLRapprox2}). Alternatively, one can look for an approximation of $\Bc_j(\hat{\xv},b)$ (possibly constructed on-the-fly upon detecting $\hat{\xv}$) when the constellation size is too large. The latter option does not require storage but increases the complexity. 

\subsubsection{Multilevel Coding~(MLC) and Multistage Decoding~(MSD)} \label{sec:MLC_MSD}
We propose a MLC-MSD scheme~\cite{Wachsmann1999multilevelCoding} customized for the cube-split constellation as follows. First, each input bit stream is divided into two substreams and each substream is fed into an individual channel encoder. Note that the two encoders can have different code rates. Then, the coded bits are mapped into cube-split symbols by taking the cell bits from the output of the first encoder and the coordinate bits from the output of the second encoder. 
At the receiver, the cell bits are decoded first (using the exact LLR~\eqref{eq:LLRcubesplit} or approximate LLR~\eqref{eq:LLRapprox2}), thus we obtain an estimate of the cell index of each transmitted symbol. Then the LLRs of the coordinate bits are computed based on the received signal and the estimated cell index in a manner similar to~\eqref{eq:LLRapprox2} except that $\Bc_j(\hat{\xv},b)$ is replaced by $\Sc_j({\hat{i}},b)$ where $\hat{i}$ is the estimated cell index of the corresponding transmitted symbol and $\Sc_j({\hat{i}},b)$ is the set of constellation symbols in cell $\cell_{\hat{i}}$ with bit $j$ being equal to $b$, i.e.,
$
\Sc_j(\hat{i},b) \defeq \{\xv\in \Cc \cap \cell_{\hat{i}}: b_j(\xv) = b \}.
$
This decoding structure is also similar to a turbo equalization receiver~\cite{Tuchler2002turboEqualization}.

\section{Performance Evaluation} \label{sec:performance}
We evaluate numerically the performance of our cube-split constellation in comparison with other constellations and a baseline (coherent) pilot-based scheme.

\subsection{A Baseline Pilot-Based Scheme} \label{sec:pilotscheme}
We consider a baseline scheme based on channel training~\cite{Hassibi2003howmuchtraining}. 
The transmitted signal is
\begin{align}
\rvVec{x} = (\rho T)^{-1/2}\left[\sqrt{\rho_\tau} \ \ \sqrt{\rho_d} \rvVec{x}_d^\T\right]^\T,
\end{align}
{i.e., the first symbol is constant and known to the receiver,} the data symbol vector $\rvVec{x}_d = [\rv{x}_2 \dots \rv{x}_T]^\T$ is normalized s.t. $\E[\rvVec{x}_d\rvVec{x}_d^\H] = \Id_{T-1}$. The power factors $\rho_\tau$ and $\rho_d$ satisfy $\rho_\tau + (T-1)\rho_d = \rho T$ and can be optimized. The received signal can be written as $\rvMat{Y} = [\rvVec{y}_\tau \ \rvMat{Y}_d^\T]^\T$ where $\rvVec{y}_\tau = \sqrt{\rho_\tau} \rvVec{h} + \rvVec{z}_\tau$ and $\rvMat{Y}_d = \sqrt{\rho_d} \rvVec{x}_d \rvVec{h}^\T  + \rvMat{Z}_d$ are the received signals in the training phase and data transmission phase, respectively. The receiver uses minimum-mean-square-error (MMSE) channel estimation $\hat{\rvVec{h}} = \frac{\sqrt{\rho_\tau}}{1+\rho_\tau} \rvVec{y}_\tau \sim\Cc\Nc\big(\mathbf{0},\frac{\rho_\tau}{1+\rho_\tau}\Id_N\big)$. {Let $\bar{\rvVec{h}} \defeq \hat{\rvVec{h}}\big(\frac{1}{N}\E\big[\|\hat{\rvVec{h}}\|^2\big]\big)^{-1/2} \sim \Cc\Nc(\mathbf{0},\Id_N)$ be the normalized estimate. 
From~\cite[Thm.3]{Hassibi2003howmuchtraining}, a lower bound on the achievable rate of this pilot-based scheme with i.i.d. Gaussian input $\rvVec{x}_d \sim \Cc\Nc(\mathbf{0},\frac{1}{T-1}\Id_{T-1})$ 
is given as
\begin{align}
&R_{\rm pilot} (\rho,N,T) 
\defeq  \Big(1-\frac{1}{T}\Big) \E[\log_2\left(1 + \rho_{\rm eff} \|\bar{\rvVec{h}}\|^2 \right)]
\\&= \frac{T-1}{T} \log_2(e) \sum_{n=1}^{N} \frac{(N-1)!}{(N-n)!} \left(-\frac{1}{\rho_{\rm eff}}\right)^{N-n} \notag\\&\quad\times
\bigg[e^{1/\rho_{\rm eff}} E_1\left(\frac{1}{\rho_{\rm eff}}\right) + \sum_{m=1}^{N-n}(m-1)!\left(-\rho_{\rm eff}\right)^{m}\bigg], \label{eq:pilot_capacity}
\end{align}
where $\rho_{\rm eff} = \frac{\rho_\tau \rho_d}{1+\rho_\tau + \rho_d}$, $E_1(x) \defeq \int_{x}^\infty \frac{e^{-t}}{t}\dif t$ is the exponential integral function~{\cite[Eq.(5.1.1)]{abramovitz}}, and \eqref{eq:pilot_capacity} is derived using~\cite[4.337.5]{gradshteyn2007table}.
The optimal power allocation is given by $\rho_\tau = \rho$ if $T \!=\! 2$ and $\rho_\tau \!=\! \frac{\sqrt{T-1+\rho T}\left(\sqrt{(T-1)(1+\rho T)}-\sqrt{T-1+\rho T}\right)}{T-2}$ if $T\!>\!2$.} 

Let $\tilde{\rvVec{h}} = \rvVec{h} - \hat{\rvVec{h}}$ be the channel estimation error, then $\tilde{\rvVec{h}} \sim \Cc\Nc\big(\mathbf{0},\frac{1}{1+\rho_\tau}\Id_N\big)$ and $\tilde{\rvVec{h}}$ and $\hat{\rvVec{h}}$ are uncorrelated. The output can be written as $\rvMat{Y}_d = \sqrt{\rho_d} \rvVec{x}_d \hat{\rvVec{h}}^\T + \hat{\rvMat{Z}}_d$, where $\hat{\rvMat{Z}}_d = \sqrt{\rho_d} \rvVec{x}_d \tilde{\rvVec{h}}^\T + \rvMat{Z}_d$. Given the input, the rows of $\hat{\rvMat{Z}}_d$ are independent and the $j$-th row follows $\Cc\Nc\left(\mathbf{0},\big(1+\frac{\rho_d|\rv{x}_j|^2}{1+\rho_\tau}\big)\Id_N\right)$, $j\in\{2,\dots,T\}$. Thus, the likelihood function of the output at slot $j\in \{2,\dots,T\}$ is
\begin{multline} \label{eq:likelihood_pilot}
p_{\rvMat{Y}_{[j]} | \rv{x}_j, \hat{\rvVec{h}}}(\Ym_{[j]} | {x}_j, \hat{\hv}) \\= {\pi^{-N} \left(1+\frac{\rho_d|{x}_j|^2}{1+\rho_\tau}\right)^{-N}} \exp\bigg(-\frac{\big\|\Ym_{[j]} - \sqrt{\rho_d} x_j \hat{\hv}^\T\big\|^2}{1+\frac{\rho_d|{x}_j|^2}{1+\rho_\tau}} \bigg),
\end{multline}
where $\rvMat{Y}_{[j]}$ denotes the $j$-th row of $\rvMat{Y}$.

In practice, the data symbols in $\rvVec{x}_d$ are normally taken from finite scalar constellations such as QAM or PSK in order to reduce the complexity of the ML decoder based on~\eqref{eq:likelihood_pilot}. A sub-optimal method consists in linear equalization followed by component-wise scalar demapping. With zero-forcing (ZF) or MMSE equalizer, the equalized symbols are respectively
\begin{align} \label{eq:equalization}
\hat{\rvVec{x}}_d^{\rm zf} = \frac{\rvMat{Y}_d}{\sqrt{\rho_d}} \frac{\hat{\rvVec{h}}^*}{\|\hat{\rvVec{h}}\|^2}, ~~ \text{or} ~~
\hat{\rvVec{x}}_d^{\rm mmse} = \frac{\rvMat{Y}_d}{\sqrt{\rho_d}} \frac{\hat{\rvVec{h}}^*}{\|\hat{\rvVec{h}}\|^2+1/\rho_d}.
\end{align}

{The LLR of bit $\rv{b}_i$ given $\rvMat{Y} = \Ym$ and the channel estimate $\hat{\rvVec{h}} = \hat{\hv}$ is calculated as
\begin{align}
\LLRp_j(\Ym,\hat{\hv}) = \log \frac{\sum_{\alpha \in \Qc_j^{(1)}}p_{\rvMat{Y}|\rv{x}_{\{j\}},\hat{\rvVec{h}}}(\Ym|\alpha,\hat{\hv})}{\sum_{\beta \in \Qc_j^{(0)}}p_{\rvMat{Y}|\rv{x}_{\{j\}},\hat{\rvVec{h}}}(\Ym|\beta,\hat{\hv})}, \label{eq:LLRpilot}
\end{align} 
where $\Qc_j^{(b)}$, $b \in \{0,1\}$, denotes a subset of the chosen scalar constellation (e.g., QAM) with bit $j$ being $b$, $\rv{x}_{\{j\}}$ denotes the symbol accounting for bit $\rv{b}_j$, and $p_{\rvMat{Y}|\rv{x}_{\{j\}},\hat{\rvVec{h}}}$ is given in~\eqref{eq:likelihood_pilot}.}\footnote{One can also compute the LLR based on the equalized symbols $\hat{\rv{x}}_j$. When $N=1$, the likelihood function $p_{\hat{\rv{x}}_j | \rv{x}_j}$ for ZF equalized symbols \eqref{eq:equalization} can be derived explicitly using Lemma~\ref{lem:Cauchy-Gaussian} as
	$
	p_{\hat{\rv{x}}^{\rm zf}_j | \rv{x}_j}(\hat{x}^{\rm zf}_j | x_j) = \frac{\frac{1+\rho_\tau}{\rho_\tau\rho_d(T-1)}+\frac{|\hat{{x}}^{\rm zf}_j|^2}{\rho_\tau}}{\pi\big(\frac{1+\rho_\tau}{\rho_\tau\rho_d(T-1)}+\frac{|{x}_j|^2}{\rho_\tau} + \left|\hat{{x}}^{\rm zf}_j - {x}_j\right|^2\big)^2}, j\in \{2,\dots,T\}.
	$}

In the remainder of this section, we 
compare different schemes with the same transmission rate of $B$ bits/symbol. 
{Having observed in Fig.~\ref{fig:mindist} that the Fourier constellation~\cite{Hochwald2000systematicDesignUSTM} and the exp-map constellation~\cite{Kammoun2007noncoherentCodes} have similar or higher packing efficiency than the coprime-PSK constellation~\cite{Zhang2011full_diversity_blind_ST_block_codes} and the multi-layer constellation~\cite{AttiahISIT2016systematicDesign}, and to keep the comparison clear, we hereafter consider only the two former constellations.}

\subsection{Achievable Data Rate}
In {Fig.~\ref{fig:rate}{(a)}}, we compare the achievable rate (computed as in~\eqref{eq:achievableRate}) of cube-split constellation with the rate of the numerically optimized constellation and the high-SNR capacity $C(\rho,N,T)$ given in~\eqref{eq:noncoherentCapacity} for $T=2$ and single receive antenna. We also include the rate lower bound $R_{\rm pilot} (\rho,N,T)$ of a pilot-based scheme with Gaussian input given in~\eqref{eq:pilot_capacity}, and the achievable rate of the pilot-based scheme with QAM input. The cube-split constellation can achieve almost the same rate as the numerically optimized constellation and a higher rate than the pilot-based scheme with QAM input at a given SNR. For example, at $25$~dB, the cube-split constellation can achieve about $0.3$ bits/channel use higher than the rate achieved with the pilot-based scheme. Furthermore, the achievable rate of a large cube-split constellation approaches the high-SNR capacity $C(\rho,N,T)$.

Next, in {Fig.~\ref{fig:rate}{(b)}}, we plot the achievable rate of the cube-split constellation, the numerically optimized constellation, the Fourier constellation~\cite{Hochwald2000systematicDesignUSTM}, and the exp-map constellation~\cite{Kammoun2007noncoherentCodes}, and the pilot-based scheme with QAM input for $T = 4$ and $N = 2$. Again, the rate achieved with cube-split constellation is close to the rate achieved with the numerically optimized constellation and higher than that of other structured constellations and the pilot-based scheme. 
\begin{figure}[!h]
	\centering
	\subfigure[$T = 2$, $N =1$, $B \in \{3, 5, 7, 9 , 11, 13\}$]{\includegraphics[width=.48\textwidth]{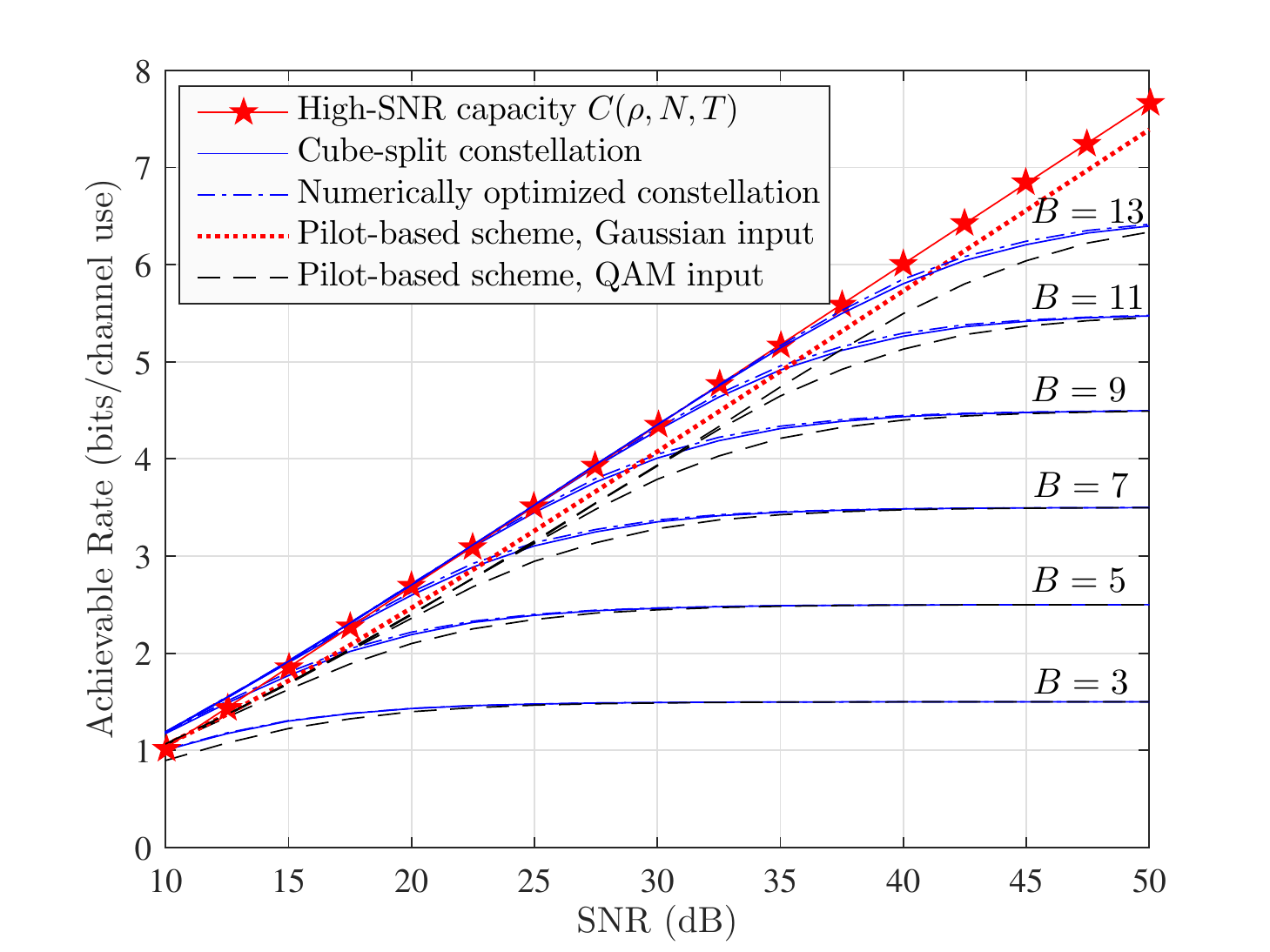}}
	\subfigure[$T = 4$, $N = 2$, $B \in \{8,14\}$]{\includegraphics[width=.48\textwidth]{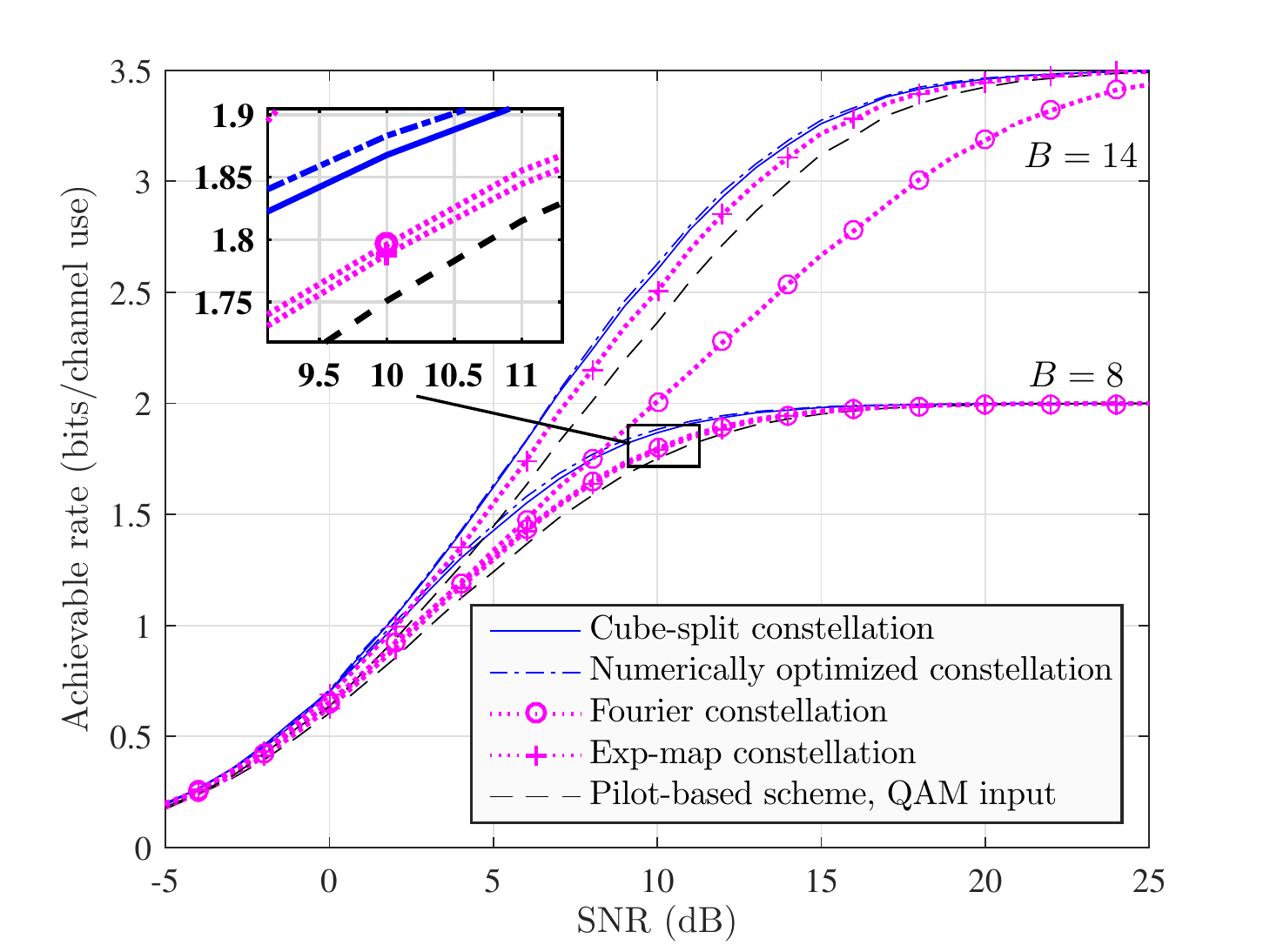}}
	\caption{The achievable rate~\eqref{eq:achievableRate} of the cube-split constellation in comparison with the channel capacity given in~\eqref{eq:noncoherentCapacity}, and the rate achieved with the numerically optimized constellation, other structured constellations, and the pilot-based scheme with Gaussian input~\eqref{eq:pilot_capacity} or	QAM input for $T\in \{2,4\}$, $N = T/2$, and different transmission rate $B$ bits/symbol.}
	\label{fig:rate}
\end{figure}

\subsection{Error Rates of Uncoded Constellations}

In {Fig.~\ref{fig:SER_BER_T2T4}{(a)} and Fig.~\ref{fig:SER_BER_T2T4}{(b)}}, we plot the symbol error rate~(SER) of the cube-split constellation (with ML or greedy decoder), the pilot-based scheme with QAM input, the numerically optimized constellation (with ML decoder), the Fourier constellation~\cite{Hochwald2000systematicDesignUSTM} (with ML decoder), and the exp-map constellation~\cite{Kammoun2007noncoherentCodes} (with ML or {greedy decoder}). 
{In Fig.~\ref{fig:SER_BER_T2T4}{(c)}, we show the corresponding bit error rate (BER) but omit the Fourier and the numerically optimized constellations for their lack of an effective binary labeling scheme. The cube-split constellation uses the Gray-like labeling in Sec.~\ref{sec:labeling}, the exp-map constellation takes the Gray label of the QAM vector $\qv$ for the mapped symbol $\cv$, and the pilot-based scheme uses Gray labels of the QAM symbols.} We observe that the greedy decoder for the cube-split constellation achieves near-ML performance. 
The cube-split constellation achieves performance close to the numerically optimized constellation, and outperforms other structured constellations and the pilot-based scheme. 
\begin{figure}[!h] 
	\centering
	\subfigure[Symbol error rate, $T=2,N=1$]{\includegraphics[width=.47\textwidth]{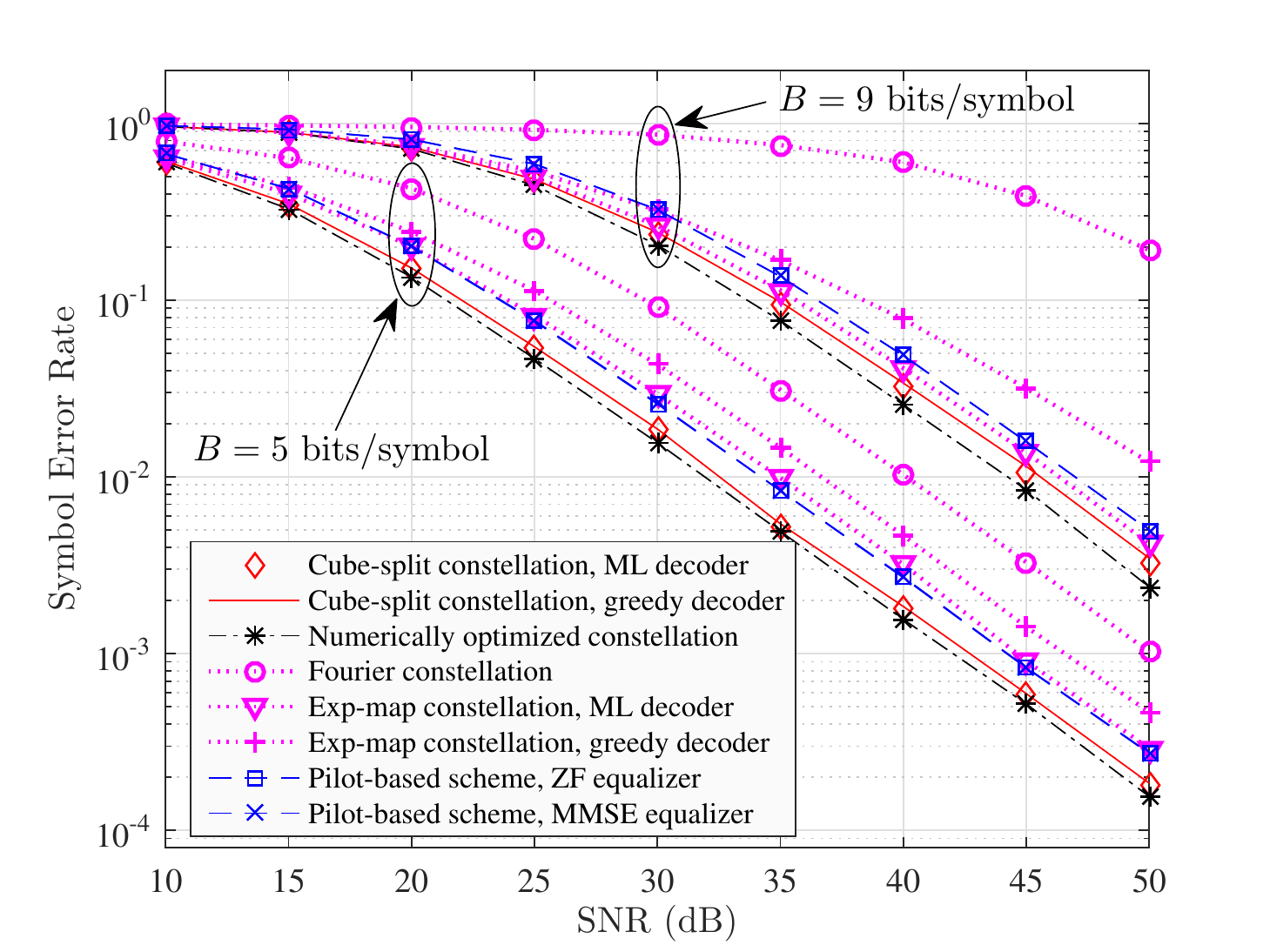}}
	\subfigure[Symbol error rate, $T=4, N=2$]{\includegraphics[width=.48\textwidth]{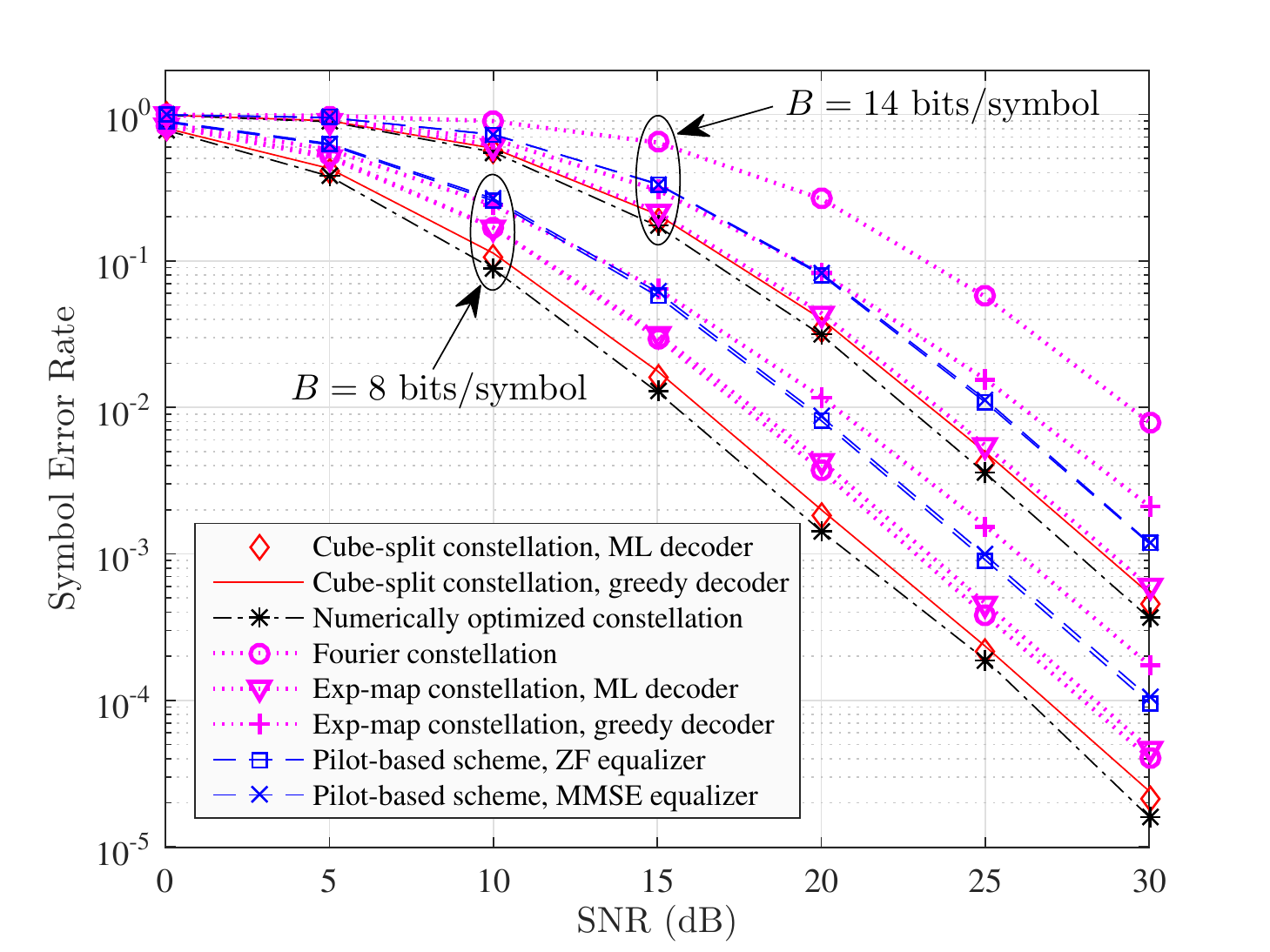}}
	\subfigure[Bit error rate]{\includegraphics[width=.48\textwidth]{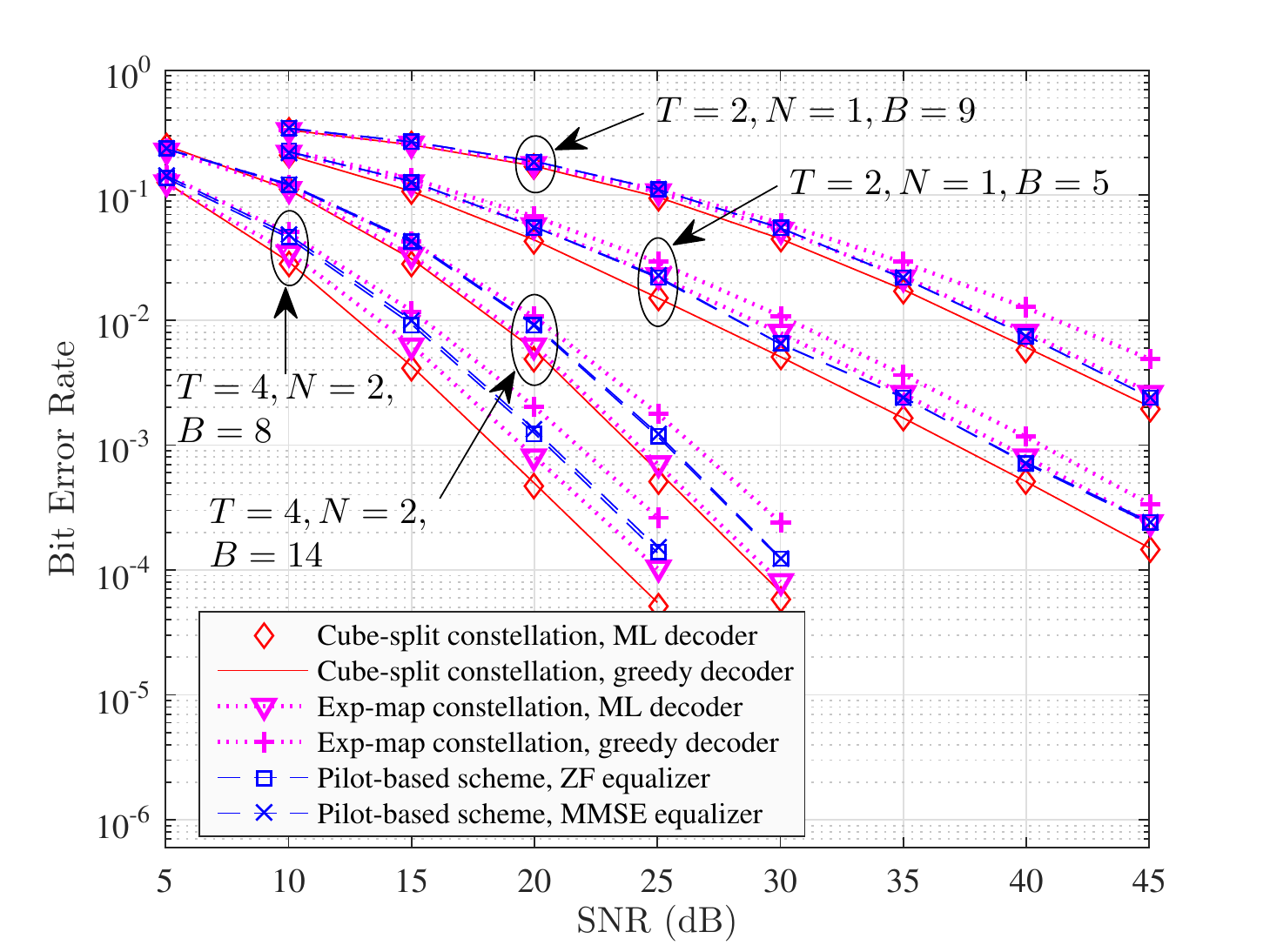}}
	\caption{The error rates of the cube-split constellation in comparison with the numerically optimized constellation, other structured constellations, and the pilot-based scheme with QAM input for $T \in \{2,4\}, N = T/2$, and different transmission rate $B$.}
	\label{fig:SER_BER_T2T4}
\end{figure}

In Fig.~\ref{fig:SER_BER_T8T16}, we show the SER and BER of the $CS(T,1)$ constellation, the exp-map constellation, and the pilot-based scheme with QAM input  for $T \in \{8,16\}$, $N=T/2$, and $B = \log_2(T) + 2(T-1)$. Note that for this large $B$, the numerically optimized constellation and the Fourier frequencies optimization for the Fourier constellation become infeasible. We see that in this regime, the gain of the cube-split constellation w.r.t. the exp-map constellation and the pilot-based scheme is more significant.
\begin{figure}[!h] 
	\centering
	\subfigure[Symbol error rate]{\includegraphics[width=.48\textwidth]{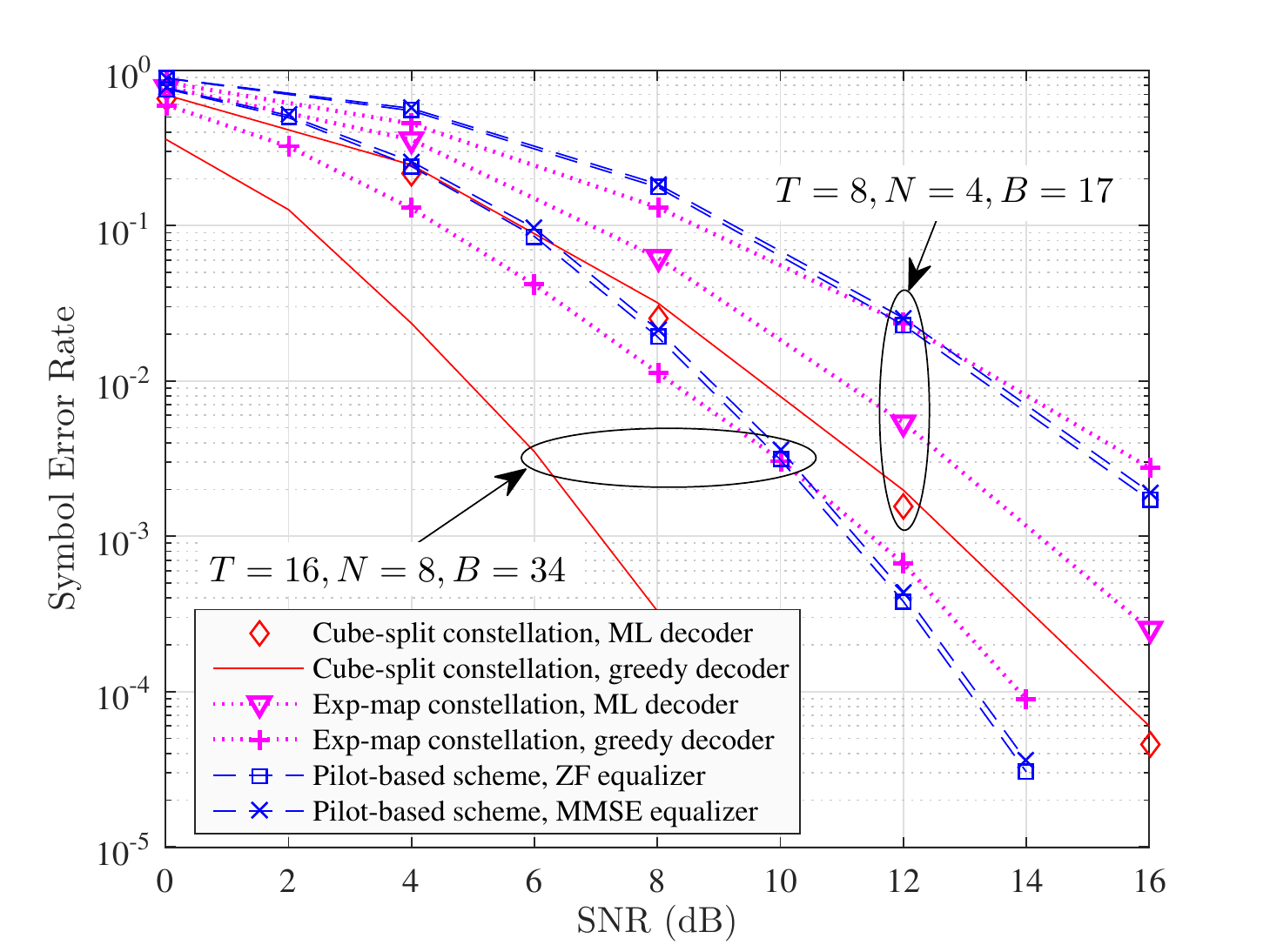}}
	\subfigure[Bit error rate]{\includegraphics[width=.48\textwidth]{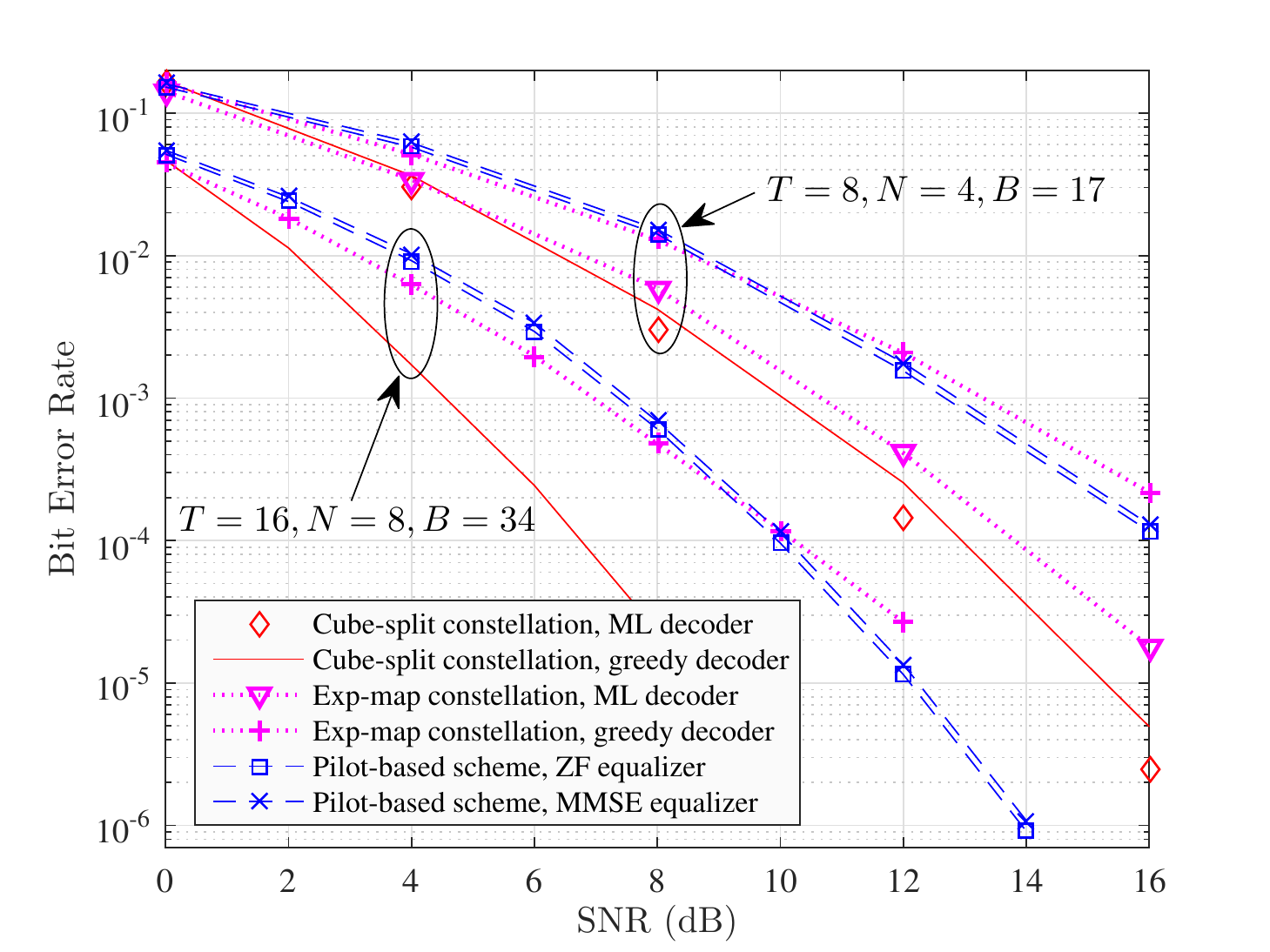}}
\caption{The error rates of the $CS(T,1)$ constellation in comparison with the exp-map constellation and the pilot-based scheme with QAM input for $T \in \{8,16\}, N = T/2$, and $B = \log_2(T) + 2(T-1)$.}
	\label{fig:SER_BER_T8T16}
\end{figure}

\subsection{Performance with Channel Coding}
Next, we integrate a systematic parallel concatenated rate-$1/3$ standard turbo code~\cite{3GPP_TS36_212}. 
The coded bits are mapped into symbols using the Gray-like labeling scheme described in Section~\ref{sec:labeling}. The turbo decoder calculates the LLR of the coded bits as in \eqref{eq:LLRcubesplit} or \eqref{eq:LLRapprox2}, and performs 10 decoding iterations for each packet. {For the pilot-based scheme, the LLR is computed as in~\eqref{eq:LLRpilot}.}

Fig.~\ref{fig:BER_turboCode} presents the BER of the coded cube-split constellation (with exact LLR \eqref{eq:LLRcubesplit} or approximate LLR~\eqref{eq:LLRapprox2}) as compared to the coded pilot-based scheme 
when the turbo encoder is applied in each packet of $640$ bits. {We also consider the MLC-MSD scheme 
in which the same turbo encoder is used in both coding levels and the exact or approximate cell-bit LLR is used in the first decoding stage.
For the $T = 2$ case (Fig.~\ref{fig:BER_turboCode}{(a)}), the BER of the cube-split constellation with approximate LLR is close to the BER with exact LLR. With the considered turbo code, the cube-split constellation outperforms the pilot-based scheme: the power gain is about $2.5$~dB for the same transmission rate of $9$ bits/symbol. 
On the other hand, for the $T \in \{4,8\}$ case (Fig.~\ref{fig:BER_turboCode}{(b)}), with the same (single-level) turbo code, the pilot-based scheme performs better than our cube-split constellation. 
However, with the MLC-MSD scheme, the performance of the cube-split constellation is greatly improved and can be better than that of the pilot-based scheme. 
This is because as the number of cells increases, the reliability of the cell bits becomes more crucial to the overall performance, and by protecting the cell bits with an individual code then using the estimated cell bits as a basis for decoding the coordinate bits, the error rate is reduced.
}
\begin{figure}[!h] 
	\centering
	\subfigure[$T = 2, N = 1, \eta = 5$]{\includegraphics[width=.49\textwidth]{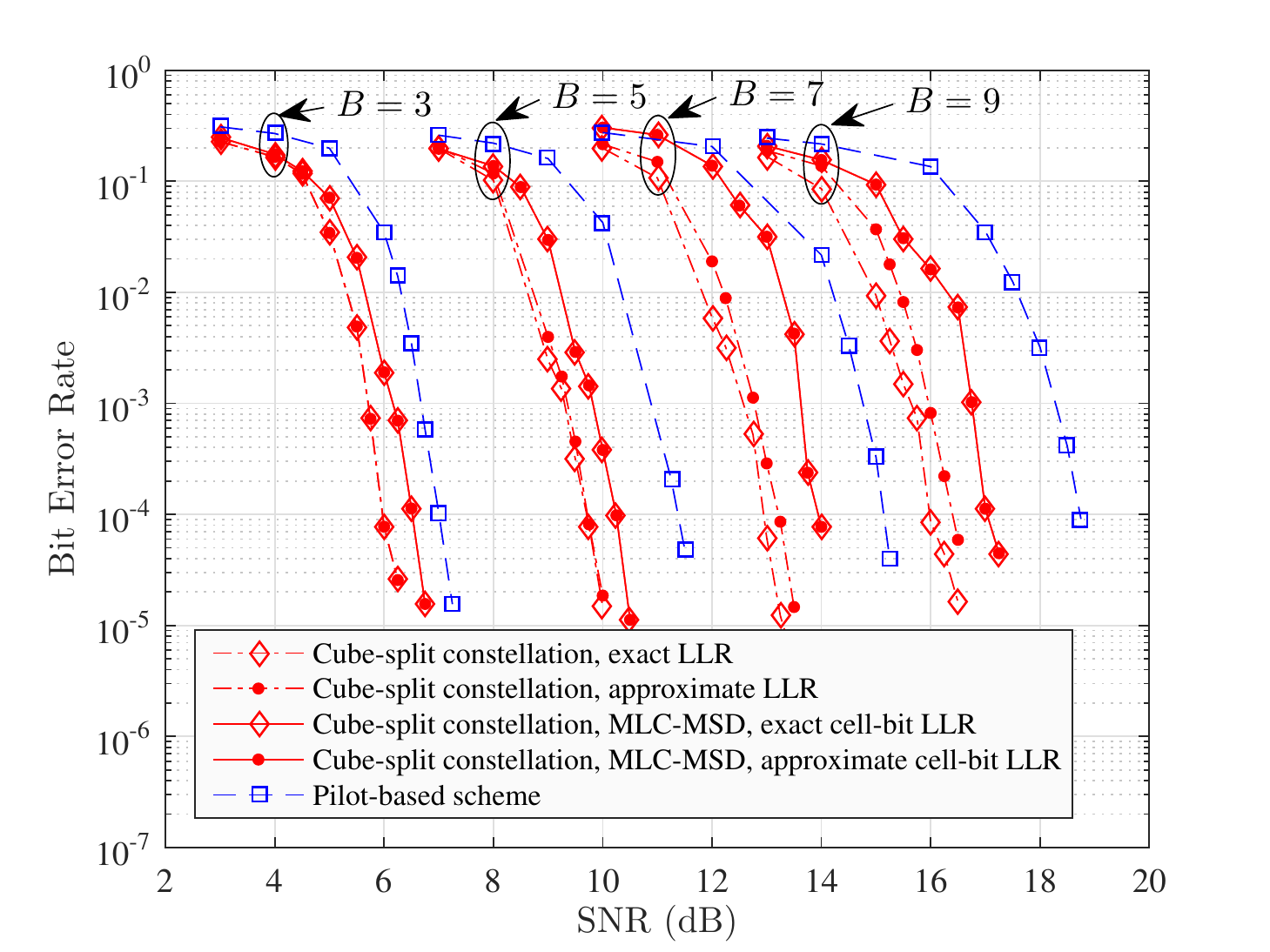}}
	\subfigure[$T \in \{4,8\}, N = T/2$]{\includegraphics[width=.49\textwidth]{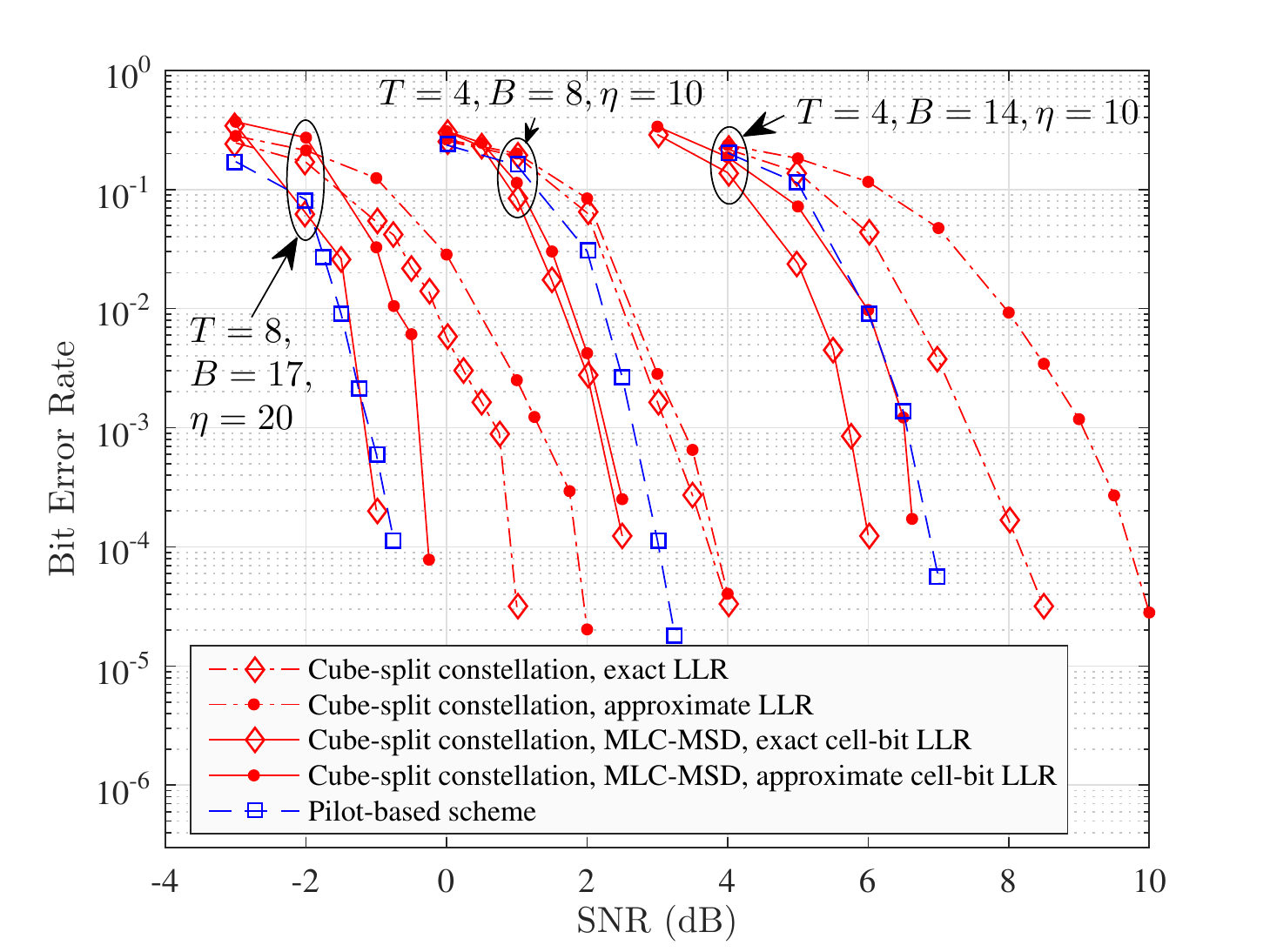}}
	\caption{The bit error rate of the cube-split constellation in comparison with the pilot-based scheme with turbo codes for $T\in \{2,4,8\}, N=T/2,$ and different transmission rate $B$.}
	\label{fig:BER_turboCode}
\end{figure}

\section{Conclusion} \label{sec:conclusion} 
We proposed a novel Grassmannian constellation for non-coherent SIMO communications. The structure of this constellation allows for on-the-fly symbol generation, a simple yet effective binary labeling, low-complexity symbol decoder and bit-wise LLR computation, {and an efficient association with a multilevel coding-multistage decoding scheme}. Analytical and numerical results show that this constellation is close to optimality in terms of packing properties, has larger minimum distance than
other structured constellations in the literature. For small coherence time/large constellation size, it outperforms the coherent pilot-based approach in terms of error rates with/without channel codes and achievable data rate under Rayleigh block fading channel.

\appendix
\subsection{ The Extension to the MIMO Case} \label{app:extension}
In general, if the transmitter has $M \leq \frac{T}{2}$ antennas, we may consider constellation symbols belonging to the Grassmannian $G(\CC^T,M)$ represented by $T\times M$ truncated unitary matrices. To extend the cube-split design to the case $M>1$, we would follow two essential steps: partitioning the Grassmannian into cells and defining a mapping from an Euclidean space onto a cell.  
To partition $G(\CC^T,M)$, we consider a set of reference subspaces $\{\Em_1,\dots,\Em_{V}\}$ defining an initial constellation in $G(\CC^T,M)$ and its associated Voronoi cells
\begin{equation}
\begin{split}
&\cell_{i} \\
&\defeq \{\Xm \in G(\CC^T,M) : d(\Em_{i},\Xm) \le d(\Em_{j},\Xm), \forall j \in [V] \setminus \{i\} \} \\
&= \{\Xm \in G(\CC^T,M) : \|\Em_{i}^\T \Xm\|_F \ge \|\Em_{j}^\T \Xm\|_F, \forall j \in [V] \setminus \{i\} \},
\end{split}
\end{equation}
where $d(\Qm_1,\Qm_2) \defeq \sqrt{M-\trace\{\Qm_1^\H \Qm_2 \Qm_2^\H \Qm_1\}}$ is the chordal distance between the subspaces spanned by the columns of $T\times M$ truncated unitary matrices $\Qm_1$ and $\Qm_2$. 
The problem of choosing the initial constellation $\Em_1,\dots,\Em_{V}$ and designing a mapping from an Euclidean space onto each cell $\cell_{i}$ preserving a property similar to Property~\ref{prop:uniform} is not evident and left as perspective for future work.
In particular, it seems difficult to describe each Voronoi region $\cell_{i}$ as it is the case for the regions of Grassmannian of lines where the condition $\|\Em_{i}^\T \Xm\|_F \ge \|\Em_{j}^\T \Xm\|_F$ expressed on the canonical basis simply translates into a coordinate-wise condition $\big|\frac{x_j}{x_i}\big|<1$.

\subsection{Mathematical Preliminaries} \label{sec:mathPre}
\begin{definition}[Univariate complex Cauchy distribution] \label{def:Cauchy}
	Let $\mu \in \CC$ and $\gamma > 0$. The probability distribution with density
	$
	f(x) = \frac{1}{\pi \gamma}\Big[1+\Big(\frac{|x-\mu|}{\gamma}\Big)^2\ \Big]^{-2}, ~ x\in \CC,
	$
	is called the univariate complex Cauchy distribution with location $\mu$ and scale $\gamma$, denoted by ${\rm Cauchy}(\mu,\gamma)$.
\end{definition}
A multivariate version of the complex Cauchy distribution is given in~\cite[Eq.(2)]{Auderset2005Angular_Gaussian}.
\begin{lemma} \label{lem:Cauchy-Gaussian}
	${\rm Cauchy}(\mu,\gamma)$ is the distribution of the ratio $\frac{\rv{u}_1}{\rv{u}_2}$ where 
	$[\rv{u}_1 \ \rv{u}_2]^\T \sim \Cc\Nc(\mathbf{0},\Sigmam)$ with $\Sigmam = c_0\Big[\begin{smallmatrix}
	\gamma + \|\mu\|^2 & \mu \\ \mu^* & 1
	\end{smallmatrix}\Big]$ for some constant $c_0$. Specifically, if $\rv{u}_1 \sim \Cc\Nc(0,\sigma_1^2)$, $\rv{u}_2\sim \Cc\Nc(0,\sigma_2^2)$, and $\frac{\E[\rv{u}_1\rv{u}_2^*]}{\sigma_1\sigma_2} \eqdef \beta$ then $\mu = \beta \frac{\sigma_1}{\sigma_2}$ and $\gamma = (1-|\beta|^2)\frac{\sigma_1^2}{\sigma_2^2}$.
\end{lemma}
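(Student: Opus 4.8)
The plan is a direct density computation for a ratio of jointly complex Gaussian variables. First I would write the joint density of $\rvVec{u} = [\rv{u}_1\ \rv{u}_2]^\T \sim \Cc\Nc(\mathbf{0},\Sigmam)$ as $p_{\rvVec{u}}(\uv) = \pi^{-2}(\det\Sigmam)^{-1}\exp(-\uv^\H\Sigmam^{-1}\uv)$ and change variables to $(\rv{r},\rv{v}) \defeq (\rv{u}_1/\rv{u}_2,\ \rv{u}_2)$, i.e.\ $\rv{u}_1 = \rv{r}\rv{v}$. Treating each complex variable as a pair of reals, this substitution is block-triangular with diagonal blocks ``multiplication by $v$'' and ``identity'', so its real Jacobian equals $|v|^2$. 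Since $\uv^\H\Sigmam^{-1}\uv = |v|^2 q(r)$ with $q(r) \defeq [r^*\ 1]\Sigmam^{-1}[r\ 1]^\T > 0$, this yields $p_{\rv{r},\rv{v}}(r,v) = \pi^{-2}(\det\Sigmam)^{-1}|v|^2 e^{-q(r)|v|^2}$.

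Second, I would integrate out $\rv{v}$ using the elementary identity $\int_{\CC}|v|^2 e^{-a|v|^2}\dif v = \pi a^{-2}$ for $a > 0$ (one line in polar coordinates). This gives $p_{\rv{r}}(r) = \big(\pi\det\Sigmam\big)^{-1} q(r)^{-2}$ --- note that the squared denominator, which is the signature of a complex Cauchy law, is produced precisely by the $|v|^2$ Jacobian factor. It then remains to compute $\det\Sigmam$ and $q(r)$ for the structured matrix $\Sigmam = \const\big[\begin{smallmatrix}\gamma+\|\mu\|^2 & \mu \\ \mu^* & 1\end{smallmatrix}\big]$: the $2\times2$ determinant is $\det\Sigmam = \const^2\gamma$, the adjugate gives $\Sigmam^{-1} = (\const\gamma)^{-1}\big[\begin{smallmatrix}1 & -\mu \\ -\mu^* & \gamma+\|\mu\|^2\end{smallmatrix}\big]$, and completing the square gives $q(r) = (\const\gamma)^{-1}\big(|r|^2 - \mu r^* - \mu^* r + \|\mu\|^2 + \gamma\big) = (\const\gamma)^{-1}\big(|r-\mu|^2 + \gamma\big)$. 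Substituting back, $p_{\rv{r}}(r) = \gamma\big[\pi(|r-\mu|^2+\gamma)^2\big]^{-1}$, which is exactly the density of the complex Cauchy distribution ${\rm Cauchy}(\mu,\gamma)$ of Definition~\ref{def:Cauchy}; in particular the free constant $\const$ cancels, as it must.

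Third, the ``specifically'' clause needs no further integration: if $\rv{u}_1\sim\Cc\Nc(0,\sigma_1^2)$, $\rv{u}_2\sim\Cc\Nc(0,\sigma_2^2)$ and $\E[\rv{u}_1\rv{u}_2^*] = \beta\sigma_1\sigma_2$, then $\Cov(\rvVec{u}) = \big[\begin{smallmatrix}\sigma_1^2 & \beta\sigma_1\sigma_2 \\ \beta^*\sigma_1\sigma_2 & \sigma_2^2\end{smallmatrix}\big] = \sigma_2^2\big[\begin{smallmatrix}\sigma_1^2/\sigma_2^2 & \beta\sigma_1/\sigma_2 \\ \beta^*\sigma_1/\sigma_2 & 1\end{smallmatrix}\big]$; matching this entrywise with $\const\big[\begin{smallmatrix}\gamma+\|\mu\|^2 & \mu \\ \mu^* & 1\end{smallmatrix}\big]$ forces $\const = \sigma_2^2$, $\mu = \beta\sigma_1/\sigma_2$, and hence $\gamma = \sigma_1^2/\sigma_2^2 - \|\mu\|^2 = (1-|\beta|^2)\sigma_1^2/\sigma_2^2$, as claimed.

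I do not expect a real obstacle: this is essentially a textbook ratio-of-Gaussians computation. The two steps that need care are (i) getting the \emph{real} Jacobian of the complex change of variables right --- it is $|v|^2$, and this is exactly what raises the power in the denominator from $1$ to $2$ --- and (ii) keeping the normalization $\int_{\CC}|v|^2 e^{-a|v|^2}\dif v = \pi a^{-2}$ correct. An alternative that avoids inverting $\Sigmam$ is to first whiten: write $\rv{u}_1 = \big(\E[\rv{u}_1\rv{u}_2^*]/\sigma_2^2\big)\rv{u}_2 + \rv{e}$ with $\rv{e}$ independent of $\rv{u}_2$, so that $\rv{r}$ is a deterministic shift of the ratio of two independent centered complex Gaussians, reducing the claim to the isotropic case; but the direct route above is the shortest.
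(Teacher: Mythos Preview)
Your argument is correct. The paper itself does not give a proof of this lemma at all; it simply states that the relation ``was stated in, e.g.,~\cite{Auderset2005Angular_Gaussian,Baxley2010complexGaussianRatio}'' and moves on. Your direct density computation --- change of variables $(\rv{u}_1,\rv{u}_2)\mapsto(\rv{r},\rv{v})$ with real Jacobian $|v|^2$, followed by the Gaussian integral $\int_{\CC}|v|^2e^{-a|v|^2}\dif v=\pi a^{-2}$ --- is the standard self-contained route and all your intermediate steps (determinant, inverse, completion of the square, cancellation of $\const$, and the entrywise matching for the ``specifically'' clause) check out.

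One small remark: the density you obtain, $p_{\rv{r}}(r)=\gamma\big[\pi(|r-\mu|^2+\gamma)^2\big]^{-1}$, is the correctly normalized complex Cauchy density and is the form the paper actually \emph{uses} later (cf.\ \eqref{eq:pdf_t1}); it matches Definition~\ref{def:Cauchy} provided the expression $\big(\tfrac{|x-\mu|}{\gamma}\big)^2$ there is read as $\tfrac{|x-\mu|^2}{\gamma}$, i.e.\ with $\gamma$ playing the role of a squared scale. This is a parametrization quirk of the paper, not an issue with your derivation.
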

This relation between the Cauchy and Gaussian distributions was stated in, e.g.,~\cite{Auderset2005Angular_Gaussian,Baxley2010complexGaussianRatio}.

\begin{lemma} \label{lem:Cauchy-Grassmann} 
	The span of $\rvVec{x} = [\rv{x}_1 \ \rv{x}_2]^\T$ is uniformly distributed in $\cell_{1} \defeq \big\{\xv\in G(\CC^2,1): |x_{1}|>|x_2|\big\}$ if and only if the quotient $\frac{\rv{x}_2}{\rv{x}_1}$ follows a ${\rm Cauchy}(0,1)$ distribution truncated on $\{t\in\mathbb{C}: |t|<1\}$. 
\end{lemma}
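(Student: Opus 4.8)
The plan is to work in the affine chart $t = x_2/x_1$ on $G(\CC^2,1)$ and to identify the law that the rotation-invariant (uniform) probability measure induces on it; write $\rv{t} \defeq \rv{x}_2/\rv{x}_1$ for the corresponding random quantity. On $\cell_1$ one has $|x_1| > |x_2| \ge 0$, so $\rv{x}_1 \ne 0$ (a.s., once we restrict to $\cell_1$) and $\rv{t}$ is well defined, while the condition $\xv \in \cell_1$ is \emph{equivalent} to $|t| < 1$. The map $\xv \mapsto x_2/x_1$ restricts to a bijection from $\cell_1$ onto the open disc $\mathbb{D} \defeq \{t\in\CC : |t|<1\}$, with inverse $t \mapsto \mathrm{span}\big(\tfrac{1}{\sqrt{1+|t|^2}}[1\ t]^\T\big)$, and both maps are continuous, hence Borel isomorphisms. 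Consequently the span of $\rvVec{x}$ is uniformly distributed on $\cell_1$ if and only if the law of $\rv{t}$ is the push-forward, through this chart, of the uniform law on $\cell_1$, so the lemma reduces to computing that push-forward.

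First I would compute the push-forward of the \emph{full} invariant measure. Let $\rvVec{g} \sim \Cc\Nc(\mathbf{0},\Id_2)$; then $\rvVec{g}/\|\rvVec{g}\|$ is isotropically distributed on the unit sphere of $\CC^2$, so its span is uniformly distributed on $G(\CC^2,1)$, as recalled in Section~\ref{sec:grassmann_const}. Since rescaling leaves the quotient unchanged, $\rv{x}_2/\rv{x}_1 = \rv{g}_2/\rv{g}_1$, and Lemma~\ref{lem:Cauchy-Gaussian} applied with $\sigma_1 = \sigma_2 = 1$ and $\beta = \E[\rv{g}_2\rv{g}_1^*] = 0$ gives $\mu = 0$ and $\gamma = 1$, i.e. $\rv{g}_2/\rv{g}_1 \sim {\rm Cauchy}(0,1)$, whose density on $\CC$ is $\tfrac1\pi(1+|t|^2)^{-2}$.

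Next I would descend to the cell by conditioning. Restricting and renormalizing the uniform law on $G(\CC^2,1)$ to the event $\xv\in\cell_1$ yields, by definition, the uniform law on $\cell_1$; under the chart this event is $\mathbb{D}$, so the induced law of $\rv{t}$ is exactly ${\rm Cauchy}(0,1)$ truncated to the unit disc (with density $\tfrac2\pi(1+|t|^2)^{-2}$ on $\mathbb{D}$, consistent with $\cell_1$ carrying half of the total mass by the $\rv{x}_1\leftrightarrow\rv{x}_2$ symmetry). Combining this with the Borel-isomorphism observation of the first paragraph gives the stated ``if and only if''.

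The point that needs care is being precise about what ``uniformly distributed in $\cell_1$'' means — namely the normalized restriction of the rotation-invariant probability measure on $G(\CC^2,1)$ — and about the measurability of the chart and of its inverse, together with the fact that the excluded boundary $|x_1|=|x_2|$ (equivalently $|t|=1$) is a null set; once those formalities are settled, the argument is just Lemma~\ref{lem:Cauchy-Gaussian} followed by one conditioning step. A slightly more computational alternative would bypass the Gaussian representation and instead evaluate the density of the invariant volume form in the coordinate $t$ directly, via the Jacobian of $t\mapsto \tfrac{1}{\sqrt{1+|t|^2}}[1\ t]^\T$, which again produces the factor $(1+|t|^2)^{-2}$; invoking Lemma~\ref{lem:Cauchy-Gaussian} is shorter and keeps the development self-contained.
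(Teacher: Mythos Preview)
Your proposal is correct and follows essentially the same route as the paper: both arguments identify the chart $\xv\mapsto x_2/x_1$ as a bijection between $\cell_1$ and the open unit disc, invoke the Gaussian representation of the uniform law on $G(\CC^2,1)$ together with Lemma~\ref{lem:Cauchy-Gaussian} to obtain ${\rm Cauchy}(0,1)$ for the ratio, and then restrict/condition to $|t|<1$. The only cosmetic difference is the direction of the first implication (you push the uniform law forward to get truncated Cauchy, whereas the paper starts from the Cauchy hypothesis and pulls back to uniformity), and you are more explicit about the measurability and null-set issues; the substance is the same.
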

\begin{proof}
	From Lemma~\ref{lem:Cauchy-Gaussian}, if $\frac{\rv{x}_2}{\rv{x}_1} \sim {\rm Cauchy}(0,1)$ then $\frac{\rv{x}_2}{\rv{x}_1}$ is identically distributed to $\frac{\rv{y}_2}{\rv{y}_1}$ with $\rvVec{y} = [\rv{y}_1\ \rv{y}_2]^\T \sim \Cc\Nc(\mathbf{0},\Id_2)$. Since Grassmannian symbols are defined up to a scaling, $\Span{\rvVec{x}}$ has the same distribution as $\Span{\rvVec{y}}$. On the other hand, according to~\cite{Auderset2005Angular_Gaussian}, $\Span{\rvVec{y}}$ is uniformly distributed in $G(\CC^2,1)$. Therefore, $\Span{\rvVec{x}}$ is uniformly distributed in $G(\CC^2,1)$. Furthermore, $\frac{\rv{x}_2}{\rv{x}_1} \in \{t\in\mathbb{C}: |t|<1\}$ means $\Span{\rvVec{x}} \in \cell_1$. The converse follows from the bijectivity of the mapping $\frac{\rv{x}_2}{\rv{x}_1} \mapsto \Span{\rvVec{x}}$.
\end{proof}

\begin{definition}[${\rm F}(2,2)$ distribution~{\cite[Chap.~27]{Johnson1995continuous}}] \label{def:Fisher}
	The probability distribution with pdf $f(x) = \frac{1}{(1+x)^2}$ and CDF $F(x) = \frac{x}{1+x}$, $x\ge 0$, is called the ${\rm F}$-distribution with $(2,2)$ DoF denoted by ${\rm F}(2,2)$. ${\rm F}(2,2)$ is the distribution of the ratio of two independent chi-square random variables with $2$ DoF.
\end{definition}

\begin{lemma} \label{lem:Cauchy-Fisher} 
	If $\rv{x} = \rv{r} e^{\jmath \uptheta}$ where $\rv{r}$ and $\uptheta$ are independent, $\rv{r}^2 \sim {\rm F}(2,2)$, and $\uptheta$ is uniformly distributed in $[0,2\pi]$, then $\rv{x} \sim {\rm Cauchy}(0,1)$.
\end{lemma}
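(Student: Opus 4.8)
The plan is to compute the law of $\rv{x}$ directly as a density on $\CC\cong\RR^2$ and to recognize it as the ${\rm Cauchy}(0,1)$ density $\tfrac{1}{\pi}\bigl(1+|x|^2\bigr)^{-2}$ from Definition~\ref{def:Cauchy}.

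First I would translate the hypothesis $\rv{r}^2\sim{\rm F}(2,2)$ into a statement about $\rv{r}$. By Definition~\ref{def:Fisher}, $\rv{r}^2$ has cumulative distribution function $s\mapsto \tfrac{s}{1+s}$ on $[0,\infty)$, so for $a\ge 0$ one has $\P[\rv{r}\le a]=\P[\rv{r}^2\le a^2]=\tfrac{a^2}{1+a^2}$; differentiating gives the density $f_{\rv{r}}(r)=\tfrac{2r}{(1+r^2)^2}$ on $[0,\infty)$. Since $\rv{r}$ and $\uptheta$ are independent and $\uptheta$ is uniform on $[0,2\pi]$, the pair $(\rv{r},\uptheta)$ has joint density $\tfrac{1}{2\pi}f_{\rv{r}}(r)=\tfrac{r}{\pi(1+r^2)^2}$ on $[0,\infty)\times[0,2\pi]$.

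Next I would pass from the polar description $(\rv{r},\uptheta)$ of $\rv{x}$ to a density with respect to the Lebesgue measure on $\CC$. Writing the Lebesgue measure in polar coordinates as $r\,\dif r\,\dif\uptheta$, the density $g$ of $\rv{x}$ must satisfy $g(x)\,r=\tfrac{r}{\pi(1+r^2)^2}$ at any $x$ with $|x|=r$; hence $g(x)=\tfrac{1}{\pi(1+|x|^2)^2}$ for all $x\in\CC$. This is precisely the ${\rm Cauchy}(0,1)$ density of Definition~\ref{def:Cauchy} with $\mu=0$ and $\gamma=1$, so $\rv{x}\sim{\rm Cauchy}(0,1)$.

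This argument reduces to two elementary one-variable changes of variables, so I do not anticipate any genuine difficulty; the only point deserving care is correctly accounting for the Jacobian factor $r$ when converting between the joint density of $(\rv{r},\uptheta)$ and the planar density of $\rv{x}$. Alternatively one could invoke Lemma~\ref{lem:Cauchy-Gaussian}: if $[\rv{u}_1\ \rv{u}_2]^\T\sim\Cc\Nc(\mathbf{0},\Id_2)$ then $\frac{\rv{u}_1}{\rv{u}_2}\sim{\rm Cauchy}(0,1)$, its squared modulus is a ratio of two independent chi-square variables with $2$ degrees of freedom, hence ${\rm F}(2,2)$ by Definition~\ref{def:Fisher}, and its phase is uniform and independent of the modulus by circular symmetry; matching these with the hypotheses on $\rv{r}$ and $\uptheta$ gives the claim.
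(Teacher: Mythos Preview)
Your proof is correct and follows essentially the same approach as the paper: compute the joint density in polar coordinates, perform the change of variables to the planar Lebesgue measure, and recognize the resulting density as ${\rm Cauchy}(0,1)$. The only cosmetic difference is that the paper works with the pair $(\rv{r}^2,\uptheta)$ directly (so the Jacobian factor is $\tfrac{1}{2}$ rather than $r$), while you first pass to the density of $\rv{r}$; your alternative argument via Lemma~\ref{lem:Cauchy-Gaussian} is a nice addition not present in the paper.
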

\begin{proof}
	Since $\rv{r}^2$ and $\uptheta$ are independent, $f_{\rv{r}^2, \uptheta}(r^2,\theta) = f_{\rv{r}^2}(r^2)f_{\uptheta}(\theta) = \frac{1}{(1+r^2)^2} \frac{1}{2\pi}$. 
	A change of variable from 
	$\rv{x}$ to $(\rv{r}^2, \uptheta)$ 
	yields $\frac{1}{2}f_{\rv{x}}(r e^{\jmath \theta}) = f_{\rv{r}^2, \uptheta}(r^2,\theta)$, so $f_{\rv{x}}(x) = \frac{1}{\pi(1+|x|^2)^2}$, which is ${\rm Cauchy}(0,1)$ pdf.
\end{proof}

\begin{lemma} \label{lem:Cauchy-Unif} 
	Let  $[\rv{a}_1\ \rv{a}_2]^T$ be uniformly distributed on $(0,1)^2$. Then $\rv{w} \defeq \Nc^{-1}(\rv{a}_1)+\im\Nc^{-1}(\rv{a}_2)$ follows a $\Cc\Nc(0,2)$ distribution and $\rv{t} \defeq \sqrt{\frac{1-\exp(-\frac{|\rv{w}|^2}{2})}{1+\exp(-\frac{|\rv{w}|^2}{2})}} \frac{\rv{w}}{|\rv{w}|}$ follows a ${\rm Cauchy}(0,1)$ distribution truncated on $\{t\in\mathbb{C}: \ |t|<1\}$.
\end{lemma}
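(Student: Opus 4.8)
The plan is to prove the two claims separately and to derive the second by reducing it, through one change of variables on the squared modulus, to Lemma~\ref{lem:Cauchy-Fisher}.

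First I would show $\rv{w}\sim\Cc\Nc(0,2)$. Since $\Nc(\cdot)$ is the standard normal CDF, the probability integral transform gives that each of $\Nc^{-1}(\rv{a}_1)$ and $\Nc^{-1}(\rv{a}_2)$ is a standard real Gaussian, and they are independent because $\rv{a}_1$ and $\rv{a}_2$ are. Hence $\Re(\rv{w})$ and $\Im(\rv{w})$ are i.i.d.\ $\Nc(0,1)$, which is exactly $\rv{w}\sim\Cc\Nc(0,2)$ in the paper's convention ($\E[|\rv{w}|^2]=2$).

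Next, for $\rv{t}$, I would exploit the rotational invariance of $\rv{w}$: write $\rv{w}=|\rv{w}|e^{\im\uptheta}$, where $\uptheta$ is uniform on $[0,2\pi]$, independent of $|\rv{w}|$, and $|\rv{w}|^2$ is exponential with mean $2$ (a $\chi^2_2$ variable). By construction $\rv{t}$ has the same phase $\uptheta$ and modulus $\rv{r}\defeq g(|\rv{w}|)$ with $g(\rho)\defeq\sqrt{(1-e^{-\rho^2/2})/(1+e^{-\rho^2/2})}$, and $g$ maps $[0,\infty)$ onto $[0,1)$; so $\uptheta$ is uniform and independent of $\rv{r}$, and $\rv{r}\in[0,1)$. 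It then suffices to identify the law of $\rv{r}^2$. With the substitution $\rv{u}\defeq e^{-|\rv{w}|^2/2}$, the exponential density of $|\rv{w}|^2$ makes $\rv{u}$ uniform on $(0,1)$; and since $\rv{r}^2=(1-\rv{u})/(1+\rv{u})$ has smooth monotone inverse $\rv{u}=(1-\rv{r}^2)/(1+\rv{r}^2)$ with Jacobian $|d\rv{u}/d(\rv{r}^2)|=2/(1+\rv{r}^2)^2$, the density of $\rv{r}^2$ on $(0,1)$ is $2/(1+x)^2$ --- that is, the ${\rm F}(2,2)$ density $1/(1+x)^2$ restricted to $[0,1)$ and renormalized by $F(1)=\tfrac12$.

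Finally I would assemble the pieces. Construct an auxiliary variable $\rv{x}\defeq\rv{r}'e^{\im\uptheta'}$ with $(\rv{r}')^2\sim{\rm F}(2,2)$ and $\uptheta'$ uniform on $[0,2\pi]$ and independent of $\rv{r}'$; by Lemma~\ref{lem:Cauchy-Fisher}, $\rv{x}\sim{\rm Cauchy}(0,1)$. Then ${\rm Cauchy}(0,1)$ truncated on the unit disc is by definition the conditional law of $\rv{x}$ given $\{|\rv{x}|<1\}=\{(\rv{r}')^2<1\}$; since this event involves only $\rv{r}'$, which is independent of $\uptheta'$, conditioning leaves $\uptheta'$ uniform and independent while replacing the law of $(\rv{r}')^2$ by ${\rm F}(2,2)$ restricted to $[0,1)$. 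This is precisely the joint law of $(\rv{r}^2,\uptheta)$ found above, so $\rv{t}$ and $\rv{x}\mid\{|\rv{x}|<1\}$ have the same distribution, which proves the claim. The only delicate point --- the ``hard part,'' such as it is --- is that this argument must compare the \emph{joint} laws of modulus and phase, not merely the marginals; this is legitimate here because on both sides the modulus and phase are independent and the truncating event depends on the modulus alone.
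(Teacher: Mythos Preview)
Your argument is correct and follows essentially the same route as the paper: establish $\rv{w}\sim\Cc\Nc(0,2)$ by the probability integral transform, use the independence of modulus and phase, show that $|\rv{t}|^2$ is ${\rm F}(2,2)$ truncated on $(0,1)$ (the paper does this via the CDF $F(x)=2x/(1+x)$ rather than your Jacobian computation, but the content is identical), and then invoke Lemma~\ref{lem:Cauchy-Fisher}. Your closing paragraph spelling out why the joint law comparison is legitimate is a nice addition that the paper leaves implicit.
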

\begin{proof}
	The Gaussianity of $\rv{w}$ is a standard result. Then, $|\rv{w}|^2$ is independent from $\frac{\rv{w}}{|\rv{w}|}$ (thus $|\rv{t}|^2$ is independent from $\frac{\rv{t}}{|\rv{t}|}$) and is chi-square distributed with $2$ DoF, and $1-\exp(-\frac{|\rv{w}|^2}{2})$ follows a uniform distribution on $(0,1)$. Then, $\frac{1-\exp(-\frac{|\rv{w}|^2}{2})}{1+\exp(-\frac{|\rv{w}|^2}{2})}$ has CDF $F(x)=\frac{2x}{1+x}, x \in (0,1)$. Using Definition~\ref{def:Fisher}, we see that  $\frac{1-\exp(-\frac{|\rv{w}|^2}{2})}{1+\exp(-\frac{|\rv{w}|^2}{2})}$ follows a ${\rm F}(2,2)$ distribution truncated on $(0,1)$.
	Finally, using Lemma~\ref{lem:Cauchy-Fisher}, we conclude that $\rv{t}$ follows a ${\rm Cauchy}(0,1)$ distribution truncated on $\{t\in\mathbb{C}: \ |t|<1\}$.
\end{proof}

\subsection{Proof of Lemma~\ref{lem:mindist_CS(T,1)}} \label{proof:mindist_CS(T,1)}
If $B_1 = \dots = B_{2(T-1)} = 1$, we have $A_j = \big\{\frac{1}{4},\frac{3}{4}\big\}, \forall j\in [2(T-1)]$. Let $m \defeq \Nc^{-1}(\frac{3}{4}) = -\Nc^{-1}(\frac{1}{4})$, then $w_j \in \{ \pm m \pm \im m\}$ and $|w_j| = m\sqrt{2}$. Then 
$
t_j
= \sqrt{c} q_j, j\in [T-1],
$
where $c\defeq \frac{1-e^{-m^2}}{1+e^{-m^2}}$ and $q_j \in \big\{\pm\frac{1}{\sqrt{2}} \pm \im \frac{1}{\sqrt{2}} \big\}$ is a $4$-QAM symbol with unit power. Substituting $t_j$ in \eqref{eq:encodeMapping}, a constellation symbol can be written simply as
%
$
\xv(\qv) = \big(c^{-1}+T-1\big)^{-1/2}\big[q_1 \, \dots\, q_{i-1} \ c^{-1/2} \ q_{i}\, \dots\, q_{T-1}\big]^\T.
$
Consider another symbol $\bar{\xv}(\bar{\qv}) \ne \xv(\qv)$. 
\subsubsection{If $\xv$ and $\bar{\xv}$ are in the same cell} The correlation between $\xv$ and $\bar{\xv}$ is $\xv^\H\bar{\xv} = (c^{-1}+T-1)^{-1}\big(c^{-1} + \sum_{i=j}^{T-1}q_j^*\bar{q}_j\big)$.
Notice that $q_j^*\bar{q}_j \in \{\pm 1, \pm \im\}$, we denote by $n_{a}$ the number of terms $q_j^*\bar{q}_j$ having value $a \in \{\pm 1, \pm \im\}$. We have that $n_1 + n_{-1} + n_\im + n_{-\im} = T-1$ and $n_1 < T-1$. We would like to find $\{n_a\}$ that maximize $|\xv^\H\bar{\xv}|^2 = (c^{-1}+T-1)^{-2} \big[(c^{-1} + n_1 - n_{-1})^2 + (n_\im - n_{-\im})^2 \big]$. The optimal $\{n_a\}$ must satisfy $\{n_1 = 0$ or $n_{-1} = 0\}$ and $\{n_\im = 0$ or $n_{-\im} = 0\}$ since otherwise, there always exists other $\{n_a\}$ that increases $|\xv^\H\bar{\xv}|^2$. 
Specifically, the optimal $\{n_a\}$ falls into one of two cases: $\{n_\im = 0 ~\text{or}~ n_{-\im} = 0; n_{-1} = 0\}$ or $\{n_\im = 0~\text{or}~n_{-\im} = 0; n_{1} = 0; n_{-1}\ge c^{-1}\}$. By inspecting these cases, we find that the maximal value of $|\xv^\H\bar{\xv}|^2$ is $(c^{-1}+T-1)^{-2} [(c^{-1}+T-2)^2 + 1]$ achieved with $n_1 = T-2,(n_\im;n_{-\im}) \in \{(0;1),(1;0)\}$.

\subsubsection{If $\xv$ is in cell $S_i$ and $\bar{\xv}$ is in cell $S_{\bar{i}}$ with $\bar{i} \ne i$} We denote $\vect{r} = \big[q_1 \, \dots\, q_{i-1} \ c^{-1/2} \ q_{i}\, \dots\, q_{T-1}\big]^\T$ and $\bar{\vect{r}} = \big[\bar{q}_1 \, \dots\, \bar{q}_{\bar{i}-1} \ c^{-1/2} \ \bar{q}_{\bar{i}}\, \dots\, \bar{q}_{T-1}\big]^\T$. Then $\xv^\H\bar{\xv} = (c^{-1}+T-1)^{-1} \big[c^{-1/2}(r_{\bar{i}} + \bar{r}_{i}) + \sum_{j\in [T]\setminus\{i,\bar{i}\}}r_j^*\bar{r}_j\big]$. Observe that $r_{\bar{i}} + \bar{r}_{i} \in \{0,\pm \sqrt{2},\pm \im\sqrt{2},\pm \sqrt{2}\pm \im\sqrt{2}\}$. By looking at each value of $r_{\bar{i}} + \bar{r}_{i}$ and inspecting $\{n_{a}\}$ as done in the previous case, we find that the maximal value of $|\xv^\H\bar{\xv}|^2$ is: $(c^{-1}+T-1)^{-2}(T-2)^2$ if $r_{\bar{i}} + \bar{r}_{i} = 0$; $(c^{-1}+T-1)^{-2}(\sqrt{2c^{-1}} + T-2)^2$ if $r_{\bar{i}} + \bar{r}_{i} \in \{\pm \sqrt{2},\pm \im\sqrt{2} \}$; and $(c^{-1}+T-1)^{-2}(\sqrt{2c^{-1}} + T-2)^2 + 2c^{-1}$ if $r_{\bar{i}} + \bar{r}_{i} \in \{\pm \sqrt{2} \pm \im\sqrt{2} \}$. The maximal value of $|\xv^\H\bar{\xv}|^2$ among these is $(c^{-1}+T-1)^{-2}(\sqrt{2c^{-1}} + T-2)^2 + 2c^{-1}$.

Comparing the above two cases, we conclude that the overall maximal value of $|\xv^\H\bar{\xv}|^2$ is $(c^{-1}+T-1)^{-2} [(c^{-1}+T-2)^2 + 1] = \big|1-\frac{1+\im}{c^{-1} + T-1}\big|^2$ attained with $n_1 = T-2$ and $(n_\im;n_{-\im}) \in \{(0;1),(1;0)\}$, which translates to \eqref{eq:mindist_coor_1} and \eqref{eq:mindist_coor_2} with $B_0 = 1$.

\subsection{Proof of Proposition~\ref{prop:Pe,b0=1}} \label{app:PeBo1}
With $CS(T,1)$, the transmitted signal is 
$
\rvVec{x} = \left(c^{-1}+T-1\right)^{-1/2}\left[\rv{q}_1  \dots \rv{q}_{i-1} \ c^{-1/2} \ \rv{q}_{i} \dots \rv{q}_{T-1}\right]^\T,
$
where $c = \frac{1-e^{-m^2}}{1+e^{-m^2}}$, $m = \Nc^{-1}(\frac{3}{4})$, and $\rv{q}_j \in \big\{\pm \frac{1}{\sqrt{2}} \pm \im\frac{1}{\sqrt{2}}\big\}$, $j\in [T-1]$ (see Appendix~\ref{proof:mindist_CS(T,1)}).
The received symbols are $\rv{y}_{i} = \sqrt{\rho_0} \rv{h} + \rv{z}_{i}$, and $\rv{y}_j = \sqrt{c\rho_0} \rv{q}_l \rv{h} +\rv{z}_j$ for $l = j-\mathbbm{1}\{j\le i \}$ if $j\ne i$, 
where $\rv{h}\sim\Cc\Nc(0,1)$, $\rv{z}_j\sim\Cc\Nc(0,1), j \in [T]$, and $\rho_0 = \frac{\rho T}{1+(T-1)c}$. 

\subsubsection{Cell error probability} 
A cell error occurs if $\exists j \ne i$ s.t. $|\rv{y}_j| > |\rv{y}_{i}|$.\footnote{Without the additive noise $\rvVec{z}$, $|\rv{y}_{i}|^2 = \frac{1+e^{-m^2}}{1-e^{-m^2}}|\rv{y}_{j}|^2 > |\rv{y}_{j}|^2$ for all $j\ne i$, and therefore, there is no cell error.} 
Given $\rv{h}$ and $\rv{y}_{i}$,
\begin{align}
&\Pr\{\hat{i} \ne i \cond \rv{h},\rv{y}_{i}\} \notag\\
&= \Pr\{\exists j \ne i: |\rv{y}_j|^2 > |\rv{y}_{i}|^2 \cond \rv{h},\rv{y}_{i}\} \\
&= 1-\textstyle\prod_{j\ne i} \Pr(|\rv{y}_j|^2 \le |\rv{y}_{i}|^2 \cond \rv{h},\rv{y}_{i}) \label{eq:tmp1089}\\
&= 1-(1-Q_1(\sqrt{2c\rho_0}|\rv{h}|,\sqrt{2}|\rv{y}_{i}|))^{T-1} \label{eq:tmp1090},
\end{align}
where $Q_1(.,.)$ is the Marcum Q-function with parameter 1. Here, \eqref{eq:tmp1089} holds because conditioned on $\rv{h}$ and $\rv{y}_{i}$, the events $|\rv{y}_j| \le |\rv{y}_{i}|$ are mutually independent for all $j \ne i$; \eqref{eq:tmp1090} is because given $\rv{h}$, the variables $2|\rv{y}_j|^2$ for $j \ne i$ are independently non-central chi-squared distributed with two DoF and non-centrality parameters $2c\rho_0 |\rv{h}|^2$, denoted by $\chi^2_2(2c\rho_0 |\rv{h}|^2)$. 

Next, by averaging $\Pr\{\hat{i} \ne i |\ \rv{h},\rv{y}_{i}\}$ over $|\rv{y}_{i}|^2$ and $|\rv{h}|^2$, taking into account that $|\rv{h}|^2$ is exponentially distributed with mean $1$, and given $\rv{h}$, $2|\rv{y}_{i}|^2 \sim \chi^2_2(2\rho_0 |\rv{h}|^2)$,  
we get
\begin{equation*}
\begin{split}
&\Pr\{\hat{i} \ne i\} \notag \\
&= 1 - \E_{|\rv{h}|^2} \E_{|\rv{y}_{i}|^2 | \rv{h}} \left[1-(1-Q_1(\sqrt{2c\rho_0}|\rv{h}|,\sqrt{2}|\rv{y}_{i}|))^{T-1}\right] \\
&=1-\int_{0}^{\infty} \int_{0}^{\infty} \left[1-(1-Q_1(\sqrt{2c\rho_0}|\rv{h}|,\sqrt{2}|\rv{y}_{i}|))^{T-1}\right] \notag\\
&\quad \cdot \exp(-|\rv{y}_{i}|^2-(1+\rho_0) |\rv{h}|^2) I_0\left(2\sqrt{\rho_0} |\rv{y}_{i}| |\rv{h}|\right) \dif |\rv{y}_{i}|^2 \dif |\rv{h}|^2,
\end{split}
\end{equation*}
where $I_0(.)$ is the modified Bessel function of the first kind of order $0$. 
From this, a simple change of variables gives~\eqref{eq:SER_cell}.

\subsubsection{Coordinate error probability given correct cell detection}
We assume that the cell index $i$ has been correctly decoded, i.e., $\hat{i} = i$, and, without loss of generality, $i = T$. 
The decoding strategy for the coordinate bits is similar to a 4-QAM demapper on $\rvVec{t} = [\rv{t}_1 \ \dots \ \rv{t}_{T-1}] = \left[\frac{\rv{y}_1}{\rv{y}_T} \ \dots \ \frac{\rv{y}_{T-1}}{\rv{y}_T}\right]$. 
Given $\rv{q}_j = q_j$, we have $\rv{y}_j = \sqrt{c\rho_0}\ q_j \rv{h} + \rv{z}_j \sim \Cc\Nc(0,1+c\rho_0)$ for $j<T$, $\rv{y}_T = \sqrt{\rho_0} \rv{h} + \rv{z}_T \sim \Cc\Nc(0,1+\rho_0)$, and $\E[\rv{y}_j \rv{y}_T^*] = \sqrt{c} \rho_0 {q}_j$. {Thus, according to Lemma~\ref{lem:Cauchy-Gaussian}, conditioned on $\rv{q}_j = q_j$, $\rv{t}_j = \frac{\rv{y}_j}{\rv{y}_T}$ follows the ${\rm Cauchy}\Big(\frac{\sqrt{c}\rho_0 q_i}{(1+\rho_0)^2}, \big(\frac{1+c\rho_0}{1+\rho_0}\big)^2 - \frac{c\rho_0^2 |q_i|^2}{(1+\rho_0)^4}\Big)$ distribution with the pdf
\begin{equation} \label{eq:pdf_t1}
f_{\rv{t}_j | \rv{q}_j}(t|q_j) \defeq \frac{(1+\rho_0)^2(1+(c+1)\rho_0)}{\pi\big(1+(c+1)\rho_0 + \left|(1+\rho_0)t - \sqrt{c}\rho_0 {q}_j\right|^2 \big)^2}. 
\end{equation}
Furthermore, given that $\hat{i} = i = T$, we have $|\rv{y}_T| > |\rv{y}_j|$, $\forall j<T$. Therefore, the distribution of $\rv{t}_j$ is further truncated on the unit circle. The conditional pdf of $\rv{t}_j$ is given by
$
f_{\rv{t}_j|\rv{q}_j, \hat{i} = i = T}(t|q_j) = \frac{{f}_{t_j | \rv{q}_j}(t|q_j)}{\int_{|x|\le 1} {f}_{\rv{t}_j | \rv{q}_j}(x|q_j) \dif x}.\label{eq:pdf_t}
$
}
An error happens at $\rv{t}_j$ if $\Re(\rv{t}_{j}) \Re(\rv{q}_{j}) < 0$ or $\Im(\rv{t}_{j}) \Im(\rv{q}_{j}) < 0$.\footnote{As for the cell error, there is no coordinate error in the absence of additive noise since in this case, $\rv{t}_j = \sqrt{c}\rv{q}_j$.} 
Therefore,
\begin{equation}
\Pr\{\hat{\rv{q}}_j \ne \rv{q}_j \cond \rv{q}_j = q_j, \hat{i} = i = T\} = 1 - \frac{\int_{\Rc_j} {f}_{t_j | \rv{q}_j}(t|q_j) \dif t}{\int_{|t|\le 1} {f}_{\rv{t}_j | \rv{q}_j}(t|q_j) \dif t},
\end{equation}
where $\Rc_j \defeq \{t \in \CC : |t|\le 1, \Re(t) \Re(\rv{q}_{j}) > 0, \Im(t) \Im(\rv{q}_{j}) > 0\}$. Using the polar coordinate, we have that ${f}_{\rv{t}_j|\rv{q}_j}(t|q_j) \dif t = \tilde{f}(r,\theta,{q}_j) r \dif r \dif\theta$, where 
$\tilde{f}(r,\theta,{q}_j) \defeq {f}_{\rv{t}_j | \rv{q}_j}(re^{\im \theta}|q_j)$. 
Then
\begin{multline}
\Pr\{\hat{\rv{q}}_j \ne \rv{q}_j \cond \rv{q}_j = q_j, \hat{i} = i = T\} \\= 1 - \frac{\int_{0}^{1}\int_{\Theta_j} \tilde{f}(r,\theta,{q}_j) r \dif \theta \dif r}{\int_{0}^{1}\int_{0}^{2\pi} \tilde{f}(r,\theta,{q}_j) r \dif \theta \dif r}, \label{eq:tmp141}
\end{multline}
where $\Theta_j$ is $\left[0,\pi/2\right]$ if ${q}_j = \frac{1}{\sqrt{2}} + \im\frac{1}{\sqrt{2}}$, $\left[\pi/2,\pi\right]$ if ${q}_j = -\frac{1}{\sqrt{2}} + \im\frac{1}{\sqrt{2}}$, $\left[\pi,3\pi/2\right]$ if ${q}_j = -\frac{1}{\sqrt{2}} - \im\frac{1}{\sqrt{2}}$, and $\left[3\pi/2,2\pi\right]$ if ${q}_j = \frac{1}{\sqrt{2}} - \im\frac{1}{\sqrt{2}}$. After some manipulations, we obtain that 
\begin{equation}
\int_{0}^{1}\int_{0}^{2\pi} \! \tilde{f}(r,\theta,{q}_j) r \dif\theta \dif r = \frac{1}{2}+\frac{(1-c)\rho_0}{2\sqrt{(2+(1+c)\rho_0)^2-4c\rho_0^2}},
\end{equation}
and 
for all ${q}_j \in \big\{\pm\frac{1}{\sqrt{2}} \pm \im\frac{1}{\sqrt{2}}\big\}$,
\begin{align}
&\int_{0}^{1} \int_{\Theta_j} \tilde{f}(r,\theta,{q}_j) r \dif \theta \dif r \notag \\
&\quad= \frac{1}{8}
+ \frac{\sqrt{2c}\rho_0\arccot\frac{1+\big(c-\sqrt{\frac{c}{2}}\big)\rho_0}{\sqrt{1+(c+1)\rho_0+\frac{c}{2}\rho_0^2}}}{2\pi\sqrt{1+(1+c)\rho_0+\frac{c}{2}\rho_0^2}} \notag \\
&\qquad+ \frac{(1-c)\rho_0\arccot\frac{2+\big(1-2\sqrt{2c}+c\big)\rho_0}{\sqrt{(2+(1+c)\rho_0)^2-4c\rho_0^2}}}{2\pi\sqrt{(2+(1+c)\rho_0)^2-4c\rho_0^2}}.
\end{align}
Substituting this in~\eqref{eq:tmp141}, and {using the fact that $\Pr\{\hat{\rv{q}}_j \ne \rv{q}_j \cond \rv{q}_j, \hat{i} = i = i_1\} = \Pr\{\hat{\rv{q}}_j \ne \rv{q}_j \cond \rv{q}_j, \hat{i} = i = i_2\}$, for all $i_1,i_2 \in [T]$,} we obtain~\eqref{eq:SER_coor_cond}. The union bound for $P_e$ follows readily. 
 
\subsection{Proof of Corollary~\ref{coro:Pe}} \label{app:PeT2Bo1}
The conditional pdf $f_{\rv{t}|\rv{q}_1}(t|q_1)$ of $\rv{t} = \frac{\rv{y}_{2-i}}{\rv{y}_{i}}$ is given in \eqref{eq:pdf_t1}. The conditional cell error is simply $\Pr\{\hat{i} \ne i| \rv{q}_1 = q_1\} = 1-\Pr\{|\rv{t}|<1|\rv{q}_1 = q_1\} = 1-\int_{|t|\le 1} f_{\rv{t}|\rv{q}_1}(t|q_1) \dif t$. %
By calculating the integral using polar coordinate (which results in the same value for any $q_1 \in \big\{\pm \frac{1}{\sqrt{2}}\pm \im\frac{1}{\sqrt{2}}\big\}$) and averaging over $\rv{q}_1$, we obtain~\eqref{eq:PcellB01T2}. Furthermore, when $T=2$, the union bound~\eqref{eq:SER_coor} 
is tight.

\bibliographystyle{IEEEtran}
\bibliography{IEEEabrv,./biblio}

\end{document}